\newtheorem{theorem}{Theorem}
\newtheorem{proposition}[theorem]{Proposition}
\newtheorem{lemma}[theorem]{Lemma}
\theoremstyle{definition}
\newtheorem{define}[theorem]{Definition}
\newtheorem{problem}[theorem]{Problem}
\newtheorem{remark}[theorem]{Remark}
\newtheorem{example}[theorem]{Example}
\begin{document}
	
	\title{Delay-Aware  Wireless Network Coding in Adversarial Traffic}

	\author{\IEEEauthorblockN{Yu-Pin Hsu}
		
		\footnote{Y.-P. Hsu is with  Department	of Communication Engineering, National Taipei University, Taiwan.  Email: \texttt{yupinhsu@mail.ntpu.edu.tw}.  An earlier version of Alg.~\ref{pda1} was presented at the Proceeding of IEEE International Symposium on Network Coding (NetCod) \cite{hsu2012opportunistic}. The work was supported by Ministry of Science and Technology of Taiwan under Grant MOST 107-2221-E-305-007-MY3. }
		
	}

	\maketitle
	
	\vspace{-2cm}
	\begin{abstract}
		We analyze a wireless line network employing wireless network coding.  The two end nodes exchange their packets through relays. While a packet at a relay  might not find its coding pair upon  arrival,  a transmission cost can be reduced by waiting for coding with a packet from the other side. To strike a balance between the  reduced transmission cost and the cost incurred by the delay, a scheduling algorithm  determining either to transmit an uncoded packet or  to wait for coding is needed.  Because of highly uncertain traffic injections, scheduling with no assumption of the traffic is critical.  This paper  proposes a \textit{randomized online scheduling algorithm} for a relay in  \textit{arbitrary} traffic, which can be \textit{non-stationary} or \textit{adversarial}.  The expected total cost (including a transmission cost and a delay cost) incurred by the proposed algorithm is at most $\frac{e}{e-1} \approx 1.58$ times the minimum achievable total cost. In particular, the proposed algorithm  is \textit{universal} in the sense that the ratio is independent of the traffic. With the universality,  the  proposed algorithm can be  implemented at each relay \textit{distributedly} (in a multi-relay network) with the same ratio.  Moreover, the proposed algorithm turns out to generalize the classic ski-rental online algorithm.

		
		%

	\end{abstract}

	\section{Introduction}\label{section:introduction}
	\begin{figure}[t]
		\centering
		\includegraphics[width=.8\textwidth]{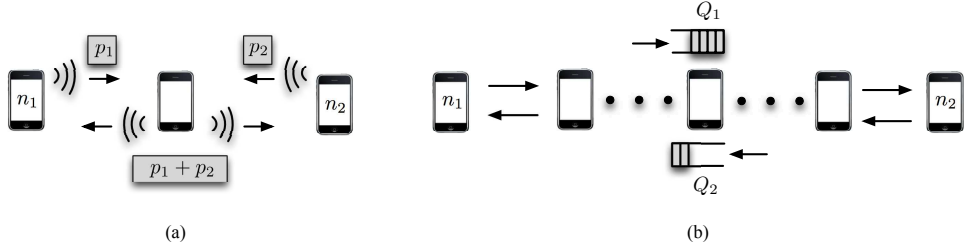}
		\caption{Relay networks with wireless network coding: (a) a single-relay network; (b) a multi-relay network.}
		\label{fig:network}
	\end{figure}
	
	There has been a  dramatic proliferation of research on \textit{wireless network coding}. The wireless network coding can substantially reduce the number of transmissions by exploiting the broadcast nature of wireless medium, resulting in power saving. Illustrated in Fig.~\ref{fig:network}-(a),  two  end nodes $n_1$ and $n_2$  exchange their respective packets $p_1$ and $p_2$ belonging to the Galois field $GF(2)$ through a relay. The conventional communication technique (without network coding) requires four transmissions (two for each packet). Leveraging the wireless network coding, only three transmissions are required; precisely,  nodes $n_1$ and $n_2$ send  packets $p_1$ and $p_2$ to  the relay, and  the relay  \textit{broadcasts} the scalar-linear combination $p_1+p_2$ (by bitwise XOR over $GF(2)$) to both end nodes. Each end node $n_i$ (for $i=1,2$) can recover its desired packet $p_{3-i}$ by subtracting (over $GF(2)$)  packet $p_i$ it already has from  packet $p_1+p_2$ it receives. In general, the wireless network coding can save up to 50\% of transmissions as long as the number of relays in a line network  in Fig.~\ref{fig:network}-(b) (as also called the \textit{reverse carpooling} \cite{effros2006tiling}) is large.

	To benefit from the wireless network coding, a  relay has to create sufficient coded packets; however,  a coded packet  at the relay can be created only when  packets from both sides are available. Precisely, a relay in Fig.~\ref{fig:network}-(b)  maintains two queues $Q_1$ and $Q_2$  storing packets from  both sides, respectively. If both queues are non-empty, then the relay can construct  coded packets by combining  packets from  both queues.  However, what should the relay do if only one queue is non-empty? Should the relay wait for  coding in the future or just transmit  uncoded packets from the non-empty queue? To fully realize the  advantage of the wireless network coding would incur  packet delays,  whereas  always transmitting  uncoded packets to minimize the delays  causes a larger number of transmissions. Therefore, a scheduling algorithm for determining when to code is crucial.

	The scheduling problem for a single-relay network as in Fig.~\ref{fig:network}-(a) under \textit{stationary stochastic} traffic has been investigated leveraging stochastic control techniques, like Lyapunov theory (e.g.,  \cite{ciftcioglu2011cost}) or Markov decision processes (e.g.,  \cite{hsu2014opportunities}). All the prior solutions fail to generalize to non-stationary or  adversarial (worst-case) traffic. In particular, they cannot be implemented  at each relay \textit{distributedly} in a multi-relay  network with provable performance guarantees. However, non-stationary or adversarial traffic has gained increasing importance in recent years. On one hand, external traffic injections at nodes $n_1$ or $n_2$ can arbitrarily be generated by their sources, following no particular probabilistic assumption.  On the other hand, the relay cannot expect the scheduling algorithms employed by nodes $n_1$ and $n_2$ to follow a stationary probabilistic distribution. In particular, \cite{borodin1996adversarial} claimed that the adversarial traffic is a better traffic model. 
	Because of those practical issues, the research on the adversarial traffic has attracted much attention in recent years (e.g., \cite{liang2018network}). Although  network coding design for adversarial channels has been an active area (e.g.,  \cite{ravagnani2018adversarial}), little attention was given to  the adversarial traffic in network-coding-enabled networks. To fill the gap, this paper aims to develop a \textit{universal} scheduling algorithm for arbitrary traffic with a provable performance guarantee.

	%
	Moreover, note that the  ski-rental problem  \cite{karlin2001dynamic} is a classic problem in an adversarial setting, where for each day a skier decides  either to buy a  ski or  to continue renting a ski  without knowing the skier's last vacation, e.g., the day when the snow melts. The ski-rental setting has been exploited in several works (e.g., \cite{lee2017online}) for  managing delays under some uncertainties. This paper shows that the proposed scheduling algorithm can  solve  generalized ski-rental scenarios.

	\subsection{Contributions}
	Our main contribution lies in designing and analyzing  scheduling for delay-aware wireless network coding in the adversarial traffic. The objective  is to minimize a total cost, including a transmission cost and a delay cost, for each relay. To reach the goal,  we show that our problem can be cast into a linear program. Leveraging \textit{primal-dual techniques} \cite{buchbinder2009design} for the linear program, we propose a \textit{randomized online scheduling algorithm} for each relay in a multi-relay network. In particular, the proposed algorithm can guarantee that the worst-case ratio between the expected total cost incurred by the proposed online  algorithm and that incurred by an optimal offline algorithm is (asymptotically)   $\frac{e}{e-1}\approx1.58$. 
	In addition to the theoretical worst-case analysis, the average-case analysis is conducted via computer simulations. Moreover, it turns out that the proposed algorithm can generalize the classic ski-rental algorithm to several scenarios.

	\subsection{Related works} 
	Scheduling design for network-coding-enabled networks has been  extensively explored from various perspectives. Most  scheduling works  with network coding aimed to maximize throughput (i.e., stability regions), e.g., \cite{parandehgheibi2010optimal,traskov2012scheduling,wiese2016scheduling,seferoglu2014network,moghadam2018lyapunov,ho2009dynamic,eryilmaz2011control,kuo2016robust,fragiadakis2014dynamic,jones2013distributed}. While \cite{parandehgheibi2010optimal,traskov2012scheduling,wiese2016scheduling,seferoglu2014network} considered static network environments and solved deterministic optimization problems, \cite{moghadam2018lyapunov,ho2009dynamic,eryilmaz2011control,kuo2016robust,fragiadakis2014dynamic,jones2013distributed} considered dynamic network environments and solved dynamic optimal control problems. Moreover, several scheduling works with network coding investigated delays, e.g., \cite{hou2014broadcasting,saif2018rate,swapna2013throughput}. In addition to the throughput or delays, some prior works analyzed other utilities or constraints when network coding is enabled, e.g., \cite{li2017joint} maximized a video reception quality and \cite{al2019cross} considered a Quality-of-Service (QoS) constraint.

	The most relevant works on  the trade-off between delays and  power consumption (with wireless network coding) in relay networks are \cite{ciftcioglu2011cost,hsu2014opportunities}. While \cite{ciftcioglu2011cost} proposed a scheduling algorithm using  Lyapunov techniques, \cite{hsu2014opportunities} showed the optimality of a threshold-type scheduling algorithm using Markov decision processes.  All those prior solutions were based on stochastic models with some stationary assumptions but cannot apply to non-stationary settings, especially in multi-relay networks. In contrast,  we  explore the trade-off in non-stationary settings.

	\section{System overview} \label{section:system}
	
	\subsection{Network model} \label{subsection:model}
	
	Consider a wireless line network in Fig.~\ref{fig:network}-(b). The two end nodes $n_1$ and $n_2$  send $N_1$ and $N_2$ packets (in $GF(2)$), respectively, to each other through  shared relay nodes. Divide  time into slots and index them by $t=1, 2, \cdots$. Suppose that a perfect schedule of wireless links  is given, so that  during each slot each node can transmit some packets under a transmission constraint without any interference. The interference-free link schedule can be achieved by existing medium access control (MAC) protocols, e.g.,  scheduled TDMA used in  \cite{hsu2014opportunities},  CSMA proposed by \cite{ni2011q}, or coded ALOHA proposed by \cite{paolini2015coded}. In fact, our design can work with any MAC protocol (see Remark~\ref{remark:any-mac} later).
	
	Consider a relay in Fig.~\ref{fig:network}-(b). The relay constructs queues for storing those packets that arrive at the relay but cannot be transmitted immediately upon arrival. As shown in Fig.~\ref{fig:network}-(b), the relay	maintains two queues $Q_1$ and $Q_2$ for packets  generated by nodes $n_1$ and $n_2$, respectively.  	At the beginning of each slot $t$, there are $A_1(t)$ new packets arriving at queue $Q_1$ and $A_2(t)$ new packets arriving at queue $Q_2$. By $\mathbf{A}=\{(A_1(1), A_2(1)), (A_1(2), A_2(2)), \cdots\}$ we define an \textit{arrival pattern} for the relay. The arrival pattern depends on the number of packets transmitted in the previous slot by its neighboring nodes. The arrival pattern is \textit{arbitrary}, which can be non-stationary or even adversarial. 	
	
	Let $Q_1(t)$ and $Q_2(t)$ be the number of packets at  queues $Q_1$ and $Q_2$, respectively, immediately after the packet arrivals in slot~$t$. If $Q_1(t)=0$ and $Q_{2}(t)=0$, then the relay  idles in slot~$t$. If $Q_{1}(t) \neq 0$ and $Q_{2}(t) \neq 0$, then the relay   transmits some coded  packets (under the transmission constraint) by combining (over $GF(2)$) packets from  both queues. Transmitting the coded packets can save the number of transmissions\footnote{To save the number of transmissions, both neighboring nodes of the relay must be able to decode the coded packets transmitted by the relay.  To that end, we leverage the reverse carpooling technique \cite{effros2006tiling}.   Each node (including both end nodes and all relay nodes) keeps  packets it transmitted previously for a while, so that when it receives a coded packet,  it can decoded the coded packet. See Footnote~\ref{footnote} later for the amount of time to keep a packet it previously transmitted. Moreover, each relay employs the decode-and-forward mechanism, where it decodes before re-encoding and transmitting  packets.} without incurring any delay. After transmitting the coded packets, if only one queue is non-empty and the relay can transmit more packets in that slot, then the relay has two  options for those packets at the non-empty queue: to \textit{transmit} some uncoded packets from  the non-empty queue or to \textit{idle} with the hope of receiving packets at the empty queue in the next slot (for coding). While always to transmit uncoded packets minimizes the  delays, always to idle minimizes the number of transmissions by  coding. To strike a balance between the delays and the number of transmissions,  the best decision is unclear when exactly one of the queues is non-empty.

		To investigate the best decision for each slot, we let $D(t)$ be the relay's \textit{decision} on the number of packets (including both uncoded and coded packets) transmitted in slot $t$.  We assume that  the broadcast channel from the relay to its neighboring nodes is noiseless. This simple model facilitates to explore the delays for coding in the arbitrary arrival pattern.  In fact, our design can  extend to adversarial ON-OFF channels (see Remark~\ref{remark:on-off-channel} later). Under the noiseless assumption, the queueing dynamics is 
	\begin{align*}
	Q_i(t+1)=\max\{Q_i(t)-D(t),0\}+A_i(t+1), 
	\end{align*}
	for all $i$ and $t$.  For example, if $Q_1(1)=5$, $Q_2(1)=3$, and $D(1)=4$, then the relay transmits three coded packets combining three packets from queues~$Q_1$ and $Q_2$ each, and transmits one uncoded packet from queue~$Q_1$; moreover, if $A_1(2)=2$, then $Q_1(2)=3$.

	A \textit{scheduling algorithm} $\pi=\{D(1), D(2), \cdots\}$ for the relay specifies decision $D(t)$ for each slot~$t$.   A scheduling algorithm is called an \textit{offline} scheduling algorithm if arrival pattern $\mathbf{A}$ is given as a prior. In contrast, a scheduling algorithm is called an \textit{online} scheduling algorithm if  arrival pattern~$\mathbf{A}$ (along with the numbers $N_1$ and $N_2$  of packets) is unavailable; instead, it knows the present arrivals $A_1(t)$ and $A_2(t)$ only, for each slot~$t$.

	\subsection{Problem formulation}
	To capture the trade-off between the delays and the number of transmissions, we  define a \textit{holding cost} and a \textit{transmission cost} as follows. Suppose that holding a packet at the end of a slot incurs a  cost of one unit. Moreover, suppose that each packet transmission takes a constant  cost of $C$ units\footnote{\label{footnote}The transmission cost $C$ is the weight (i.e., importance) between the transmission power of a packet and the delay of a packet for one slot, depending on  applications. If the transmission power is critical, then cost $C$ is larger; on the contrary, cost $C$ is smaller. Moreover, we are going to minimize the total transmission cost plus the total holding cost as in Eq.~(\ref{eq:total-cost}). In this context, the value of $C$ is the maximum number of slots for that a packet can delay. If a packet delays for more than $C$ slots, then it incurs more holding cost than the saving of transmission cost $C$ by coding. Thus, for a delay-sensitive application, we can set the value of $C$ to be its deadline constraint.}, where we assume that transmitting a coded packet incurs the same transmission cost as transmitting an uncoded packet. See Remark~\ref{remark:non-consistent-cost} for non-consistent costs for transmitting coded and uncoded packets. Moreover, we consider the case when the value of $C$ is greater than one\footnote{Following Footnote~\ref{footnote}, scheduling for the case when $C\leq 1$ is trivial: if only one queue is non-empty, then the relay always transmits uncoded packets but never waits for coding, because holding a packet for a slot incurs more cost than the saving of  transmission cost $C$ by coding. Thus,  this paper focuses on the case when $C> 1$, where packets can delay for some slots. }.

	Given arrival pattern $\mathbf{A}$,  we define a \textit{total cost} $J(\mathbf{A},\pi)$ under scheduling algorithm $\pi$  by
	\begin{align}
	J(\mathbf{A}, \pi)=&\sum^{\infty}_{t=1} C\cdot D(t) +\max\{Q_1(t)-D(t),0\}+\max\{Q_2(t)-D(t),0\},\label{eq:total-cost}
	\end{align}
	where the first term $C \cdot D(t)$ reflects the cost of transmitting $D(t)$ packets in slot~$t$ and the other terms $\max\{Q_1(t)-D(t),0\}+\max\{Q_2(t)-D(t),0\}$ reflects the cost of delaying all remaining packets for one slot. Since we consider the finite numbers $N_1$ and $N_2$, the minimum achievable total cost is finite. 
	
	We aim to develop an online scheduling algorithm such that the total cost is minimized for all possible arrival patterns $\mathbf{A}$. However, without knowing  arrival pattern~$\mathbf{A}$ (along with the total numbers $N_1$ and $N_2$ of packets) in advance,  an online scheduling algorithm is unlikely to achieve the minimum total cost (obtained by an optimal offline scheduling algorithm).   We characterize our online scheduling algorithm in terms of the \textit{competitiveness}  against an optimal offline scheduling algorithm, defined as follows. 
	\begin{define}	
		For arrival pattern $\mathbf{A}$, let $OPT(\mathbf{A}) = \min_{\pi} J(\mathbf{A},\pi)$ be the minimum total cost for all possible (offline) scheduling algorithms  $\pi$. Then, an online scheduling algorithm~$\pi$ is called \textbf{$\boldsymbol{\gamma}$-competitive} if 
		\begin{eqnarray*}
			J(\mathbf{A}, \pi) \leq \gamma \cdot OPT(\mathbf{A}),
		\end{eqnarray*}
		for all possible arrival patterns $\mathbf{A}$, where  $\gamma$ is called the \textbf{competitive ratio} of the online scheduling algorithm~$\pi$. 
	\end{define}
	
	\begin{remark}  \label{remark:multi-hop}
		A~$\gamma$-competitive online scheduling algorithm  guarantees that the resulting total cost is at most~$\gamma$ times the minimum total cost, \textit{regardless of  arrival patterns~$\mathbf{A}$}.  Thus, while a $\gamma$-competitive online scheduling algorithm can be  implemented at each relay  in the multi-relay network in a \textit{distributed} way, it guarantees the competitive ratio~$\gamma$ for each relay. 
	\end{remark}
	We aim to design and analyze an online scheduling algorithm for minimizing the competitive ratio.

	\section{One-sided adversarial traffic} \label{section:one-side}
	We start with a fixed number of packets waiting for coding; in particular, this section focuses on the following setting:
	\begin{enumerate}
		\item Queue $Q_{1}$ has all  $N_1$ (with $N_1 \leq C$) packets initially, i.e., $A_{1}(1)=N_1$ and  $A_{1}(t)=0$ for all $t \geq 2$.
		\item Queue $Q_2$ is injected by arbitrary traffic with a total of $N_2$ packets.
		\item The relay  can transmit any number of packets in each slot. 
	\end{enumerate}
	The setting is referred to as the \textit{one-sided adversarial traffic}.  With the  first and second assumptions,  we can focus on a fixed number $N_1$ of packets at queue $Q_1$ waiting for coding, while capturing the key feature of the adversarial arrival pattern at queue $Q_2$. In fact, the first assumption is practical as well for bursty  traffic at queue $Q_1$. Section~\ref{section:two-side} will also generalize to two-sided adversarial traffic.  Note that, under the first assumption,  the relay never delays the packets at queue $Q_2$ for minimizing the total cost. 
	The third assumption is made for delivering a clear insight into our innovation.  Lemma~\ref{lemma:number-of-tx} will analyze the maximum number of transmissions required by the proposed online  scheduling algorithm; moreover, Section~\ref{subsection:constraint} will extend to a transmission constraint.

		\subsection{Overview of our methodology} \label{subsection:overview}
		This section provides an overview of our methodology (leveraging \textit{primal-dual techniques}  \cite{buchbinder2009design} for linear programs):
		\begin{enumerate}
			\item We propose linear program~(\ref{primal1}) for \textit{optimally} solving our scheduling problem in the \textit{offline} fashion (with  arrival pattern~$\mathbf{A}$ as a prior).
			\item We propose Alg.~\ref{pda1} for \textit{sub-optimally} solving linear program~(\ref{primal1}) in the \textit{online} fashion  (without  arrival pattern~$\mathbf{A}$ as a prior).
			\item 	We analyze the objective value (of linear program~(\ref{primal1})) computed by Alg.~\ref{pda1} through a solution (produced also by Alg.~\ref{pda1}) to the dual of the linear program.  We show that the objective value computed by Alg.~\ref{pda1} is no more than $\frac{e}{e-1}$ times that dual objective value. Then, the duality theory yields that  the objective value computed by Alg.~\ref{pda1} is no more than $\frac{e}{e-1}$ times the minimum objective value (of linear program~(\ref{primal1})).
			\item By transforming the fractional solution produced by Alg.~\ref{pda1} to  randomized decisions, we propose a randomized online scheduling algorithm in Alg.~\ref{online-alg1}.
			\item We show that the expected cost incurred by Alg.~\ref{online-alg1} is no more than the objective value computed by Alg.~\ref{pda1}. Then, by the third bullet, the expected total cost incurred by  Alg.~\ref{online-alg1} is also no more than $\frac{e}{e-1}$ times the minimum objective value (i.e., minimum achievable total cost).
		\end{enumerate}
		
		Section~\ref{subsection:primal-dual} formulates the linear program (as a primal program) and its dual program. While Section~\ref{subsection:primal-dual-alg} proposes Alg.~\ref{pda1} for solving the primal program and the dual program in the online fashion, Section~\ref{subsection:analysis} analyzes the solution produced by Alg.~\ref{pda1}. Leveraging the solution produced by Alg.~\ref{pda1}, Section~\ref{subsection:online-alg1} proposes Alg.~\ref{online-alg1} for solving our scheduling problem and analyzes its expected total cost.

	\subsection{Primal-dual formulation}\label{subsection:primal-dual}
	Given  arrival pattern $\mathbf{A}$, this section casts the \textit{offline} scheduling problem (under the one-sided adversarial traffic) into a linear program.  To that end, we introduce  some variables: 
	\begin{itemize}
		\item $x$: the number of packets at queue $Q_1$  transmitted \textit{without coding}.
		\item $z(t)$: the number of packets at queue $Q_1$   at the end slot  of~$t$.
	\end{itemize}
	If the relay decides to transmit $x$ uncoded packets at queue~$Q_1$, then it must\footnote{If one of those $x$ uncoded packets is transmitted in slot $t>1$, then the total cost in Eq.~(\ref{eq:J-new}) increases by $t$ (for holding the packet for $t$ slots). For minimizing the total cost, the $x$ uncoded packets are optimally transmitted in slot~1. } transmit the $x$ uncoded packets  in slot~$1$. Thus, the total cost  in Eq.~(\ref{eq:total-cost}) under the one-sided adversarial traffic  can be expressed by
	\begin{align}	
	J(\mathbf{A},\pi)= C \cdot N_2+ C \cdot x +\sum_{t=1}^{\infty}  z(t), \label{eq:J-new}
	\end{align}
	where the first term $C \cdot N_2$  is the cost of transmitting  all packets at queue~$Q_2$, the second term  $C \cdot x$ is the cost of transmitting the $x$ uncoded packets at queue~$Q_1$ in slot~1 (i.e., transmitting coded packets at queue~$Q_1$ is free), and the last term $\sum_{t=1}^{\infty}  z(t)$ is the cost incurred by holding the $N_1-x$ packets at queue~$Q_1$.

	By removing the constant  $C \cdot N_2$ from Eq.~(\ref{eq:J-new}), we have the following scheduling problem.
	\begin{problem} \label{problem1}
		Under the one-sided adversarial traffic, develop a scheduling algorithm for the packets at queue~$Q_1$ such that the cost $C \cdot x +\sum_{t=1}^{\infty}  z(t)$ is minimized.
	\end{problem}

	\begin{remark}\label{remark:special}
		This remark shows that the classic ski-rental problem \cite{karlin2001dynamic} is a special case of our Problem~\ref{problem1}.
			In the  ski-rental problem, a skier arrives at a resort on day 1 with no ski. For each day, the skier decides either to  buy a ski or to rent a ski. If the skier buys a ski in a day, then the skier does not have to rent a ski after that day. While renting a ski for a day takes one dollar, buying a ski takes $C$ dollars. The skier will stay at the resort for $T$ days until the last vacation day. The goal is to minimize the  buying cost  plus the total renting cost. Given the instance of the ski-rental problem, we construct an instance of our Problem~\ref{problem1}. We construct one packet for each queue, i.e., $N_1=N_2=1$. We construct a packet staying at queue~$Q_1$ in slot~1 (corresponding to the skier). We construct a packet arriving at queue~$Q_2$ in slot~$T$ (corresponding to the last vacation day). Next, we link a skier's decision with a relay's decision.	While the skier rents a ski on day $t$ if and only if the relay idles in slot~$t$, the skier buys a ski on day $t$ if and only if  the relay transmits the packet at queue~$Q_1$ without coding in slot~$t$. While the skier does not have to make decisions after day~$T$, the relay also does not have to make decisions after slot~$T$ (because the relay can transmit a coded packet in slot~$T$ if the packet at $Q_1$ still stays at that queue in slot~$T$).  With the link between the ski-rental problem and our Problem~\ref{problem1}, variable $x$ in Problem~\ref{problem1} can indicate if the skier buys a ski,  and variable $z(t)$ in Problem~\ref{problem1} can indicate if the skier rents a ski on  day~$t$.  Suppose that  holding the packet at queue~$Q_1$ for a slot takes one dollar, and that transmitting an uncoded packet from queue~$Q_1$ takes $C$ dollars. Then, the value of $C \cdot x +\sum_{t=1}^{\infty}  z(t)$ in Problem~\ref{problem1} can represent the buying cost plus the total renting cost.  Thus, the ski-rental problem equivalently becomes our Problem~\ref{problem1}. In other words, the ski-rental problem is a special case ($N_1=N_2=1$) of our Problem~\ref{problem1}.
	\end{remark}

	\begin{remark}\label{remark:link1}
	Following Remark~\ref{remark:special}, this remark shows that our Problem~\ref{problem1} is a generalization of the ski-rental problem.  We can think of each packet at queue~$Q_1$ as a skier and think of a slot when a packet arrives at queue~$Q_2$ as the day when a skier has to leave. Moreover, buying a ski takes $C$ dollars while renting a ski for a day takes one dollar.  With the transformation, Problem~\ref{problem1} considers a \textit{group} of  skiers (i.e., the $N_1$ packets at queue $Q_1$) with potentially different last vacation days (i.e., the arriving slots at queue $Q_2$).  Those skiers cooperatively make a buying or renting decision on each day for minimizing the total  buying cost  plus the total renting cost. 
\end{remark}

	\begin{remark} \label{remark:non-consistent-cost}
		If transmitting a coded packet incurs a cost of $C_1$ units and transmitting an uncoded packet incurs a different cost of $C_2$ units with $C_1 > C_2$, then the total cost in Eq.~(\ref{eq:total-cost}) becomes
		\begin{align*}
		J(\mathbf{A},\pi)= C_2 \cdot N_2+ C_2 \cdot x +(C_1-C_2)(N_1-x)+\sum_{t=1}^{\infty}  z(t), 
		\end{align*}
		where the term $(C_1-C_2) (N_1-x)$ is the extra cost for transmitting the coded packets.  Then, we can   replace cost $C$ in Problem~\ref{problem1} with $2 C_2 - C_1$. Note that $2 C_2 - C_1 \geq 0$. If $C_1$ were higher than $2 C_2$, then transmitting a coded packet by combining two packets would not save any cost from transmitting two uncoded packets. The rest of the paper  focuses on the constant cost $C$ without loss of generality.  
	\end{remark}

	Next, we propose the following integer program for optimally solving Problem~\ref{problem1} in the offline fashion:
	
	\textbf{Integer program:}	
	\begin{subequations} \label{integer}
		\begin{eqnarray}	
		&\min & C \cdot x +\sum_{t=1}^{\infty}  z(t) \label{integer:opjective}\\
		&\text{s.t.}& x+z(t) \geq N_1-n_2(t) \,\,\,\text{for all $t$}; \label{integer:constraint-1}\\
		&& x, z(t) \in \mathbb{N}\,\,\, \text{for all $t$},  \label{integer:constraint-2}
		\end{eqnarray}
	\end{subequations}
	where $n_2(t)=\sum^{t}_{\tau=1} A_2(\tau)$ is  the total number of packets arriving at queue~$Q_2$ until slot $t$.  The constraint in Eq.~(\ref{integer:constraint-1}) is because for each slot $t$ the number of packets at queue $Q_1$  is at least $N_1-x-n_2(t)$, where $x$ packets at queue $Q_1$ are transmitted  without coding in slot $t=1$ and at most $n_2(t)$ packets at queue $Q_1$ are transmitted with coding by slot $t$.

	Next, by  relaxing the integrality constraint in Eq.~(\ref{integer:constraint-2}) to real numbers, we  obtain the following linear program.
	
	\textbf{Linear program (primal program):}	
	\begin{subequations} \label{primal1}
		\begin{eqnarray}	
		&\min & C \cdot x +\sum_{t=1}^{\infty}  z(t) \label{primal1:objective}\\
		&\text{s.t.}& x+z(t) \geq N_1-n_2(t) \,\,\,\text{for all $t$}; \label{primal1:constraint-1}\\
		&& x, z(t)\geq 0\,\,\, \text{for all $t$}.  \label{primal1:constraint-2}
		\end{eqnarray}
	\end{subequations}
	After the relaxation,  a feasible \textit{fractional} solution for~$x$ in linear program~(\ref{primal1}) can no longer represent a decision for the number of packets at queue~$Q_1$ transmitted without coding (but an \textit{integral} solution for $x$ in linear program~(\ref{primal1}) can). In fact, the next lemma shows that the relaxation has no integrality gap.
	\begin{lemma} \label{lemma:integrality}
		The relaxation from integer program~(\ref{integer}) to linear program~(\ref{primal1}) has no integrality gap. 
	\end{lemma}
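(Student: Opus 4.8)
The plan is to prove the stronger statement that linear program~(\ref{primal1}) itself always admits an \emph{integral} optimal solution. Since linear program~(\ref{primal1}) is a relaxation of integer program~(\ref{integer}), its optimal value is at most that of~(\ref{integer}); conversely, an integral optimizer of~(\ref{primal1}) is feasible for~(\ref{integer}), so the two optimal values coincide and there is no integrality gap.

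The first step is to eliminate the $z(t)$ variables. The objective is nondecreasing in each $z(t)$, and the only constraints involving $z(t)$ are $z(t)\geq N_1-n_2(t)-x$ and $z(t)\geq 0$, so every optimal solution satisfies $z(t)=\max\{N_1-n_2(t)-x,\,0\}$. Substituting this back, the problem reduces to minimizing over $x\geq 0$ the single-variable function
\[
g(x)=C\cdot x+\sum_{t=1}^{\infty}\max\{N_1-n_2(t)-x,\,0\}.
\]
The key observation is that $g$ is convex and piecewise linear with \emph{integer} breakpoints: it is a nonnegative sum of the convex functions $x\mapsto Cx$ and $x\mapsto\max\{N_1-n_2(t)-x,0\}$, and since $n_2(t)$ is nondecreasing in $t$ with $n_2(t)\to N_2$, it takes values in the finite integer set $\{0,1,\dots,N_2\}$; hence the kinks of $g$ occur only at the finitely many integer points $x=N_1-n_2(t)$ and at $x=0$.

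The second step is to locate the minimizer. On the region where $g$ is finite, namely $x\geq\max\{0,N_1-N_2\}$ (an integer), only finitely many summands are nonzero, $g$ has finitely many linear pieces, and $g(x)=Cx\to\infty$ as $x\to\infty$ because every $\max$ term vanishes once $x\geq N_1$. Therefore a minimizer exists, and by convexity and piecewise linearity a minimizer can be chosen at a breakpoint or at the left endpoint of this region — in all cases a nonnegative integer $x^\star$. Setting $z^\star(t)=\max\{N_1-n_2(t)-x^\star,0\}$, which is integral because $N_1,n_2(t),x^\star\in\mathbb{N}$, produces an integral optimal solution of~(\ref{primal1}) that is feasible for~(\ref{integer}). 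Hence the optimal values of~(\ref{integer}) and~(\ref{primal1}) are equal.

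I do not anticipate a genuine obstacle; the only care needed is the bookkeeping around the infinite sum (it is $+\infty$ below $\max\{0,N_1-N_2\}$ and finite with finitely many linear pieces above it) and the fact that all breakpoints are integers, which follows directly from $N_1,n_2(t)\in\mathbb{N}$. As a sanity check / alternative viewpoint, one may also note that the constraint matrix of~(\ref{primal1}) consists of a single all-ones column (for $x$) alongside an identity block (for the $z(t)$'s), which is an interval (consecutive-ones) matrix and hence totally unimodular; combined with the integral right-hand sides $N_1-n_2(t)$, this again forces integral optimal vertices.
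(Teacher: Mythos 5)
Your proof is correct and is in essence the paper's own argument: both eliminate $z(t)$ via $z(t)=\max\{N_1-n_2(t)-x,0\}$ and reduce to the fact that the resulting objective in the single variable $x$ is convex and piecewise linear with integer breakpoints. The paper's case split on $C\geq T$ versus $C\leq T$ (with $T$ the number of slots where $N_1-n_2(t)-x^*>0$) is exactly determining the sign of the slope of your $g$ at a fractional optimum $x^*$, and its rounding of $x^*$ down or up is your move to the adjacent integer breakpoint; your total-unimodularity aside is a valid alternative route that the paper does not take.
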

	\begin{proof}
		Suppose that an optimal solution to linear program~(\ref{primal1}) is non-integral. Then, we establish a contradiction in Appendix~\ref{appendix:lemma:integrality}. 
	\end{proof}
	From Lemma~\ref{lemma:integrality}, Problem~\ref{problem1}  can be optimally solved in polynomial time if arrival pattern $\mathbf{A}$ is given in advance: Solve for variable $x$ in linear program~(\ref{primal1})； then, transmit $x$ uncoded packets in slot~1. After transmitting the uncoded packets in slot~$1$, all other packets at queue~$Q_1$ always wait for packets at queue $Q_2$  for coding. 
		
		Next, while Section~\ref{subsection:primal-dual} proposes an online algorithm for sub-optimally solving for variable $x$ without knowing arrival pattern $\mathbf{A}$ in advance, Section~\ref{subsection:analysis} analyzes the objective value in Eq.~(\ref{primal1:objective}) computed by the proposed online algorithm by its dual program. Thus, we refer to  linear program~(\ref{primal1}) as a primal program and express its  dual program  as follows.

	\textbf{Dual program:}	
	\begin{subequations} \label{dual1}
		\begin{eqnarray}
		&\max& \sum^{\infty}_{t=1} (N_1-n_2(t))  w(t)  \label{dual1:objective}\\
		&\text{s.t.}& \sum_{t=1}^{\infty} w(t) \leq C;  \label{dual1:constraint1}\\
		&& 0 \leq w(t) \leq 1 \,\,\text{for all $t$}. \label{dual1:constraint2}
		\end{eqnarray}
	\end{subequations}

	\subsection{Primal-dual  algorithm}\label{subsection:primal-dual-alg}
	This section proposes a \textit{primal-dual  algorithm} in Alg.~\ref{pda1} for obtaining a  solution to  primal program~(\ref{primal1}) and dual program~(\ref{dual1}). The primal-dual  algorithm does not have  arrival pattern $\mathbf{A}$ as a prior; instead,  it can obtain the present arrivals $A_1(t)$ and $A_2(t)$ only, for each slot~$t$.  
%

	\begin{algorithm}[t]
			\small
		\SetAlgoLined 
		\SetKwFunction{Union}{Union}\SetKwFunction{FindCompress}{FindCompress} \SetKwInOut{Input}{input}\SetKwInOut{Output}{output}
		
		
		\tcc{Initialize all variables at the beginning of slot $1$ as follows:}
		
		$x$, $z(t)$, $w(t)$ $\leftarrow 0$ for all $t$\; \label{pda1:initial}
		$x_1 , \cdots, x_{N_1}, z_1(t), \cdots, z_{N_1}(t) \leftarrow 0$ for all $t$\tcp*[r]{Auxiliary variables.} \label{pda1:more-variable}
		$\theta \leftarrow (1+\frac{1}{C})^{\lfloor C \rfloor}-1$\tcp*[r]{$\theta$ is a constant with the function of cost~$C$}. 	\label{pda1:constant}

		\tcc{For each new slot $t=1, 2, \cdots$, the variables are updated as follows:}
		
		\For{$i=n_2(t)+1$ \emph{\KwTo} $N_1$\label{pda1:for}}{   
			\If{$x_i < 1$ \label{pda1:condition}}{
				$z_i(t) \leftarrow 1-x_i$\; \label{pda1:zi}
				$x_i \leftarrow x_i(1+\frac{1}{C})+\frac{1}{\theta \cdot C}$\; \label{pda1:xi}
				$w(t) \leftarrow 1$\;\label{pda1:y}
				
			}
		}
		$z(t) \leftarrow \sum_{i=1}^{N_1} z_i(t)$\; \label{pda1:z}	
		$x \leftarrow \sum^{N_1}_{i=1} x_i$\; \label{pda1:x}
		\caption{Primal-dual   algorithm for solving primal program~(\ref{primal1}) and dual program~(\ref{dual1}).}
		\label{pda1}
	\end{algorithm}
	
	Alg.~\ref{pda1} initializes all variables (in Lines \ref{pda1:initial} and \ref{pda1:more-variable}) at the beginning of slot 1. Obtaining the present arrivals $A_1(t)$ and $A_2(t)$ at the beginning of each new slot~$t$, Alg.~\ref{pda1} updates all variables for slot $t$.  For updating the value of $x$, Alg.~\ref{pda1} introduces a set of auxiliary variables $x_1, \cdots x_{N_1}$ (initialized in Line~\ref{pda1:more-variable}). The intuition\footnote{The intuition here is just our idea of solving for  variable~$x$ in the online fashion, but is not  scheduling decisions for packets. Section~\ref{subsection:online-alg1} will cast a value of variable $x$ to a randomized decision.} behind updating variable $x_i$ and $x$ in Lines~\ref{pda1:condition},~\ref{pda1:xi}, and~\ref{pda1:x} is following:	We can imagine the value of $x_i$ to be a probability of transmitting the $i$-th (counted from the head of queue $Q_1$) packet at queue $Q_1$ without coding. Precisely, for each  slot~$t$, Line~\ref{pda1:xi}  increases the value of $x_i$  for those packets potentially staying at queue $Q_1$:
	\begin{itemize}
		\item A total of $n_2(t)$ packets arrive at queue $Q_2$ by slot~$t$, yielding at most $n_2(t)$ coded packets until $t$. As such, Line~\ref{pda1:for} considers  $x_i$, for $i=n_2(t)+1$ until $N_1$, because only the $(n_2(t)+1)$-th packet until the $N_1$-th packet might wait at queue $Q_1$ in slot $t$, but other packets have been transmitted with coding by slot $t$.
		\item Moreover, if the value of $x_i$ is greater than or equal to one (i.e., the condition in Line~\ref{pda1:condition} fails), then the \mbox{$i$-th} packet has been transmitted without coding by slot $t$. Thus, Line~\ref{pda1:xi} updates only those $x_i$'s satisfying the condition in Line~\ref{pda1:condition}.
	\end{itemize}
	The constant $\theta$ used in Line~\ref{pda1:xi} is specified as the function of transmission cost $C$  in Line~\ref{pda1:constant} for satisfying the dual constraint in Eq.~(\ref{dual1:constraint1}).  Then, Line~\ref{pda1:x} sets the value of $x$ to be that of $\sum^{N_1}_{i=1} x_i$, counting all $N_1$ packets at queue~$Q_1$.

	Moreover, Alg.~\ref{pda1} introduces another set of auxiliary variables  $z_1(t), \cdots, z_{N_1}(t)$ for all $t$, and updates the value of $z_i(t)$ to be that of $1-x_i$ (in Line~\ref{pda1:zi})  in slot $t$ for satisfying the constraint in Eq.~(\ref{primal1:constraint-1}). Again, the value of $z(t)$ is set to be that of $\sum_{i=1}^{N_1} z_i(t)$ in Line~\ref{pda1:z},  counting all  $N_1$ packets at queue $Q_1$. In addition, the value of $w(t)$ is updated to be one in Line~\ref{pda1:y} for maximizing the dual objective value in Eq.~(\ref{dual1:objective}) subject to the constraints in Eq.~(\ref{dual1:constraint2}).

	We want to emphasize that the solution produced by Alg.~\ref{pda1} can be non-integral. The solution is just a feasible solution to  primal program~(\ref{primal1}) but can no longer represent the number of packets at queue~$Q_1$ transmitted without coding. However, by exploiting the  solution produced by Alg.~\ref{pda1},  Section~\ref{subsection:online-alg1} will  propose a randomized online  scheduling algorithm for solving Problem~\ref{problem1}. The underlying idea is that the intermediate fractional solution for $x$ in primal program~(\ref{primal1})  produced by Alg.~\ref{pda1} in each slot can be transformed to a probability of transmitting an uncoded packet in that slot.

	\subsection{Analysis of Alg.~\ref{pda1}} \label{subsection:analysis}
	This section analyzes the proposed Alg.~\ref{pda1}. Since the values of all variables can be updated by Alg.~\ref{pda1} in each slot,  the following proofs  use $\widehat{x}(t)$, $\widehat{z}(t)$, $\widehat{w}(t)$, $\widehat{x}_i(t)$, $\cdots$, $\widehat{x}_{N_1}(t)$, $\widehat{z}_1(t)$, $\cdots$, $\widehat{z}_{N_1}(t)$ to represent the corresponding values at the \textit{beginning} (before update) of slot~$t$; use $\widetilde{x}(t)$, $\widetilde{z}(t)$, $\widetilde{w}(t)$, $\widetilde{x}_i(t)$, $\cdots$, $\widetilde{x}_{N_1}(t)$, $\widetilde{z}_1(t)$, $\cdots$, $\widetilde{z}_{N_1}(t)$ to represent the corresponding values at the \textit{end} (after update) of slot $t$.  Note that  $\widetilde{x}(\infty)$, $\widetilde{z}(t)$, $\widetilde{w}(t)$, $\widetilde{x}_i(\infty)$, $\cdots$, $\widetilde{x}_{N_1}(\infty)$, $\widetilde{z}_1(t)$, $\cdots$, $\widetilde{z}_{N_1}(t)$ is the solution produced by Alg.~\ref{pda1}. In fact, Alg.~\ref{pda1} will not update any variable after slot $\lfloor C \rfloor$ (see Remark~\ref{remark:c} later), where we recall that the value of $C$ is the transmission cost.

	The next lemma establishes  the primal feasibility of Alg.~\ref{pda1}.
	\begin{lemma} \label{lemma:feasible-primal1}
		Alg.~\ref{pda1} produces a feasible solution to  primal program~(\ref{primal1}).
	\end{lemma}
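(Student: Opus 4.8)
The plan is to verify directly that the pair $(x,\{z(t)\})=(\widetilde x(\infty),\{\widetilde z(t)\})$ returned by Alg.~\ref{pda1} satisfies constraints~(\ref{primal1:constraint-1}) and~(\ref{primal1:constraint-2}); the variables $w(t)$ are irrelevant here since they belong to the dual.

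First I would dispose of non-negativity~(\ref{primal1:constraint-2}). Each auxiliary variable $x_i$ starts at $0$ (Line~\ref{pda1:more-variable}) and is only ever replaced by $x_i(1+\tfrac1C)+\tfrac1{\theta C}$ (Line~\ref{pda1:xi}), which is a non-negative and monotonically non-decreasing map of $x_i$ on $[0,\infty)$; hence $x_i\geq 0$ at all times and $\widetilde x(\infty)=\sum_i\widetilde x_i(\infty)\geq 0$. Similarly, each $z_i(t)$ is either its initial value $0$ or is set (Line~\ref{pda1:zi}) to $1-\widehat x_i(t)$ under the guard $\widehat x_i(t)<1$ of Line~\ref{pda1:condition}, hence lies in $(0,1]$; summing, $\widetilde z(t)\geq 0$.

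Next I would handle the covering constraint~(\ref{primal1:constraint-1}) for a fixed slot $t$. If $n_2(t)\geq N_1$ the right-hand side $N_1-n_2(t)$ is non-positive and the loop of Line~\ref{pda1:for} is empty, so the constraint holds by non-negativity. Otherwise the key step is to show that each index $i\in\{n_2(t)+1,\dots,N_1\}$ contributes at least $1$ to $\widetilde x(t)+\widetilde z(t)$ through its own variables $\widetilde x_i(t)+z_i(t)$. I would split on the guard in Line~\ref{pda1:condition}: when $\widehat x_i(t)<1$, Lines~\ref{pda1:zi}--\ref{pda1:xi} give
\[
\widetilde x_i(t)+z_i(t)=\Bigl(\widehat x_i(t)\bigl(1+\tfrac1C\bigr)+\tfrac1{\theta C}\Bigr)+\bigl(1-\widehat x_i(t)\bigr)=1+\tfrac{\widehat x_i(t)}{C}+\tfrac1{\theta C}\geq 1;
\]
when $\widehat x_i(t)\geq 1$, that packet's variables are untouched in slot $t$, so $\widetilde x_i(t)+z_i(t)=\widehat x_i(t)+0\geq 1$. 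Summing over the $N_1-n_2(t)$ loop indices, and adding the non-negative contributions of the indices $i\leq n_2(t)$ (for which $z_i(t)=0$ and $\widetilde x_i(t)\geq 0$), gives $\widetilde x(t)+\widetilde z(t)\geq N_1-n_2(t)$. Finally, because each $x_i$ is non-decreasing in $t$, $\widetilde x(\infty)\geq\widetilde x(t)$, so $\widetilde x(\infty)+\widetilde z(t)\geq N_1-n_2(t)$, which is exactly~(\ref{primal1:constraint-1}).

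I do not expect a genuine obstacle; the argument is essentially bookkeeping. The points requiring care are: correctly tracking the start-of-slot (``hat'') versus end-of-slot (``tilde'') values of $x_i$ and $z_i(t)$ around the update in Lines~\ref{pda1:zi}--\ref{pda1:xi}; observing that for a fixed $t$ the quantities $z(t)$ and $z_i(t)$ are assigned exactly once, during slot $t$ itself; and separating out the degenerate case $n_2(t)\geq N_1$. Note that $\theta$ enters the feasibility argument only through $\theta>0$, so the specific value selected in Line~\ref{pda1:constant} plays no role here --- that choice matters only for the dual constraint~(\ref{dual1:constraint1}).
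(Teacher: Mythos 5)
Your proof is correct and follows essentially the same route as the paper's (Appendix~B): non-negativity from initialization at zero plus monotone updates, and the covering constraint~(\ref{primal1:constraint-1}) by showing each index $i\in\{n_2(t)+1,\dots,N_1\}$ contributes at least $1$ to $x+z(t)$ via the case split on the guard $\widehat x_i(t)<1$, then summing. Your per-index handling of the case split is marginally more careful than the paper's (which treats all loop indices as being in the same case), but the argument is the same.
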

	\begin{proof}
		See Appendix~\ref{appendix:lemma:feasible-primal1}.
	\end{proof}
	
	
	The next lemma establishes the dual feasibility of Alg.~\ref{pda1}. For proving the lemma, we define the increment (under Alg.~\ref{pda1}) of the value of $x_i$  in slot $t$ by $\Delta \widetilde{x}_i(t)=\widetilde{x}_i(t)-\widehat{x}_i(t)$. From \cite{buchbinder2009design}, we can obtain that  $\Delta \widetilde{x}_i(1)=\frac{1}{\theta \cdot C}$ for all $i$; moreover, if $\widetilde{w}(t)=1$, then $\Delta \widetilde{x}_i(t)=(1+\frac{1}{C}) \Delta \widetilde{x}_i(t-1)$ for  $i=n_2(t)+1, \cdots N_1$. That is, $\{\Delta \widetilde{x}_i(1), \Delta \widetilde{x}_i(2), \cdots\}$ forms a geometric sequence, with the initial value  of $\frac{1}{\theta \cdot C}$ and the ratio of $1+\frac{1}{C}$.
	
	\begin{lemma} \label{lemma:dual}
		Alg. \ref{pda1} produces a feasible  solution to  dual program~(\ref{dual1}).
	\end{lemma}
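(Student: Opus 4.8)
The plan is to check the two constraint families of dual program~(\ref{dual1}) in turn. The box constraints in Eq.~(\ref{dual1:constraint2}) are immediate: Line~\ref{pda1:initial} sets $w(t)=0$, and the only other assignment to $w(t)$ is Line~\ref{pda1:y}, which sets $w(t)=1$; so $w(t)\in\{0,1\}$ always. Hence the whole task reduces to establishing the budget constraint $\sum_{t=1}^{\infty} w(t)\le C$ in Eq.~(\ref{dual1:constraint1}), and this is where the particular value of $\theta$ fixed in Line~\ref{pda1:constant} enters. Since $w(t)\in\{0,1\}$, I would restate the goal as $\big|\{t:w(t)=1\}\big|\le C$.

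First I would record three elementary facts from the code. (i) For every $i$, the value of $x_i$ is nondecreasing in time, since Line~\ref{pda1:xi} replaces $x_i$ by $x_i(1+\tfrac1C)+\tfrac1{\theta C}>x_i$. (ii) $n_2(t)$ is nondecreasing in $t$, so the loop range $\{n_2(t)+1,\dots,N_1\}$ in Line~\ref{pda1:for} only shrinks over time. (iii) $w(t)$ is set to $1$ (Line~\ref{pda1:y}) exactly when some $x_i$ is updated, so $w(t)=1$ iff at least one $x_i$ is updated in slot~$t$, and in a slot with $w(t)=0$ no $x_i$ changes. Next I would bound the number of times one $x_i$ can be updated: using the geometric-increment description stated just before the lemma (the $k$-th update of $x_i$ adds $\tfrac1{\theta C}(1+\tfrac1C)^{k-1}$), the value of $x_i$ after $k$ updates equals $\big((1+\tfrac1C)^{k}-1\big)/\theta$, which by the choice $\theta=(1+\tfrac1C)^{\lfloor C\rfloor}-1$ equals exactly $1$ at $k=\lfloor C\rfloor$; hence the guard $x_i<1$ in Line~\ref{pda1:condition} fails from then on, so each $x_i$ is updated at most $\lfloor C\rfloor$ times.

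The core step is then a counting argument by contradiction. Let $t_1<t_2<\cdots$ enumerate the slots with $w(t)=1$ and suppose there are at least $\lfloor C\rfloor+1$ of them. Because $w(t_{\lfloor C\rfloor+1})=1$, in slot $t_{\lfloor C\rfloor+1}$ the loop updates some index $i$ with $i>n_2(t_{\lfloor C\rfloor+1})$ and with $x_i<1$ at the moment it is processed. By fact (ii), $i>n_2(t_j)$ for every $j\le\lfloor C\rfloor+1$, so $i$ lies in the loop range of each slot $t_1,\dots,t_{\lfloor C\rfloor+1}$; and by fact (i), when $i$ is processed in any such $t_j$ its value is at most its value when processed in $t_{\lfloor C\rfloor+1}$, hence still $<1$. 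Therefore $x_i$ is updated in each of the $\lfloor C\rfloor+1$ slots $t_1,\dots,t_{\lfloor C\rfloor+1}$, contradicting the bound of $\lfloor C\rfloor$ updates per variable from the previous paragraph. Hence $\sum_{t} w(t)=\big|\{t:w(t)=1\}\big|\le\lfloor C\rfloor\le C$, and together with the box constraints this gives dual feasibility.

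The main obstacle is this last counting step, specifically the observation that every $w(t)=1$ slot must re-update the very index that triggered it and that, by the monotonicity of $n_2$, such an index has been in the loop range continuously since slot~$1$; once that is seen, $\lfloor C\rfloor+1$ triggering slots force $\lfloor C\rfloor+1$ updates of a single variable, which the calibration $\theta=(1+\tfrac1C)^{\lfloor C\rfloor}-1$ forbids. The surrounding pieces --- the box bound and the per-variable update count via the geometric series --- are routine bookkeeping.
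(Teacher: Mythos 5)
Your proof is correct and follows essentially the same route as the paper's: both verify the box constraints trivially and then use the calibration $\theta=(1+\tfrac1C)^{\lfloor C\rfloor}-1$ so that $\lfloor C\rfloor$ geometric increments of $x_i$ sum to exactly $1$, after which the guard $x_i<1$ fails and no further slot can set $w(t)=1$. Your counting-by-contradiction over the set of $w(t)=1$ slots merely makes explicit the monotonicity of $x_i$ and of the loop range that the paper's direct computation of $\widetilde{x}_i(\lfloor C\rfloor)$ leaves implicit, so no substantive difference.
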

	
	\begin{proof}
		First, the dual constraint in Eq.~(\ref{dual1:constraint2}) holds obviously according to Lines~\ref{pda1:initial} and \ref{pda1:y}.  Second, we will show that the value of $\sum_{t=1}^{\infty} w(t)$ in Eq.~(\ref{dual1:constraint1}) computed by Alg.~\ref{pda1} is less than or equal to $C$. Note that Line~\ref{pda1:y} updates the value of $w(t)$ to be one  if the condition in Line~\ref{pda1:condition} holds. Thus, it suffices to show that the condition in Line~\ref{pda1:condition} fails at the end of slot  $\lfloor C \rfloor$, i.e., the value of $w(t)$  is zero for all $t > \lfloor C \rfloor$. 
		
		If $\widetilde{w}(\lfloor C \rfloor)=1$ in slot $\lfloor C \rfloor$, then we can obtain 
		\begin{align*}
		\widetilde{x}_i(\lfloor C \rfloor)= \sum_{t=1}^{\lfloor C \rfloor} \Delta \widetilde{x}_i(t)\mathop{=}^{(a)} \frac{(1+\frac{1}{C})^{\lfloor C \rfloor} -1}{\theta}=1,
		\end{align*}
		where (a) is because the sequence $\{\Delta\widetilde{x}_i(1),\Delta \widetilde{x}_i(2), \cdots, \Delta \widetilde{x}_i(\lfloor C \rfloor)\}$ is the geometric sequence with the initial value of $\frac{1}{\theta \cdot C}$ and the ratio of $1+\frac{1}{C}$.
		Thus, the value of $w(t)$ for all $t > \lfloor C \rfloor$ is zero since the condition in Line~\ref{pda1:condition} fails. 
	\end{proof}
	
	\begin{remark}\label{remark:c}
		According to the proof of Lemma~\ref{lemma:dual}, Alg.~\ref{pda1}  no longer updates the values of all variables after slot $\lfloor C \rfloor$. 	
	\end{remark}
	
	The next theorem analyzes the primal objective value in Eq.~(\ref{primal1:objective}) computed by  Alg.~\ref{pda1}.

	\begin{theorem} \label{theroem:competitive-ratio1}
		Let $OPT_{(\ref{primal1})}(\mathbf{A})$ be the minimum objective value in linear program~(\ref{primal1}). Then, 
		the primal objective value in Eq.~(\ref{primal1:objective}) computed by Alg.~\ref{pda1} is bounded above by
		\begin{align*}
		(1+\frac{1}{(1+\frac{1}{C})^{\lfloor C \rfloor}-1}) OPT_{(\ref{primal1})}(\mathbf{A}),
		\end{align*}
		for all possible arrival patterns $\mathbf{A}$. 
	\end{theorem}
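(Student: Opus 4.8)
The plan is to run the standard primal--dual accounting argument for online linear programs. I would combine the primal feasibility of Alg.~\ref{pda1} (Lemma~\ref{lemma:feasible-primal1}) with its dual feasibility (Lemma~\ref{lemma:dual}), and insert between them a \emph{slot-by-slot} comparison of how fast the primal objective in Eq.~(\ref{primal1:objective}) grows versus how fast the dual objective in Eq.~(\ref{dual1:objective}) grows. Once I establish that in every slot the primal objective increases by at most $1+\frac{1}{\theta}$ times the increase of the dual objective, where $\theta=(1+\frac1C)^{\lfloor C\rfloor}-1$ as in Line~\ref{pda1:constant}, summing over all slots and invoking weak LP duality for the pair~(\ref{primal1})--(\ref{dual1}) immediately yields the theorem, since $1+\frac1\theta=1+\frac{1}{(1+1/C)^{\lfloor C\rfloor}-1}$.

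Concretely, the first steps are: write the primal objective produced by Alg.~\ref{pda1} as a telescoping sum over slots, $C\cdot\widetilde{x}(\infty)+\sum_{t}\widetilde{z}(t)=\sum_{t}\bigl(C\,\Delta\widetilde{x}(t)+\widetilde{z}(t)\bigr)$, which is legitimate because all variables are initialized to $0$ and, by Remark~\ref{remark:c}, no variable is updated after slot $\lfloor C\rfloor$, so the sum is finite. Then fix a slot $t$ and let $U(t)=\{\,i:\ n_2(t)<i\le N_1,\ \widehat{x}_i(t)<1\,\}$ be the set of indices $i$ actually updated in the loop of Line~\ref{pda1:for}. For $i\in U(t)$, Line~\ref{pda1:xi} gives $\Delta\widetilde{x}_i(t)=\frac{\widehat{x}_i(t)}{C}+\frac{1}{\theta C}$, hence $C\,\Delta\widetilde{x}_i(t)=\widehat{x}_i(t)+\frac{1}{\theta}$, while Line~\ref{pda1:zi} (evaluated at the \emph{pre-update} value of $x_i$) gives $\widetilde{z}_i(t)=1-\widehat{x}_i(t)$; all other $z_i(t)$ stay $0$. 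Adding the per-index contributions, the $\widehat{x}_i(t)$ terms cancel and $C\,\Delta\widetilde{x}(t)+\widetilde{z}(t)=\sum_{i\in U(t)}\bigl(\widehat{x}_i(t)+\tfrac1\theta+1-\widehat{x}_i(t)\bigr)=\bigl(1+\tfrac1\theta\bigr)\,|U(t)|$.

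Next I bound $|U(t)|$ by the dual increment. The loop ranges over the $N_1-n_2(t)$ indices $n_2(t)+1,\dots,N_1$ (and is empty when $n_2(t)\ge N_1$), so $|U(t)|\le N_1-n_2(t)$; moreover $w(t)$ is set to $1$ in Line~\ref{pda1:y} exactly when $U(t)\neq\varnothing$, in which case $N_1-n_2(t)\ge1$. Hence $|U(t)|\le (N_1-n_2(t))\,\widetilde{w}(t)$ in all cases, and therefore $C\,\Delta\widetilde{x}(t)+\widetilde{z}(t)\le\bigl(1+\tfrac1\theta\bigr)(N_1-n_2(t))\,\widetilde{w}(t)$. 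Summing over $t$ shows the primal objective produced by Alg.~\ref{pda1} is at most $\bigl(1+\tfrac1\theta\bigr)\sum_{t}(N_1-n_2(t))\,\widetilde{w}(t)$, which is $\bigl(1+\tfrac1\theta\bigr)$ times the dual objective value in Eq.~(\ref{dual1:objective}) produced by Alg.~\ref{pda1}. Since that dual solution is feasible by Lemma~\ref{lemma:dual}, weak duality gives that its value is $\le OPT_{(\ref{primal1})}(\mathbf{A})$, and chaining the two inequalities produces the claimed bound.

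The delicate point, and the step I would write out most carefully, is the exact cancellation: it matters that $z_i(t)$ in Line~\ref{pda1:zi} is computed \emph{before} $x_i$ is updated in Line~\ref{pda1:xi}, so that the $-\widehat{x}_i(t)$ coming from $z_i$ exactly kills the $+\widehat{x}_i(t)$ coming from $C\,\Delta\widetilde{x}_i(t)$, leaving the clean constant $1+\frac1\theta$ per updated index regardless of how large $\widehat{x}_i(t)$ already is; keeping the order of operations and the $\widehat{\cdot}$ versus $\widetilde{\cdot}$ bookkeeping straight is where an error would most easily slip in. The remaining ingredients --- validity of the telescoping (finiteness via Remark~\ref{remark:c}), the trivial count $|U(t)|\le N_1-n_2(t)$, and weak duality --- are routine.
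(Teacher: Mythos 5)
Your proposal is correct and follows essentially the same route as the paper: a per-slot comparison showing the primal increment is at most $(1+\tfrac{1}{\theta})$ times the dual increment, followed by summation and weak duality. The only (harmless) difference is that you track the exact set $U(t)$ of updated indices, whereas the paper's appendix treats the two uniform cases $\widehat{x}_i(t)<1$ for all relevant $i$ and $\widehat{x}_i(t)\ge 1$ for all relevant $i$, which suffices because all indices still in the loop range have been updated identically since slot~1.
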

	\begin{proof}
		Let $\Delta \mathscr{P}(t)$ be the increment (under Alg.~\ref{pda1})  of the primal objective value in Eq.~(\ref{primal1:objective}) in slot $t$ and let $\Delta \mathscr{D}(t)$ be that of the dual objective value in Eq.~(\ref{dual1:objective}) in slot $t$.  Appendix~\ref{appendix:theroem:competitive-ratio1} establishes that 
		\begin{align*}
		\Delta \mathscr{P}(t) \leq \left(1+\frac{1}{\theta} \right)\Delta \mathscr{D}(t), 
		\end{align*}
		for all $t$. Let $\mathscr{P}$ and $\mathscr{D}$ be the primal and dual objective values, respectively, computed by Alg.~\ref{pda1}. Then, $\mathscr{P}=\sum_{t=1}^{\infty} \Delta \mathscr{P}(t)$ and $\mathscr{D}=\sum_{t=1}^{\infty} \Delta \mathscr{D}(t)$; therefore, the result follows since
		\begin{align*}
		\mathscr{P}\leq \left(1+\frac{1}{\theta} \right) \mathscr{D} \leq \left(1+\frac{1}{\theta} \right) OPT_{(\ref{primal1})}(\mathbf{A}),
		\end{align*}
		where the last inequality is due to the weak duality \cite{buchbinder2009design}.
	\end{proof}

	\subsection{Randomized online scheduling algorithm}\label{subsection:online-alg1}
	\begin{algorithm}[t]
			\small
		\SetAlgoLined 
		\SetKwFunction{Union}{Union}\SetKwFunction{FindCompress}{FindCompress} \SetKwInOut{Input}{input}\SetKwInOut{Output}{output}
		
		\tcc{Initialize all variables at the beginning of slot $1$ as follows:}
		
		$x, x_1 , \cdots, x_{N_1} \leftarrow 0$\; \label{online-alg1:initial}
		
		$\theta \leftarrow (1+\frac{1}{C})^{\lfloor C \rfloor}-1$\tcp*[r]{$\theta$ is a constant with the function of cost~$C$}. 	\label{online-alg1:constant}
		
		Pick a uniformly random number $u \in [0,1)$\; \label{online-alg1:u}
		
		\tcc{For each new slot $t=1,2,  \cdots$, perform as follows:}

		Transmit $\min(Q_1(t), A_2(t))$ coded packets\; \label{online-alg1:code}
		
		\tcc{After transmitting the coded packets, if queue~$Q_1$ is non-empty, then continute as follows:}

		\For{$i=n_2(t)+1$ \emph{\KwTo} $N_1$\label{online-alg1:for}}{   
			\If{$x_i < 1$ \label{online-alg1:condition}}{
				$x_i \leftarrow x_i(1+\frac{1}{C})+\frac{1}{\theta \cdot C}$\; \label{online-alg1:xi}
			}
		}\label{online-alg1:for-end}
		
		$x_{\text{pre}} \leftarrow x$\; \label{online-alg1:pre}
		$x \leftarrow \sum^{N_1}_{i=1} x_i$\; \label{online-alg1:x}
		
		\While{1\label{online-alg1:while}}{
			\uIf{$x_{\text{pre}} \leq u < x$ \label{online-alg1:if}}{
				Transmit an uncoded packet from queue~$Q_1$\; \label{online-alg:tx-uncoded}
				$u \leftarrow u+1$\;  \label{online-alg1:u+1}
			} \label{online-alg1:if-end}
			\Else(\tcp*[f]{$x \leq u$}){\label{online-alg1:else-break}
				break\; \label{online-alg1:break}
			} \label{online-alg1:break-end}
		}\label{online-alg1:while-end}

		\caption{Randomized  online scheduling algorithm for  the one-sided adversarial traffic model.}
		\label{online-alg1}
	\end{algorithm}
	Leveraging Alg.~\ref{pda1}, this section proposes a randomized online scheduling algorithm in Alg.~\ref{online-alg1}.  For each slot~$t$, Alg.~\ref{online-alg1} transmits $\min(Q_1(t), A_2(t))$ coded packets (in Line~\ref{online-alg1:code}) by combing packets left at queue~$Q_1$ and the new arriving packets at queue~$Q_2$. Then, to decide whether to transmit  uncoded packets for each slot, Lines~\ref{online-alg1:for} - \ref{online-alg1:for-end} and \ref{online-alg1:x} update the values of $x_i$ and $x$  in the same way as Alg.~\ref{pda1} does. 
	In addition, Alg.~\ref{online-alg1} uses another variable $x_{\text{pre}}$ (in Line~\ref{online-alg1:pre}) to record the value of $x$ at the beginning (before update in Line~\ref{online-alg1:x}) of each slot. Let $\widetilde{x}_{\text{pre}}(t)$ be the value of $x_{\text{pre}}$ at the end of slot $t$.

	At the beginning of slot $1$, Line~\ref{online-alg1:u} chooses a random number $u \in [0,1)$ from a continuous uniform distribution between 0 and 1. According to Lines~\ref{online-alg1:while} - \ref{online-alg1:while-end}, if there exists a $k \in \mathbb{N}$ such that $u+k \in [\widetilde{x}_{\text{pre}}(t), \widetilde{x}(t))$, then the relay transmits an uncoded packet in slot $t$. Note that, if there are multiple~$k$'s such that $u+k \in [\widetilde{x}_{\text{pre}}(t), \widetilde{x}(t))$, then the relay  transmits multiple uncoded packets in slot $t$, until the present value of $u$ is greater than or equal to $\widetilde{x}(t)$ (as in Line~\ref{online-alg1:break}). 
	
	Let $\Delta \widetilde{x}(t)=\widetilde{x}(t)-\widehat{x}(t)$ ($=\widetilde{x}(t)-\widetilde{x}_{\text{pre}}(t)$) be the increment of the value of $x$  in slot~$t$. The idea behind Alg.~\ref{online-alg1} is that, with the random choice of $u$, the expected number of uncoded packets transmitted in slot~$t$ is exactly $\Delta \widetilde{x}(t)$.

	\begin{theorem} \label{theorem:expected-competitive ratio1}
		The expected competitive ratio of  Alg.~\ref{online-alg1} is
		\begin{align*}
		1+\frac{1}{(1+\frac{1}{C})^{\lfloor C \rfloor}-1},
		\end{align*}
		approaching  $\frac{e}{e-1}$ as $C$ tends to infinity.
	\end{theorem}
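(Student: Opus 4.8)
The plan is to show that the expected total cost of Alg.~\ref{online-alg1} is at most $C\cdot N_2+\mathscr P$, where $\mathscr P$ denotes the primal objective value of~(\ref{primal1}) produced by Alg.~\ref{pda1}, and then to invoke Theorem~\ref{theroem:competitive-ratio1}. Writing $\theta=(1+\tfrac1C)^{\lfloor C\rfloor}-1$ so that the claimed ratio is $1+\tfrac1\theta$, Theorem~\ref{theroem:competitive-ratio1} gives $\mathscr P\le(1+\tfrac1\theta)\,OPT_{(\ref{primal1})}(\mathbf A)$; since~(\ref{primal1}) relaxes the integer program~(\ref{integer}) we have $OPT_{(\ref{primal1})}(\mathbf A)\le OPT_{(\ref{integer})}(\mathbf A)=OPT(\mathbf A)-C\cdot N_2$, and feeding the target inequality into this, while using $1+\tfrac1\theta\ge1$ to absorb the additive $C\cdot N_2$, yields $\mathbb E[J(\mathbf A,\text{Alg.~\ref{online-alg1}})]\le(1+\tfrac1\theta)\,OPT(\mathbf A)$ for every $\mathbf A$. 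The asymptotic value is then immediate: $(1+\tfrac1C)^{\lfloor C\rfloor}\to e$, so $1+\tfrac1\theta\to1+\tfrac1{e-1}=\tfrac{e}{e-1}$.

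For the target inequality, fix $\mathbf A$ and the seed $u\sim\mathrm{Unif}[0,1)$. The auxiliary variables $x_i$ (hence $\widetilde x(t)$ and $\widehat x_i(t)$) evolve in Alg.~\ref{online-alg1} exactly as in Alg.~\ref{pda1}, and the successive intervals $[\widetilde x_{\text{pre}}(\tau),\widetilde x(\tau))$ for $\tau\le t$ chain up to $[0,\widetilde x(t))$; hence through slot $t$ the while-loop of Lines~\ref{online-alg1:while}--\ref{online-alg1:while-end} transmits at most $U(t):=\#\{k\in\mathbb N:\,u+k<\widetilde x(t)\}$ uncoded $Q_1$-packets, the discrepancy occurring only after $Q_1$ has already emptied. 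A short case split on whether $Q_1$ is empty at the end of slot $t$ --- if not, then all $n_2(t)$ arrivals at $Q_2$ have been coded; if so, then $N_1-n_2(t)-U(t)\le0$ --- together with greedy coding (Line~\ref{online-alg1:code}) shows the number of $Q_1$-packets held at the end of slot $t$ equals $\max\{0,\,N_1-n_2(t)-U(t)\}$. Consequently the Problem~\ref{problem1}-cost of Alg.~\ref{online-alg1} is at most $C\cdot U(\infty)+\sum_{t\ge1}\max\{0,N_1-n_2(t)-U(t)\}$, and its total cost exceeds this by at most the constant $C\cdot N_2$ spent transmitting the $Q_2$-side.

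Two elementary facts about $u$ close the loop. (i) $\mathbb E[\#\{k\in\mathbb N:\,u+k<a\}]=a$ for all $a\ge0$, whence $\mathbb E[C\cdot U(\infty)]=C\,\widetilde x(\infty)$. (ii) $\mathbb P[U(t)\le j]=\min\{1,\max\{0,\,1+j-\widetilde x(t)\}\}$ for $j\in\mathbb N$; combined with $\max\{0,m-U(t)\}=\sum_{j=0}^{m-1}\mathbf{1}[U(t)\le j]$ at $m=N_1-n_2(t)$ (the holding term being $0$ once $m\le0$), this gives $\mathbb E[\max\{0,N_1-n_2(t)-U(t)\}]=\sum_{j=0}^{m-1}\min\{1,\max\{0,1+j-\widetilde x(t)\}\}$. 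Since $\widetilde x(t)=\sum_{i=1}^{N_1}\widetilde x_i(t)\ge\sum_{i>n_2(t)}\widehat x_i(t)\ge S:=\sum_{i>n_2(t)}\min\{1,\widehat x_i(t)\}$ and the sum is non-increasing in $\widetilde x(t)$, it is at most $\sum_{j=0}^{m-1}\min\{1,\max\{0,1+j-S\}\}$, and a direct computation (valid because $0\le S\le m$) evaluates this last sum to $m-S=\sum_{i>n_2(t)}\max\{0,1-\widehat x_i(t)\}=\widetilde z(t)$. Summing over $t$ gives $\mathbb E[\sum_t\max\{0,N_1-n_2(t)-U(t)\}]\le\sum_t\widetilde z(t)$, hence $\mathbb E[\text{Problem~\ref{problem1}-cost}]\le C\,\widetilde x(\infty)+\sum_t\widetilde z(t)=\mathscr P$, which is exactly the target inequality.

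The step I expect to be the crux is fact (ii) together with the domination $\sum_{j=0}^{m-1}\min\{1,\max\{0,1+j-S\}\}\le\widetilde z(t)$: one must identify the exact law of the derandomization count $U(t)$ produced by the single seed $u$ shared across all slots, and then argue that the resulting clamped-linear sum is controlled by the linear program's $z$-variables after replacing $\widetilde x(t)$ by the partial sum $S$. The cost decomposition, fact (i), and the asymptotics are routine. Finally, to see that the ratio is exactly $1+\tfrac1\theta$ and not merely an upper bound, the matching lower bound is inherited from the classic randomized ski-rental problem, which by Remark~\ref{remark:special} is the special case $N_1=N_2=1$: no randomized online algorithm can do better than $1+\tfrac1\theta$ on that instance.
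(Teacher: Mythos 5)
Your proposal is correct and follows the same skeleton as the paper's proof: bound the expected cost of Alg.~\ref{online-alg1} by the primal objective value $\mathscr{P}$ computed by Alg.~\ref{pda1} (transmission term and holding term separately), then invoke Theorem~\ref{theroem:competitive-ratio1} and weak duality, with the $C\cdot N_2$ constant absorbed because the ratio exceeds one. The transmission-cost half is identical in both arguments. Where you genuinely diverge is the holding-cost half, and your version is the more rigorous one: the paper asserts that the expected number of packets left at $Q_1$ at the end of slot $t$ \emph{equals} $\max\{N_1-n_2(t)-\widetilde{x}(t),0\}$, i.e., it pushes the expectation inside the max, whereas Jensen's inequality goes the wrong way for that step ($\mathbb{E}[\max\{m-U(t),0\}]\geq\max\{m-\mathbb{E}[U(t)],0\}$), so bounding the right-hand side by $\widetilde{z}(t)$ does not by itself control the true expectation. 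Your fact (ii) -- the exact two-point law of the rounding count $U(t)$ induced by the shared seed $u$ -- together with the layer-cake identity and the comparison against $S=\sum_{i>n_2(t)}\min\{1,\widehat{x}_i(t)\}$ (whose complement is exactly $\widetilde{z}(t)$) closes this gap cleanly; I verified that $\sum_{j=0}^{m-1}\min\{1,\max\{0,1+j-S\}\}=m-S$ for $0\leq S\leq m$ and that monotonicity in $\widetilde{x}(t)\geq S$ gives the needed direction. Your closing remark that the matching lower bound is inherited from the ski-rental special case is also how the paper justifies exactness (in the remark following the theorem). In short: same architecture, but your treatment of the expected holding cost is a genuine refinement that the paper's appendix glosses over.
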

	\begin{proof}
		We show that   the expected  cost of transmitting uncoded packets by Alg.~\ref{online-alg1} is  \mbox{$C \cdot \sum_{t=1}^{\infty}\Delta \widetilde{x}(t)=C\cdot \widetilde{x}(\infty)$}, which is the value of the first term in Eq.~(\ref{primal1:objective}) computed by Alg.~\ref{pda1}. Moreover, we show that the expected number of packets left at queue $Q_1$ at the end of slot $t$ under Alg.~\ref{online-alg1} is less than or equal to $\widetilde{z}_i(t)$, which is the value of the second term in Eq.~(\ref{primal1:objective}) computed by Alg.~\ref{pda1}. Thus, the expected cost incurred by Alg.~\ref{online-alg1} is less than or equal to the primal objective value in Eq.~(\ref{primal1:objective}) computed by Alg.~\ref{pda1}. Then, the result immediately follows from Theorem~\ref{theroem:competitive-ratio1}. See Appendix~\ref{appendix:theorem:expected-competitive ratio1} for details. 
	\end{proof}
	
	\begin{remark}
	Recall that a competitive ratio is the worst-case ratio for all possible cases (i.e., arrival patterns~$\mathbf{A}$) and recall that the  ski-rental problem is a  case  of our Problem~\ref{problem1} (from Remark~\ref{remark:special}). Thus, the minimum achievable competitive ratio for our problem is no higher than  that for the ski-rental problem. Because the minimum achievable competitive ratio for the ski-rental problem  is  \mbox{$\frac{e}{e-1}$} \cite{karlin2001dynamic} and Alg.~\ref{online-alg1} can also  achieve that competitive ratio, we can conclude that Alg.~\ref{online-alg1}   achieves the minimum achievable competitive for Problem~\ref{problem1}. 
	\end{remark}
	
	\begin{remark}\label{remark:any-mac}
		We want to emphasize that the competitive ratio in Theorem~\ref{theorem:expected-competitive ratio1} is independent of  arrival patterns $\mathbf{A}$,i.e., regardless of the MAC protocol. Thus, Alg.~\ref{online-alg1} can  be implemented at each relay in  the multiple-relay network; meanwhile, it can ensure the same competitiveness for each relay. 
	\end{remark}

	The next lemma investigates the maximum number of uncoded packets per slot required  by  Alg.~\ref{online-alg1}.  
	
	\begin{lemma} \label{lemma:number-of-tx}
		Alg.~\ref{online-alg1} transmits at most three uncoded packets in each slot.
	\end{lemma}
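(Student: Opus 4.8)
The plan is to fix an arbitrary slot $t$, bound the number of uncoded packets Alg.~\ref{online-alg1} sends in slot $t$ by $\lceil\Delta\widetilde{x}(t)\rceil$, and then show $\Delta\widetilde{x}(t)\le 1+\frac1\theta\le 3$.

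First I would pin down what the loop in Lines~\ref{online-alg1:while}--\ref{online-alg1:while-end} does in a given slot. Each uncoded packet sent in slot $t$ corresponds to one pass through the loop for which the test $x_{\text{pre}}\le u<x$ holds, and such a pass increments $u$ by $1$ while $x_{\text{pre}}=\widehat{x}(t)$ and $x=\widetilde{x}(t)$ stay fixed. Hence the number of uncoded packets in slot $t$ equals $\#\{k\ge 0\text{ integer}:\, u_t+k\in[\widehat{x}(t),\widetilde{x}(t))\}$, where $u_t$ is the value of $u$ when that loop begins. The crucial point is the invariant $u_t\in[\widehat{x}(t),\widehat{x}(t)+1)$, which I would prove by induction on $t$: it holds at $t=1$ because $u_1\in[0,1)$ (Line~\ref{online-alg1:u}) and $\widehat{x}(1)=0$; for the step, since $x$ is nondecreasing and $u$ never falls below $\widehat{x}(t)$, the loop can exit only once $u\ge\widetilde{x}(t)$, and the last increment (if any) started from a value $<\widetilde{x}(t)$, so at the end of slot $t$ we have $u\in[\widetilde{x}(t),\widetilde{x}(t)+1)=[\widehat{x}(t+1),\widehat{x}(t+1)+1)$. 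Given the invariant, the lower bound $u_t+k\ge\widehat{x}(t)$ is automatic, so the count is $\max\{0,\lceil\widetilde{x}(t)-u_t\rceil\}\le\lceil\widetilde{x}(t)-\widehat{x}(t)\rceil=\lceil\Delta\widetilde{x}(t)\rceil$.

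Next I would estimate $\Delta\widetilde{x}(t)=\sum_{i=1}^{N_1}(\widetilde{x}_i(t)-\widehat{x}_i(t))$. In slot $t$ only the indices $i\in\{n_2(t)+1,\dots,N_1\}$ with $\widehat{x}_i(t)<1$ are touched (Lines~\ref{online-alg1:for}--\ref{online-alg1:for-end}), so at most $N_1$ of them, and for each such $i$ Line~\ref{online-alg1:xi} raises $x_i$ by $\widehat{x}_i(t)/C+\frac1{\theta C}<\frac1C\bigl(1+\frac1\theta\bigr)$, using $\widehat{x}_i(t)<1$. Thus $\Delta\widetilde{x}(t)\le\frac{N_1}{C}\bigl(1+\frac1\theta\bigr)\le 1+\frac1\theta$, the last step being the one-sided hypothesis $N_1\le C$. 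It then suffices to check $1+\frac1\theta\le 3$, i.e.\ $\theta=(1+\frac1C)^{\lfloor C\rfloor}-1\ge\frac12$: for $1<C<2$ this is $1+\frac1C\ge\frac32$, and for $C\ge 2$, using $\lfloor C\rfloor\ge C-1$ and $1+\frac1C>1$, $(1+\frac1C)^{\lfloor C\rfloor}\ge(1+\frac1C)^{C-1}=\frac{(1+1/C)^{C}}{1+1/C}\ge\frac{(1.5)^2}{1.5}=\frac32$, since $(1+\frac1C)^{C}$ is increasing in $C$ (so $\ge(1.5)^2$ for $C\ge 2$) and $1+\frac1C\le\frac32$ for $C\ge 2$. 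Combining with the previous paragraph, the number of uncoded packets in slot $t$ is $\lceil\Delta\widetilde{x}(t)\rceil\le 3$.

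The routine parts are the one-line estimate of $\Delta\widetilde{x}(t)$ from Line~\ref{online-alg1:xi} and the numerical check $\theta\ge\frac12$. The main obstacle is the first step: one has to understand how the single shared random offset $u$ is consumed across \emph{all} slots, so that the per-slot transmission count is controlled by the one-slot jump $\Delta\widetilde{x}(t)$ rather than by the cumulative total $\widetilde{x}(\infty)$; the invariant $u_t\in[\widehat{x}(t),\widehat{x}(t)+1)$ is precisely what isolates that one-slot jump.
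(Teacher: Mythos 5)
Your proof is correct and follows the same skeleton as the paper's: control the number of uncoded packets in slot $t$ by the one-slot jump $\Delta\widetilde{x}(t)$, then bound $\Delta\widetilde{x}(t)\le 3$ using $N_1\le C$. The one genuine difference is how the jump is bounded. The paper uses the geometric-sequence description of $\{\Delta\widetilde{x}_i(1),\Delta\widetilde{x}_i(2),\dots\}$ to get $\Delta\widetilde{x}_i(t)\le\frac{1}{\theta C}(1+\frac1C)^{C-1}$ and then invokes $(1+\frac1C)^{C-1}\le 3$ together with the claim $\theta\ge 1$; you instead read the increment directly off Line~\ref{online-alg1:xi} as $\widehat{x}_i(t)/C+\frac{1}{\theta C}<\frac1C(1+\frac1\theta)$ using the guard $\widehat{x}_i(t)<1$, which needs only $\theta\ge\frac12$. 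Your route is slightly more robust: the paper's claim $\theta\ge 1$ actually fails for $1<C<2$ (e.g.\ $C=1.5$ gives $\theta=(1+\tfrac23)^1-1=\tfrac23$), whereas your bound $\theta\ge\frac12$ holds for all $C>1$ and you verify it carefully in both regimes. You also make explicit, via the invariant $u_t\in[\widehat{x}(t),\widehat{x}(t)+1)$, why the lattice count in slot $t$ is $\lceil\Delta\widetilde{x}(t)\rceil$; the paper simply asserts that an interval of length at most $3$ contains at most three points of $u+\mathbb{N}$, which also suffices (any half-open interval of length $L$ meets a unit lattice in at most $\lceil L\rceil$ points), so your invariant, while nice to have for the expected-cost analysis elsewhere, is not strictly needed for this lemma.
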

	\begin{proof}
		Since $\{\Delta \widetilde{x}_i(1), \Delta \widetilde{x}_i(2), \cdots, \Delta \widetilde{x}_i(C)\}$ is the geometric sequence with the initial value of $\frac{1}{\theta \cdot C}$ and the ratio of $1+\frac{1}{C}$ for all $i$,  we have  
		\begin{eqnarray*}
			\Delta \widetilde{x}(t)=\sum_{i=1}^{N_1} \Delta \widetilde{x}_i(t) \leq \frac{N_1}{\theta \cdot C} (1+ \frac{1}{C})^{C-1}. 
		\end{eqnarray*}
		Moreover, because of $(1+\frac{1}{C})^{C-1} \leq 3$, $\theta \geq 1$, and $N_1 \leq C$ (from the assumption for the one-sided traffic), we have $\Delta \widetilde{x}(t) \leq 3$. Thus, at most three $k$'s such that $u+k \in [\widetilde{x}_{\text{pre}}(t), \widetilde{x}(t))$, i.e., Alg.~\ref{online-alg1} transmits at most three uncoded packets in each slot. 
	\end{proof}


	To analyze the computational complexity of Alg.~\ref{online-alg1}, we note that there are at most $N_1$ iterations in Lines~\ref{online-alg1:for} - \ref{online-alg1:for-end}. Moreover, there are at most $3$ iterations  in Lines~\ref{online-alg1:while} - \ref{online-alg1:while-end} (by Lemma~\ref{lemma:number-of-tx}). Since $N_1\leq C$ (from the assumption for the one-sided traffic), the computational complexity of Alg.~\ref{online-alg1} is $O(C)$. As the value of $C$ grows, the computational complexity  increases but the competitive ratio in Theorem~\ref{theorem:expected-competitive ratio1} decreases.

\section{Two-sided adversarial traffic} \label{section:two-side} 
This section relaxes the first assumption in the one-sided adversarial traffic by allowing arbitrary traffic at both queues $Q_1$ and $Q_2$. We start with the scenario where only packets at a queue can wait for coding; in particular, this section starts with the following setting:
		\begin{enumerate}
			\item The packets at queue $Q_1$ can wait for coding  but those at queue $Q_2$ are transmitted immediately upon arrival.
			\item The relay can transmit any number of packets in each slot. 
		\end{enumerate}
		This setting is referred to as the \textit{two-sided adversarial traffic}.   This model can  make us focus on  decisions for a queue  while capturing the key feature of the two-sided adversarial traffic. In fact, the first assumption is practical as well when the traffic generated by~node $n_2$ is urgent and even cannot delay for more than one slot (e.g., urgent events in intelligent transportation systems or ultra-reliable low-latency communications (URLLC) \cite{azari2019risk} in 5G). Later,  Section~\ref{subsection:both-queue} will relax the first assumption by extending to the general case when packets at both queues can wait for coding. Moreover, Section~\ref{subsection:constraint} will relax the second assumption by imposing a transmission constraint.

	We introduce some variables similar to Section \ref{section:one-side}:
	\begin{itemize}
		\item $x_{i}$: indicate if  the $i$-th packet  at queue $Q_1$ is transmitted \textit{without coding} upon arrival, where $x_{i}=1$ if the  packet is transmitted without coding; $x_{i}=0$ otherwise;
		\item  $z(t)$: the number of packets at queue~$Q_1$ at the end of slot $t$.
	\end{itemize}
	We have the following problem similar to Problem~\ref{problem1}.
	\begin{problem}\label{problem2}
		Under the two-sided adversarial traffic, develop a scheduling algorithm for the packets at queue~$Q_1$ such that the cost $C \cdot \sum_{i=1}^{N_1}x_i + \sum_{t=1}^{\infty}z(t)$ is minimized. 
	\end{problem}
	
	\begin{remark}
		Following the argument in Remark~\ref{remark:link1}, Problem~\ref{problem2} considers a group of  skiers  \textit{arriving arbitrarily} with potentially different last vacation days. Those skiers cooperatively make a buying or renting decision in each day for minimizing the total buying cost  plus the total renting cost.
	\end{remark}

	%
	

	Section~\ref{subsection:idea} discusses  ideas underlying another primal-dual formulation that will be proposed by Section~\ref{subsection:primal-dual-2} for solving Problem~\ref{problem2}. With the new primal-dual formulation, Section~\ref{subsection:prima-dual-alg2} proposes a primal-dual algorithm for solving Problem~\ref{problem2} in the online fashion.

	\subsection{Ideas underlying the primal-dual formulation} \label{subsection:idea}

	The next example shows that an immediate extension from linear program~(\ref{primal1}) along with Alg.~\ref{pda1} cannot solve Problem~\ref{problem2} with the competitive ratio in Theorem~\ref{theroem:competitive-ratio1}.
	
	\begin{example} \label{ex:wrong}
		Suppose that two packets arrive at queue $Q_1$ in slots 1 and 3, respectively, and no packet arrives at queue~$Q_2$.  Assume  transmission cost $C=2$. 
		In this case,  the optimal solution to Problem~\ref{problem2} is $x_{1}=1$ and $x_{2}=1$, i.e., both packets at queue $Q_1$ are optimally transmitted without coding upon arrival. In particular,
		the optimal solution satisfies the following linear program (similar to linear program~(\ref{primal1})). 
		
		\textbf{Linear program (primal  program):}	
		\begin{subequations} \label{primal-worng}
			\begin{eqnarray}	
			&\min& 2(x_{1}+x_{2})+ \sum_{t=1}^{\infty} z(t) \label{primal-worng:objective}\\
			&\text{s.t.}& x_{1} +z(t) \geq  1 \,\,\,\text{for $t=1,2$;} \label{primal-wrong:constraint1}\\
			&& x_{1}+x_{2}+z(t) \geq 2 \,\,\, \text{for $t=3,4, \cdots$}; \label{primal-wrong:constraint2}\\
			&&x_1, x_2, z(t) \geq 0 \,\,\,\text{for all $t$}.
			\end{eqnarray}
		\end{subequations}
		The associated dual program can be expressed as
		
		\textbf{Dual program:}
		\begin{subequations}\label{dual:wrong}
			\begin{eqnarray}	
			&\max& \sum^{2}_{t=1} w(t) + 2 \cdot \sum_{t=3}^{\infty} w(t) \\
			&\text{s.t.}& \sum^{\infty}_{t=1} w(t) \leq 2; \label{dual-wrong:constraint}\\
			&&0\leq w(t) \leq 1\,\,\,\text{for all $t$}.
			\end{eqnarray}
		\end{subequations}	
		
		Applying the idea behind Alg.~\ref{pda1}, we would update  $x_i \leftarrow x_i(1+\frac{1}{C})+\frac{1}{\theta \cdot C}$ and update  $w(t) \leftarrow 1$ until the dual constraint in Eq.~(\ref{dual-wrong:constraint}) becomes tight. Given $C=2$, the constant $\theta$ is $(1+\frac{1}{2})^2-1=\frac{5}{4}$. In slot~1,  update  $x_1$ to be $\frac{1}{\frac{5}{4} \cdot 2}=\frac{2}{5}$ and update $w(1)$ to bo one. In slot~2, update $x_2$ to be $\frac{2}{5}(1+\frac{1}{2})+\frac{1}{\frac{5}{4} \cdot 2}=1$ and update $w(2)$ to be one. Because the dual constraint in Eq.~(\ref{dual-wrong:constraint}) becomes tight in slot~2, we cannot update any variable since slot~3; in particular, we cannot update $x_3$ when the second packet arrives at queue~$Q_1$. Thus, the second packet waits forever, yielding an infinite  holding cost. 
	\end{example}
	
	To tackle the issue in the above example, the next example proposes another primal-dual formulation. 
	
	
	\begin{example} \label{ex:decouple}
		Note that an optimal solution for $x_1$ and $x_2$ in linear program~(\ref{primal-worng}) also satisfies the following linear program, where we use $z_{1}(t)$ and $z_{2}(t)$ to indicate if  the first packet and second packet, respectively, stay at queue $Q_1$ at the end of slot~$t$. 
		
		\textbf{Linear program (primal  program):}
		\begin{subequations} \label{primal-revise}
			\begin{eqnarray}	
			&\min& 2(x_{1}+x_{2})+ \sum_{t=1}^{\infty} z_{1}(t) + \sum_{t=1}^{\infty} z_{2}(t)\label{primal-revise:objective}\\
			&\text{s.t.}& x_{1} +z_1(t) \geq  1 \,\,\,\text{for $t=1, 2, \cdots$;} \label{primal-revise:constraint1}\\
			&& x_{2}+z_2(t) \geq 1 \,\,\, \text{for $t=3, 4, \cdots$}; \label{primal-revise:constraint2}\\
			&&x_1, x_2, z_1(t), z_2(t)\geq 0 \,\,\,\text{for all $t$}.
			\end{eqnarray}
		\end{subequations}
		While expressing  variable $z(t)$ in Eq.~(\ref{primal-worng:objective})  by $z_1(t)+z_2(t)$ in Eq.~(\ref{primal-revise:objective}), we substitute the original constraints in Eqs.~(\ref{primal-wrong:constraint1}) and~(\ref{primal-wrong:constraint2})  by the constraints in Eqs.~(\ref{primal-revise:constraint1}) and (\ref{primal-revise:constraint2}). The associated dual program can be expressed as
		
		\textbf{Dual program:}
		\begin{subequations} \label{dual-revise}
			\begin{eqnarray}	
			&\min& \sum^{\infty}_{t=1} w_{1}(t) + \sum_{t=3}^{\infty} w_{2}(t) \\
			&\text{s.t.}& \sum^{\infty}_{t=1} w_{1}(t) \leq 2; \\
			&& \sum_{t=3}^{\infty} w_{2}(t)  \leq 2;\label{dual-revise:constraint2}\\
			&& 0 \leq w_1(t), w_2(t) \leq 1\,\,\,\text{for all $t$}.
			\end{eqnarray}
		\end{subequations}
		Follow the idea behind Alg.~\ref{pda1} as discussed in Example~\ref{ex:wrong}. In slot~1,  update $x_1$ to be $\frac{2}{5}$ and update $w_1(1)$ to be one. In slot~2,  update $x_1$ to be one and update $w_1(2)$ to be one. In slot~3,  update $x_2$ to be $\frac{2}{5}$ and update $w_2(3)$ to be one. In slot~4,  update $x_2$ to be one and update $w_2(4)$ to be one. The updating process can achieve the  competitive ratio  in Theorem~\ref{theroem:competitive-ratio1}.
	\end{example}

	The above example implies that the idea of Alg.~\ref{pda1} can solve Problem~\ref{problem2} with  the same competitive ratio, if we can formulate a linear program with  constraints for each \textit{individual} packet (like Eqs.~(\ref{primal-revise:constraint1}) and~(\ref{primal-revise:constraint2})) instead of those for \textit{all} arriving packets (like Eqs.~(\ref{primal-wrong:constraint1}) and~(\ref{primal-wrong:constraint2})). 
	In this context, we introduce additional variables: let $z_{i}(t)$ indicate if  the $i$-th packet stays at queue $Q_1$ at the end slot $t$, where $z_{i}(t)=1$ if it does and   $z_{i}(t)=0$ otherwise. For each slot~$t$,  the value of $x_i+z_{i}(t)$ is either zero or one, where  $x_i+z_{i}(t)=0$ implies that  the $i$-th packet at  queue $Q_1$ is transmitted with coding by slot~$t$ and  $x_i+z_i(t)=1$ implies that the packet is either transmitted without coding  by slot~$t$ or stays at queue $Q_1$ at the end of slot~$t$. By the next example, we emphasize that  the constraints should be carefully considered.


	\begin{example} \label{ex:counter}
		Suppose that two packets arrive at queue $Q_1$ in slots 1 and 2, respectively, and  one packet arrives at queue $Q_2$ in slot~3.  Assume  transmission cost  $C=4$. In this case, the optimal solution to Problem~\ref{problem2} is $x_{1}=1$ and $x_{2}=0$. Next, given the optimal decision for the  packet at queue~$Q_2$ (i.e., optimally transmitted with coding), we consider constraints for each packet at queue~$Q_1$ as follows: 
		\begin{itemize}
			\item \textit{Slot $t=1$}: A packet arrives at  queue $Q_1$ in slot~$1$. Thus, we can obtain $x_{1}+z_{1}(1)=1$.
			\item \textit{Slot $t=2$}: The other packet arrives at  queue $Q_1$ in slot~$2$.  Thus, we can obtain  $x_{1}+z_{1}(2)=1$ and $x_{2}+z_{2}(2)=1$. 
			\item \textit{Slot $t=3$}:  A packet arrives at queue $Q_2$. Since we are given that the  packet at queue~$Q_2$ optimally codes with a packet at queue~$Q_1$, two   options are following: (1) $x_1+z_1(3)=0$, $x_2+z_2(3)=1$, i.e., the first packet at queue $Q_1$ is  transmitted with coding, and the second packet either is transmitted without coding or waits in slot~$3$; (2) $x_1+z_1(3)=1$, $x_2+z_2(3)=0$.
			\item \textit{Slot $t>3$}: No packet arrives at both queues. Thus,  if  $x_1+z_1(3)=0$ and $x_2+z_2(3)=1$, then $x_1+z_1(t)=0$ and $x_2+z_2(t)=1$; otherwise, $x_1+z_1(t)=1$ and $x_2+z_2(t)=0$.
		\end{itemize}
		We  calculate the minimum  value of $4(x_1+x_2)+\sum_{t=1}^{\infty}z_1(t)+\sum_{t=1}^{\infty}z_2(t)$ subject to the two possible constraints, i.e., forming two different linear programs:
		\begin{itemize}
			\item \textit{Consider the  constraints of $x_1+z_{1}(t) \geq 1$ for $1 \leq t \leq 2$, and $x_2+z_{2}(t) \geq 1$ for $t \geq 2$}: The optimal  solution is $x_1=0$ and $x_2=1$, while the minimum objective value is $6$.
			\item \textit{Consider the constraints of $x_1+z_{1}(t) \geq 1$ for $t \geq 1$, and $x_2+z_{2}(2) \geq 1$}:  The optimal solution is $x_1=1$ and $x_2=0$, while the minimum objective value is $5$.
		\end{itemize}
		Thus, only the second set of constraints is correct.  The  idea underlying   the correct set of constraints is that the first packet waits for a longer time  (for coding) than  the second packet does.
	\end{example} 
	
	Let $\mathbf{I}(t)=\{i: x_i+z_{i}(t) \geq 1\}$ be the set of indices such that the value of $x_i+z_{i}(t)$ in slot $t$ is specified to be greater than or equal to one. Let $\mathbf{I}=\{\mathbf{I}(1), \mathbf{I}(2), \cdots\}$.  Our goal is to identify a \textit{correct} set $\mathbf{I}$ of constraints such that the solution to minimize the cost (in Problem~\ref{problem2})  subject to the set $\mathbf{I}$  is an optimal solution to Problem~\ref{problem2}.  Example~\ref{ex:counter} suggests that, when a packet arrives at queue $Q_2$ in slot $t$, a correct set $\mathbf{I}(t)$ of constraints in slot~$t$ can be obtained by removing the most recent packet in set $\mathbf{I}(t-1)$ of the previous slot.  The argument will be confirmed in the next section.

	\subsection{Primal-dual formulation}\label{subsection:primal-dual-2}
	
	With the idea developed in Example~\ref{ex:counter}, we propose an algorithm in Alg.~\ref{i-alg} for identifying a correct  set $\mathbf{I}$ of constraints. 
	Line~\ref{i-alg:initial} initiates  set $\mathbf{I}(t)$ in slot $t$ to be set $\mathbf{I}(t-1)$ of the previous slot.  When a packet arrives at queue $Q_1$ in slot $t$,  Line~\ref{i-alg:p1} adds the corresponding index to  set $\mathbf{I}(t)$.  Line~\ref{i-alg:n} introduces a variable $q_2$ to indicate the available packets at queue~$Q_2$ for coding; precisely, Line~\ref{i-alg:n} sets the value of  variable $q_2$ to be the present arrivals $A_2(t)$ at queue~$Q_2$. Since Line~\ref{i-alg:j-packet}, if $q_2 \neq 0$ (i.e., there is a packet at queue~$Q_2$) and $\mathbf{I}(t)\neq 0$ (i.e., there is a packet at queue~$Q_1$), then Line~\ref{i-alg:remove2} removes index $i^*$ (i.e., the most recent packet in  set $\mathbf{I}(t)$ as in  Line~\ref{i-alg:optimal-i}) from set $\mathbf{I}(t)$ and Line~\ref{i-alg:n-update} removes one packet from queue~$Q_2$.

	\begin{algorithm}[t]
			\small
		\SetAlgoLined 
		\SetKwFunction{Union}{Union}\SetKwFunction{FindCompress}{FindCompress} \SetKwInOut{Input}{input}\SetKwInOut{Output}{output}
		
		
		\tcc{Initialize  set $\mathbf{I}(t)$ at the beginning of slot $1$ as follows:}	
		$\mathbf{I}(t) \leftarrow \emptyset$ for all $t$\;	
		\tcc{For each slot $t=1, 2, \cdots$, perform as follows:}
		
		$\mathbf{I}(t) \leftarrow \mathbf{I}(t-1)$\;    \label{i-alg:initial}
		
		\ForAll{$i$-th packet arriving at queue $Q_1$ in slot $t$}{
			$\mathbf{I}(t) \leftarrow \mathbf{I}(t) \cup \{i\}$\;   \label{i-alg:p1}
		}
		$q_2 \leftarrow A_2(t)$\;\label{i-alg:n}
		\While{$q_2\neq 0$ and $\mathbf{I}(t) \neq \emptyset$\label{i-alg:j-packet}}{
			
			$i^* \leftarrow \max \mathbf{I}(t) $\; \label{i-alg:optimal-i}
			$\mathbf{I}(t) \leftarrow \mathbf{I}(t) - \{i^*\}$\; \label{i-alg:remove2}
			$q_2=q_2-1$\; \label{i-alg:n-update}
			
		}
		
		\caption{Identifying a correct set $\mathbf{I}$ of constraints}
		\label{i-alg}
	\end{algorithm}

	We formulate a linear program subject to the set $\mathbf{I}$ produced by Alg.~\ref{i-alg} as follows.

	\textbf{Linear program (primal program):} 
	\begin{subequations} \label{primal2}
		\begin{align}  
		\min& \hspace{.5cm}C \cdot \sum^{N_1}_{i=1} x_i+ \sum^{\infty}_{t=1} \sum^{N_1}_{i=1} z_i(t) \label{primal2:objective}\\
		\text{s.t.}& \hspace{.5cm}x_i+z_i(t) \geq 1 \text{\,\,for all\,\,} i \in \mathbf{I}(t) \text{\,\,and\,\,}t. \label{primal2:constraint}
		\end{align}
	\end{subequations}
	
	The next theorem establishes that linear program~(\ref{primal2}) can optimally solve Problem~\ref{problem2}.

	\begin{theorem} \label{theorem:optimal i}
		The solution to linear program~(\ref{primal2}) is an optimal solution to Problem~\ref{problem2}.
	\end{theorem}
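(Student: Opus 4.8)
The plan is to prove that the minimum value of linear program~(\ref{primal2}) equals the optimal cost of Problem~\ref{problem2}, and that an integral optimal solution of~(\ref{primal2}) is itself a feasible schedule attaining this cost. The argument will proceed by two matching inequalities, after first pinning down the combinatorial structure of both sides.

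\emph{Reductions and the value of the LP.} First I would record two facts about Problem~\ref{problem2}: in an optimal schedule every uncoded packet of $Q_1$ is sent in the slot it arrives (delaying it only adds holding cost), and every $Q_1$ packet that is not sent uncoded leaves $Q_1$ exactly in the slot it is combined with a packet of $Q_2$. Hence an optimal schedule is specified by a set $S\subseteq\{1,\dots,N_1\}$ of packets sent uncoded together with a matching of the remaining packets to distinct $Q_2$ arrivals, the $i$-th $Q_1$ packet (arriving in slot $a_i$) being matchable only to $Q_2$ arrivals occurring in slots $\ge a_i$. On the LP side, by construction of Alg.~\ref{i-alg} the set $\{t: i\in\mathbf{I}(t)\}$ is the interval $\{a_i,a_i+1,\dots,r_i-1\}$, where $r_i$ is the slot in which index $i$ is deleted in Line~\ref{i-alg:remove2} (put $r_i=\infty$ if it is never deleted). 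Thus~(\ref{primal2}) decomposes into $N_1$ independent elementary programs, the $i$-th being $\min\{Cx_i+\sum_{t=a_i}^{r_i-1}z_i(t): x_i+z_i(t)\ge 1,\ x_i,z_i(t)\ge 0\}$; each is linear in $x_i$, is optimized at $x_i\in\{0,1\}$, and has value $\min(C,\,r_i-a_i)$. Consequently~(\ref{primal2}) has no integrality gap (as in Lemma~\ref{lemma:integrality}) and its minimum value equals $\sum_{i=1}^{N_1}\min(C,\,r_i-a_i)$.

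\emph{The LP value is achievable.} From an integral optimal solution of~(\ref{primal2}) I would build a schedule: send packet $i$ uncoded on arrival whenever $x_i=1$, and otherwise combine it, in slot $r_i$, with the $Q_2$ packet that Alg.~\ref{i-alg} paired with index $i$. The while-loop of Alg.~\ref{i-alg} consumes at most $A_2(t)$ packets of $Q_2$ per slot and deletes an index $i$ only at a slot $\ge a_i$, so these pairings form a valid matching, and restricting it to $\{i:x_i=0\}$ is still valid; since the model imposes no transmission constraint, all coded packets can be sent in their slots. Packet $i$ with $x_i=0$ then sits in $Q_1$ during slots $a_i,\dots,r_i-1$ (holding cost $r_i-a_i$) and packet $i$ with $x_i=1$ costs $C$, so the total is $\sum_i\min(C,r_i-a_i)$; hence the optimal cost of Problem~\ref{problem2} is at most the LP value.

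\emph{No schedule beats the LP value, and the main obstacle.} For the reverse inequality I would fix any schedule, described as above by $(S,\mu)$ with coding slots $d_i$ for $i\notin S$, and bound its cost: $C|S|+\sum_{i\notin S}(d_i-a_i)\ \ge\ CN_1-\sum_{i\notin S}\bigl[C-(d_i-a_i)\bigr]^{+}=CN_1-W(\mu)$, where $W(\mu)$ is the weight of $\mu$ for the edge weights $w(i,\tau)=[C-(\tau-a_i)]^{+}$ on the staircase bipartite graph $\{(i,\tau): a_i\le\tau\}$. Since $w(i,\tau)$ is the positive part of an affine function, decreasing in $\tau$ and increasing in $a_i$, one checks the Monge-type inequality $w(i,\tau')+w(j,\tau)\ge w(i,\tau)+w(j,\tau')$ whenever $a_i<a_j$ and $\tau<\tau'$; a standard exchange argument then shows that a maximum-weight matching is obtained by processing the $Q_2$ arrivals in time order and assigning each to the latest-arrived still-unmatched $Q_1$ packet (and, where possible, using the earliest admissible arrival) -- which is exactly the matching produced by Alg.~\ref{i-alg}, the rule isolated by Example~\ref{ex:counter} (any extra zero-weight pairs Alg.~\ref{i-alg} makes do not change the weight). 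Its weight equals $CN_1-\sum_i\min(C,r_i-a_i)$, so $W(\mu)\le CN_1-\sum_i\min(C,r_i-a_i)$ and the schedule's cost is at least $\sum_i\min(C,r_i-a_i)$. Combined with the previous paragraph this forces equality, and the integral optimal solution of~(\ref{primal2}) exhibited above is an optimal schedule for Problem~\ref{problem2}. The hard part is precisely this last lower bound: proving that the LIFO deletion rule of Alg.~\ref{i-alg} realizes a maximum-weight matching, which requires carefully combining the Monge structure of the truncated weights with the ``downward-closed in $\tau$'' structure of the bipartite graph and verifying the greedy is optimal on both axes (which opportunity, and which packet).
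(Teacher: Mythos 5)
Your proof is correct in outline but takes a genuinely different route from the paper's. The paper proceeds by induction on $(N_1,N_2)$: it writes $\mathrm{OPT}$ of Problem~\ref{problem2} as a minimum over which $Q_1$ packet is coded with the first $Q_2$ arrival (Eq.~(\ref{eq:opt})), uses Proposition~\ref{lemma:optimal-value} (the same per-packet LP value $\min\{\beta_i-\alpha_i+1,C\}$ that you derive) to express every candidate, and reduces the whole claim to the inequality $\sum_j\min\{a_j,C\}\le\sum_j\min\{b_j,C\}$ for two interleaved sequences with equal sums, which it verifies by an exhaustive case analysis. You instead recast the offline problem as a maximum-weight bipartite matching with truncated-linear weights $[C-(\tau-a_i)]^{+}$ and reduce optimality of linear program~(\ref{primal2}) to the claim that the LIFO rule of Alg.~\ref{i-alg} produces a maximum-weight matching. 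Your reduction of schedules to (uncoded set, matching) pairs, your identification of $\{t:i\in\mathbf{I}(t)\}$ as the interval $[a_i,r_i-1]$, the resulting LP value $\sum_i\min(C,r_i-a_i)$, the achievability direction, and the identity $\mathrm{cost}=CN_1-W(\mu)$ are all sound, and your Monge inequality is exactly the concavity of $y\mapsto\min\{y,C\}$ that also powers the paper's interleaved-sequence inequality. What your formulation buys is a non-inductive, structurally transparent explanation of why the LIFO deletion rule is the \emph{correct} set $\mathbf{I}$ (it is the greedy for a Monge matching problem); what the paper's induction buys is that it never has to formalize the matching reduction or the exchange argument on the two-dimensional staircase. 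Be aware that the step you flag as "the hard part" --- the full exchange argument showing the greedy is optimal both in which packet it picks and which $Q_2$ arrivals it consumes --- is precisely where the paper concentrates its effort (the multi-case verification of $\sum_j\min\{a_j,C\}\le\sum_j\min\{b_j,C\}$), so a complete write-up of your route would need to carry that argument out rather than cite it as standard; as sketched it is true and provable, so this is a matter of completeness rather than a flaw in the approach.
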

	\begin{proof}
		We prove by induction. See Appendix~\ref{appendix:theorem:optimal i} for details.
	\end{proof}
	
	The dual program of primal program (\ref{primal2}) is following. 
	
	\textbf{Dual program:}
	\begin{subequations} \label{dual2}
		\begin{eqnarray} 
		&\max&\sum^{\infty}_{t=1} \sum_{i\in \mathbf{I}(t)} w_i(t) \label{dual2:objective}\\
		&\text{s.t.}& \sum_{t: i \in \mathbf{I}(t)} w_i(t) \leq C \text{\,\,\,for all\,\,} i;\\
		&& 0 \leq w_i(t) \leq 1  \text{\,\,\,for all\,\,} i \text{\,\,and\,\,} t. 
		\end{eqnarray}
	\end{subequations}

	\subsection{Primal-dual  algorithm}\label{subsection:prima-dual-alg2}
	
	Note that Alg.~\ref{i-alg} can \textit{learn}   a correct set $\mathbf{I}(t)$ of constraints for each slot $t$ in the online fashion. Leveraging the online feature, we  develop a  primal-dual  algorithm in Alg.~\ref{pda2} for solving Problem~\ref{problem2} in the online fashion. For each slot $t$, Alg.~\ref{pda2} updates those $x_i$'s   in the set $\mathbf{I}(t)$ in  Lines~\ref{pda2:start} - \ref{pda2:end}. The updating process is similar to that in Alg.~\ref{pda1}.

	\begin{algorithm}[t]
			\small
		\SetAlgoLined 
		\SetKwFunction{Union}{Union}\SetKwFunction{FindCompress}{FindCompress} \SetKwInOut{Input}{input}\SetKwInOut{Output}{output}
		
		
		\tcc{Initialize all variables at the beginning of slot $1$ as follows:}
		
		$x_i$, $z_i(t)$,  $w_i(t)$ $\leftarrow 0$ for all $i$ and $t$\;
		$\theta \leftarrow (1+\frac{1}{C})^{\lfloor C \rfloor}-1$ \;
		$\mathbf{I}(t) \leftarrow \emptyset$ for all $t$\;	
		\tcc{For each new slot $t=1, 2, \cdots$, the variables are updated as follows:}

		$\mathbf{I}(t) \leftarrow \mathbf{I}(t-1)$\;    \label{pda2:i-alg:initial}
		
		\ForAll{$i$-th packet arriving at queue $Q_1$ in slot $t$}{
			$\mathbf{I}(t) \leftarrow \mathbf{I}(t) \cup \{i\}$\;   \label{pda2:i-alg:p1}
		}
		
		$q_2 \leftarrow A_2(t)$\;\label{pad2:n}

		\While{$q_2\neq 0$ and $\mathbf{I}(t) \neq \emptyset$\label{pda2:n-update-start}}{
			$i^* \leftarrow \max \mathbf{I}(t) $\; \label{pda2:i-alg:optimal-i}
			$\mathbf{I}(t) \leftarrow \mathbf{I}(t) - \{i^*\}$\; \label{pda2:i-alg:remove2}
			$q_2=q_2-1$\; \label{pda2:n-update}
			
		}\label{pda2:n-end}

		\ForAll{$i \in \mathbf{I}(t)$}{ \label{pda2:start}
			
			\If{ $x_i <1$}{     
				
				$z_i(t) \leftarrow 1-x_i$\;
				
				$x_i \leftarrow x_i(1+\frac{1}{C})+ \frac{1}{\theta C}$\; \label{pda2:xi}  
				
				$w_i(t) \leftarrow 1$\;

			} 
		}\label{pda2:end}

		\caption{Primal-dual   algorithm for solving primal program~(\ref{primal2}) and dual program~(\ref{dual2})}
		\label{pda2}
	\end{algorithm}

	Using the  same arguments as those in the proofs of Lemmas~\ref{lemma:feasible-primal1} and \ref{lemma:dual}, the next  lemma  establishes the feasibility of the solution produced by Alg.~\ref{pda2}. 
	
	\begin{lemma}
		Alg.~\ref{pda2} produces  a feasible  solution to  primal program~(\ref{primal2}) and dual program~(\ref{dual2}). 
	\end{lemma}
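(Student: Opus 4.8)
The plan is to prove the two feasibility statements separately, in each case reusing almost verbatim the corresponding argument for Alg.~\ref{pda1} --- the proof of Lemma~\ref{lemma:feasible-primal1} for the primal and the proof of Lemma~\ref{lemma:dual} for the dual --- and to supply the one genuinely new ingredient, namely the bookkeeping of how an index $i$ enters and leaves the sets $\mathbf{I}(t)$ that Alg.~\ref{pda2} maintains (through the subroutine embedded from Alg.~\ref{i-alg}).

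\textbf{Primal feasibility.} First I would fix a slot $t$ and an index $i\in\mathbf{I}(t)$ and show $x_i+z_i(t)\geq 1$. When $i$ is processed in the loop of Lines~\ref{pda2:start}--\ref{pda2:end}, either $x_i\geq 1$ already, in which case the constraint in Eq.~(\ref{primal2:constraint}) holds trivially and $z_i(t)$ stays $0$; or $x_i<1$, in which case the algorithm first sets $z_i(t)\leftarrow 1-x_i$ and then increases $x_i$ in Line~\ref{pda2:xi}, so the updated values obey $x_i+z_i(t)=x_i^{\mathrm{new}}+(1-x_i^{\mathrm{old}})\geq 1$ because $x_i$ is monotone non-decreasing within a slot. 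Nonnegativity is immediate: every $x_i$ starts at $0$ and is only ever increased, and $z_i(t)$ is either $0$ or $1-x_i$ with $x_i<1$. Indices outside $\mathbf{I}(t)$ impose no constraint, so this suffices.

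\textbf{Dual feasibility.} The box constraints $0\leq w_i(t)\leq 1$ follow from the initialization and from the single assignment $w_i(t)\leftarrow 1$. For the budget $\sum_{t:\,i\in\mathbf{I}(t)} w_i(t)\leq C$, I would first record the structural fact that, in Alg.~\ref{i-alg}, once an index is deleted from $\mathbf{I}(\cdot)$ it never reappears: $\mathbf{I}(t)$ is initialized to $\mathbf{I}(t-1)$, only the (strictly larger) indices of packets arriving in slot $t$ are added, and Lines~\ref{i-alg:optimal-i}--\ref{i-alg:remove2} only remove the current maximum. Hence the slots with $i\in\mathbf{I}(t)$ form one contiguous block, and within it $i$ is updated in Line~\ref{pda2:xi} in every slot until $x_i$ reaches $1$. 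Therefore the increments of $x_i$ along its update slots form a single geometric sequence with first term $\frac{1}{\theta C}$ and ratio $1+\frac1C$, exactly as in the proof of Lemma~\ref{lemma:dual}; after $\lfloor C\rfloor$ such updates $x_i=\frac{(1+1/C)^{\lfloor C\rfloor}-1}{\theta}=1$, so $w_i(t)=1$ in at most $\lfloor C\rfloor\leq C$ slots, and summing over those slots gives the bound.

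\textbf{Main obstacle.} The only part not already covered by Lemmas~\ref{lemma:feasible-primal1} and~\ref{lemma:dual} is the ``never reappears'' property of $\mathbf{I}(t)$, which is precisely what lets the per-packet geometric analysis be applied one packet at a time without interruption. I expect this to be the crux; once it is isolated as a short observation about Alg.~\ref{i-alg} (it holds because $Q_1$-packets receive indices in increasing order of arrival and only the running maximum is ever deleted), the rest is a direct transcription of the single-packet arguments already in the paper.
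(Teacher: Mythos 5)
Your proof is correct and follows essentially the same route as the paper, which simply invokes the arguments of Lemmas~\ref{lemma:feasible-primal1} and~\ref{lemma:dual}; your per-packet primal argument and the geometric-increment bound on the number of slots with $w_i(t)=1$ are exactly the intended transcription. The ``never reappears'' observation about $\mathbf{I}(t)$ is true and worth stating, though for the dual budget only the total count of updates to $x_i$ (at most $\lfloor C\rfloor$) matters, not the contiguity of those slots.
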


	Similar to Theorem~\ref{theroem:competitive-ratio1}, the next theorem shows that Alg.~\ref{pda2} can achieve the same competitive ratio as Alg.~\ref{pda1} does. 
	
	\begin{theorem} \label{theorem:comp-ratio-two-side}
		Let $OPT_{(\ref{primal2})}(\mathbf{A})$ be the minimum objective value in linear program~(\ref{primal2}). Then, 
		the primal objective value in Eq.~(\ref{primal2:objective}) computed by Alg.~\ref{pda2} is bounded above by
		\begin{align*}
		(1+\frac{1}{(1+\frac{1}{C})^{\lfloor C \rfloor}-1}) OPT_{(\ref{primal2})}(\mathbf{A}),
		\end{align*}
		for all possible arrival patterns $\mathbf{A}$.	
	\end{theorem}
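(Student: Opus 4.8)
The plan is to mirror the analysis behind Theorem~\ref{theroem:competitive-ratio1}, now carried out with the per-packet variables $x_i,z_i(t),w_i(t)$ that Alg.~\ref{pda2} maintains over the dynamically changing index set $\mathbf{I}(t)$. The preceding lemma already gives that Alg.~\ref{pda2} outputs feasible solutions to primal program~(\ref{primal2}) and dual program~(\ref{dual2}); in particular the primal value $\mathscr{P}$ it computes is finite, since for each index $i$ the quantity $x_i$ is touched at most $\lfloor C\rfloor$ times (after $\lfloor C\rfloor$ updates of the form $x_i\leftarrow x_i(1+\tfrac1C)+\tfrac{1}{\theta C}$ one has $x_i=\tfrac{(1+1/C)^{\lfloor C\rfloor}-1}{\theta}=1$, after which the guard $x_i<1$ fails forever), whence $\sum_t z_i(t)<\infty$ for every $i$.

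First I would do a slot-by-slot comparison of increments. Fix a slot $t$, and let $\Delta\mathscr{P}(t)$ and $\Delta\mathscr{D}(t)$ denote the increments in slot $t$ of the primal objective~(\ref{primal2:objective}) and the dual objective~(\ref{dual2:objective}). Only indices $i\in\mathbf{I}(t)$ with $\widehat{x}_i(t)<1$ contribute, and for each such $i$ the algorithm sets $z_i(t)=1-\widehat{x}_i(t)$, increments $x_i$ by $\Delta\widetilde{x}_i(t)=\tfrac{\widehat{x}_i(t)}{C}+\tfrac{1}{\theta C}$, and sets $w_i(t)=1$; hence the contribution of index $i$ is
\[
C\,\Delta\widetilde{x}_i(t)+z_i(t)=\Bigl(\widehat{x}_i(t)+\tfrac1\theta\Bigr)+\bigl(1-\widehat{x}_i(t)\bigr)=1+\tfrac1\theta
\]
to $\Delta\mathscr{P}(t)$ and exactly $w_i(t)=1$ to $\Delta\mathscr{D}(t)$. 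Indices with $\widehat{x}_i(t)\ge1$ or $i\notin\mathbf{I}(t)$ add nothing to either side, because then $z_i(t)=0$ and $w_i(t)=0$. Summing over $i$ yields $\Delta\mathscr{P}(t)\le(1+\tfrac1\theta)\,\Delta\mathscr{D}(t)$ for every $t$, and summing over all $t$ gives $\mathscr{P}=\sum_t\Delta\mathscr{P}(t)\le(1+\tfrac1\theta)\sum_t\Delta\mathscr{D}(t)=(1+\tfrac1\theta)\mathscr{D}$; weak LP duality for the pair~(\ref{primal2})--(\ref{dual2}) then bounds the feasible dual value $\mathscr{D}$ by $OPT_{(\ref{primal2})}(\mathbf{A})$, producing the claimed inequality once $\theta=(1+\tfrac1C)^{\lfloor C\rfloor}-1$ is substituted.

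The main obstacle is the bookkeeping around $\mathbf{I}(t)$. In contrast to Alg.~\ref{pda1}, here $\mathbf{I}(t)$ both grows (new arrivals at $Q_1$) and shrinks (the $i^\ast=\max\mathbf{I}(t)$ removals when packets reach $Q_2$), and a given $x_i$ may be updated only in non-consecutive slots. I must verify that (i) an index removed from $\mathbf{I}$ never re-enters it, so that the count of at most $\lfloor C\rfloor$ updates per $x_i$, and hence the geometric-sum evaluations above together with dual feasibility $\sum_{t:\,i\in\mathbf{I}(t)}w_i(t)\le\lfloor C\rfloor\le C$, remain valid; and (ii) the per-slot identity $C\,\Delta\widetilde{x}_i(t)+z_i(t)=1+\tfrac1\theta$ is insensitive to which slots were skipped for index $i$, which holds because the recursion for $x_i$ depends only on its own current value and not on $t$. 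These checks are routine once the structure is laid out, so I would push the detailed verification of the per-slot inequality to an appendix, exactly as Appendix~\ref{appendix:theroem:competitive-ratio1} does for Theorem~\ref{theroem:competitive-ratio1}. (To turn this LP bound into a competitive ratio for Problem~\ref{problem2} one additionally invokes Theorem~\ref{theorem:optimal i}, but that is not needed for the statement proved here.)
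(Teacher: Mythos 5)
Your proposal is correct and follows essentially the same route as the paper's proof: a per-slot increment comparison showing each active index $i\in\mathbf{I}(t)$ with $\widehat{x}_i(t)<1$ contributes exactly $1+\frac{1}{\theta}$ to the primal increment and $1$ to the dual increment, followed by summation over $t$ and weak duality. Your extra bookkeeping remarks about $\mathbf{I}(t)$ growing and shrinking are sound and, if anything, slightly more careful than what the paper writes out.
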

	\begin{proof}
		See Appendix~\ref{appendix:theorem:comp-ratio-two-side}.
	\end{proof}

	Then,  similar to Alg.~\ref{online-alg1}, we can transform the solution produced by Alg.~\ref{pda2} to a randomized online scheduling algorithm for managing the delay-award coding decision at queue~$Q_1$. In particular, the scheduling algorithm can also achieve the same  expected competitive ratio as that in Theorem~\ref{theorem:expected-competitive ratio1},  approaching $\frac{e}{e-1}$ when cost $C$ is large enough.

	\subsection{Scheduling both queues} \label{subsection:both-queue}
	This section extends Alg.~\ref{pda2} to the case when both queues $Q_1$ and $Q_2$ can wait for each other. In this context, we propose a \textit{waiting-coding queueing system}  consisting of a waiting queue $Q_w$ and a coding  queue $Q_c$ at the relay.  While queue~$Q_w$ stores those packets that can wait for coding,  queue~$Q_c$ stores  those packets that can find coding pairs at the waiting queue immediately upon arrival.

	Precisely, let $Q_w(t)$ and $Q_{c}(t)$ be the number of packets at queue~$Q_w$ and queue~$Q_c$, respectively, at the \textit{end} of slot~$t$. If the $Q_w(t-1)$ packets at queue~$Q_w$ belong to queue~$Q_1$, then  the $A_1(t)$ (i.e., the number of packets arriving at the original queue $Q_1$) new  arriving packets  enter queue~$Q_w$ at the beginning of slot $t$， and
	\begin{enumerate}
		\item if $Q_w(t-1)+A_1(t) \geq A_2(t)$, then  the $A_2(t)$ new arriving packets enter queue~$Q_c$ at the beginning of slot $t$;
		\item if $Q_w(t-1)+A_1(t) < A_2(t)$, then only $Q_w(t-1)+A_1(t)$ out of the $A_2(t)$ new arriving packets enter queue~$Q_c$ at the beginning of slot $t$, but the remaining $A_2(t)-(Q_w(t)+A_1(t))$ packets enter queue~$Q_w$ at the beginning of slot $t$. 
	\end{enumerate} 
	In contrast, if the $Q_w(t-1)$ packets at queue~$Q_w$ belong to queue~$Q_2$, then the waiting-coding queueing system operates in the opposite way. In other words, while  packets entering queue~$Q_c$ are transmitted (with coding) immediately upon arrival,  packets entering queue~$Q_w$ need scheduling decisions. With the transformation, the waiting-coding queueing system becomes the previously discussed  model where only packets at queue~$Q_w$ can wait for coding. Thus, the randomized online scheduling algorithm associated with Alg.~\ref{pda2} can apply to the waiting-coding queueing system with the expected competitive ratio in Theorem~\ref{theorem:expected-competitive ratio1}.  Furthermore, Section~\ref{section:simulation} will demonstrate the superiority of the proposed scheduling algorithm and the proposed waiting-coding queueing system via computer simulations. 
	


	\subsection{A transmission constraint} \label{subsection:constraint}
	Recall that Alg.~\ref{pda2} might transmit more than one packet in a slot (but less than three uncoded packets, as shown in Lemma~\ref{lemma:number-of-tx}). This Section considers a transmit constraint: the relay can transmit at most one packet in each slot. According to Section~\ref{subsection:both-queue}, we can focus on scheduling  packets at queue~$Q_1$ while  all  packets at queue~$Q_2$ are transmitted immediately upon arrival. 
	
	Under the transmission  constraint,  if more than one packet arrive at a queue, then those additional packets (except for  one of them) cannot be processed in the arriving slot for any scheduling algorithm. Thus, without loss of generality, we can  further assume that at most one packet can arrive at each queue in each slot. If more than one packet arrives at a queue, we can just move them to the following slots, so that at most one packet arrives at that queue. With that assumption, we analyze the number of \textit{uncoded} packets  required by the randomized online scheduling algorithms (like Alg.~\ref{online-alg1}) associated with  Alg.~\ref{pda2}: following the proof of Lemma~\ref{lemma:number-of-tx}, the number of uncoded packets transmitted in slot~$t$ is
	\begin{align*}
	\sum_{i \in \mathbf{I}(t)}\Delta \widetilde{x}_i(t) \leq \sum^{\lfloor C \rfloor}_{j=1} \frac{1}{\theta \cdot C}(1+\frac{1}{C})^j=1,
	\end{align*}
	where the inequality is because: (1) at most $\lfloor C \rfloor$ packets (as in the proof of Lemma~\ref{lemma:dual}) in set~$\mathbf{I}(t)$ that can be updated by Line~\ref{pda2:xi} of Alg.~\ref{pda2} in slot~$t$; (2) the $j$-th most recent packet in set~$\mathbf{I}(t)$ has been updated by Line~\ref{pda2:xi} of Alg.~\ref{pda2} for at least $j$ times since its arrival; (3) the value of $\Delta \widetilde{x}_i(t)$ is  $\frac{1}{\theta \cdot C}(1+\frac{1}{C})^j$ if the $i$-th packet is updated by Line~\ref{pda2:xi} of Alg.~\ref{pda2} for $j$ times. 
	
	We emphasize that, by the above analysis, the randomized online scheduling algorithm might  need two transmissions in a slot, i.e., one potential coded packet plus one potential uncoded packet. To make the randomized online scheduling algorithm perform under the constraint of at most one transmission, we revise Alg.~\ref{pda2} as follows: the updates in Lines~\ref{pda2:start} - \ref{pda2:end} perform only when no packet arrives at queue~$Q_2$. That is because, if a packet arrives at queue~$Q_2$, the relay has to transmit a coded packet; thus, stop updating those variables for transmitting an uncoded packet. Following the line in \cite[Theorem~5]{tseng2019online}, the randomized online scheduling algorithm associated with the revised Alg.~\ref{pda2} can also achieve the same expected competitive ratio of $\frac{e}{e-1}$ when cost~$C$ is large enough. Moreover, Section~\ref{section:simulation} will  validate  the revised randomized online scheduling algorithm  via computer simulations. 
	
	\begin{remark}\label{remark:on-off-channel}
		We remark that the revised randomized online scheduling algorithm can also solve the adversarial ON-OFF channel, also by stopping updating when the channel is OFF. 
	\end{remark}

	\section{Numerical studies}\label{section:simulation}
	We have analyzed the proposed randomized online scheduling algorithm in the worst-case scenario; in contrast, we investigate the proposed  algorithm  in  the average-case  scenario by computer simulations in this section. 
	
	First, we simulate  a single-relay network (as in Fig.~\ref{fig:network}-(a)) where  packets arrive at queues~$Q_1$ and~$Q_2$ according to  the i.i.d.  Bernoulli distributions with means  $\max\{\min\{P_1,1\},0\}$ and $\max\{\min\{P_2,1\},0\}$, respectively, where $P_1$ and $P_2$ are the Gaussian random variables (for adding some noises to the Bernoulli arrivals) with means $p_1$ and $p_2$, respectively, and variance $\sigma^2$. Moreover, the relay can transmit at most one packet for each slot. We compare the proposed scheduling algorithm (i.e., the randomized online scheduling algorithm associated with Alg.~\ref{pda2} along with the waiting-coding queueing system in Section~\ref{subsection:both-queue} and the stopping mechanism in Section~\ref{subsection:constraint}) with  \textit{threshold-type scheduling algorithms}, where the relay transmits an uncoded packet in a slot if (in the original queueing system) a queue is empty and the non-empty queue size is over its threshold in that slot.  The \textit{optimized-threshold scheduling algorithm} was proposed in \cite{hsu2014opportunities}  for minimizing the long-run average cost in the stochastic environment. However, deriving an optimal threshold for each queue needs the statistics  $p_1$ and $p_2$, i.e., the optimized-threshold scheduling algorithm is an offline scheduling algorithm. Fig.~\ref{fig:sim-cost} displays the ratio between the total  cost (in 10,000 slots) incurred by the proposed scheduling algorithm and that incurred by the optimized-threshold scheduling algorithm.  We can observe  that the ratio for the proposed scheduling algorithm is at most 1.35 (in Fig.~\ref{fig:sim-cost}-(a) when $p_2=0.1$ and $C=10$).  That is, the proposed algorithm performs much better than what we analyzed in the worst-case scenario (with the expected competitive ratio of $\frac{e}{e-1}\approx1.58$).   In addition, Fig.~\ref{fig:sim-cost} also displays the ratio between the total cost incurred by the \textit{$C$-threshold scheduling algorithm} and that incurred by the optimized-threshold scheduling algorithm, where the $C$-threshold is an online scheduling algorithm with the constant threshold $C$ and was analyzed in \cite{ciftcioglu2011cost}. According to Fig.~\ref{fig:sim-cost}, our algorithm significantly outperforms the $C$-threshold scheduling algorithm.  Moreover, We can observe that the ratio (for a fixed $p_2$ and a fixed $C$) decreases as the variance increases. That is because the ratio in Fig.~\ref{fig:sim-cost}-(a) decreases when the expected arrival rate $p_2$ at queue $Q_2$ moves toward 0.5  and  the expected arrival rate at queue~$Q_2$ in  Figs.~\ref{fig:sim-cost}-(b) and~\ref{fig:sim-cost}-(c) (i.e., $E[\max\{\min\{P_2,1\},0\}]$) moves toward 0.5 (because of the truncation of the Gaussian variable $P_2$ to 0 and 1) when the variance increases.

	
	\begin{figure}[!t]
		\begin{minipage}{.33\textwidth}
			\centering
			\includegraphics[width=\textwidth]{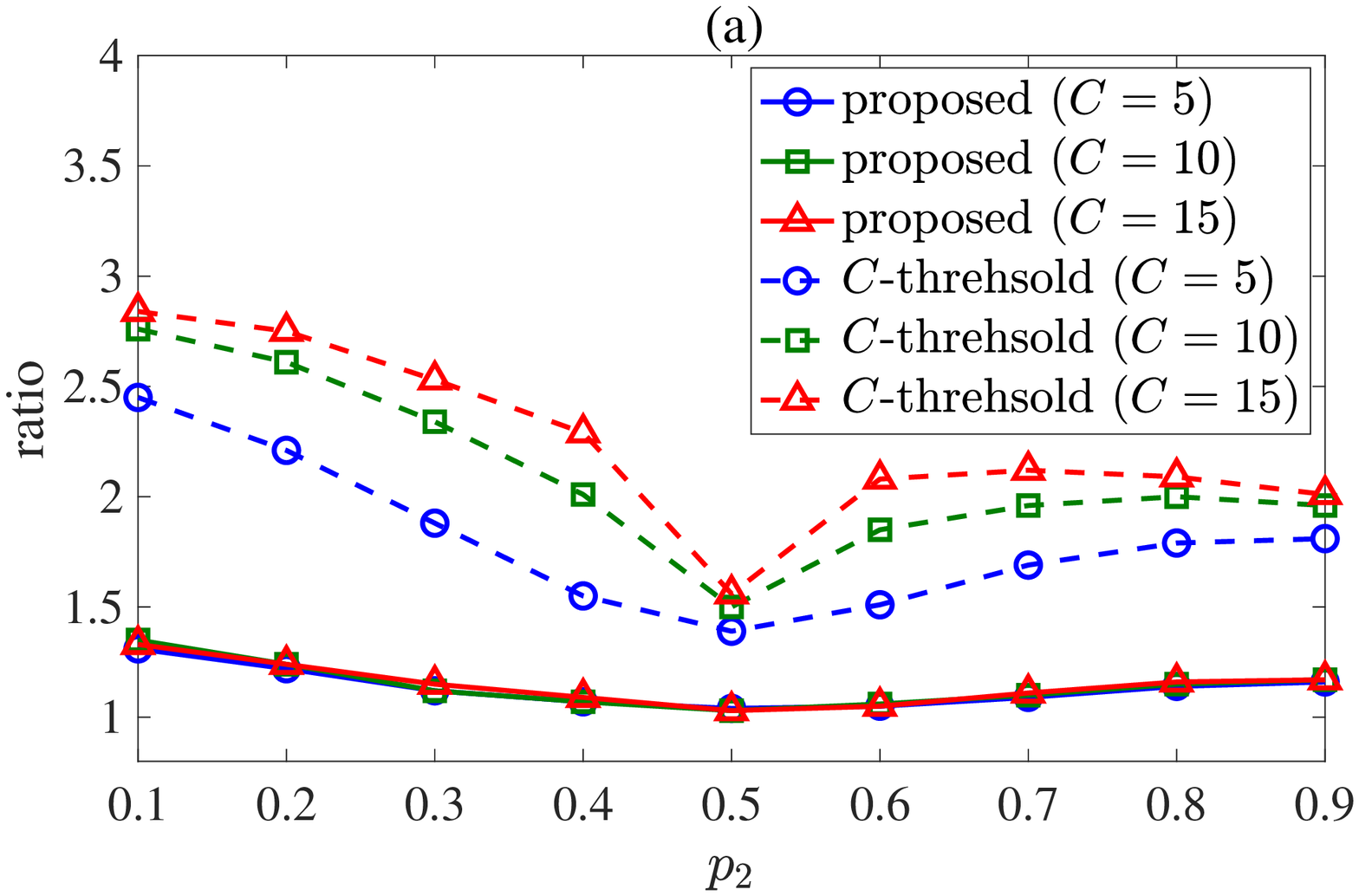}
		\end{minipage}\hfill
		\begin{minipage}{.33\textwidth}
			\centering
			\includegraphics[width=\textwidth]{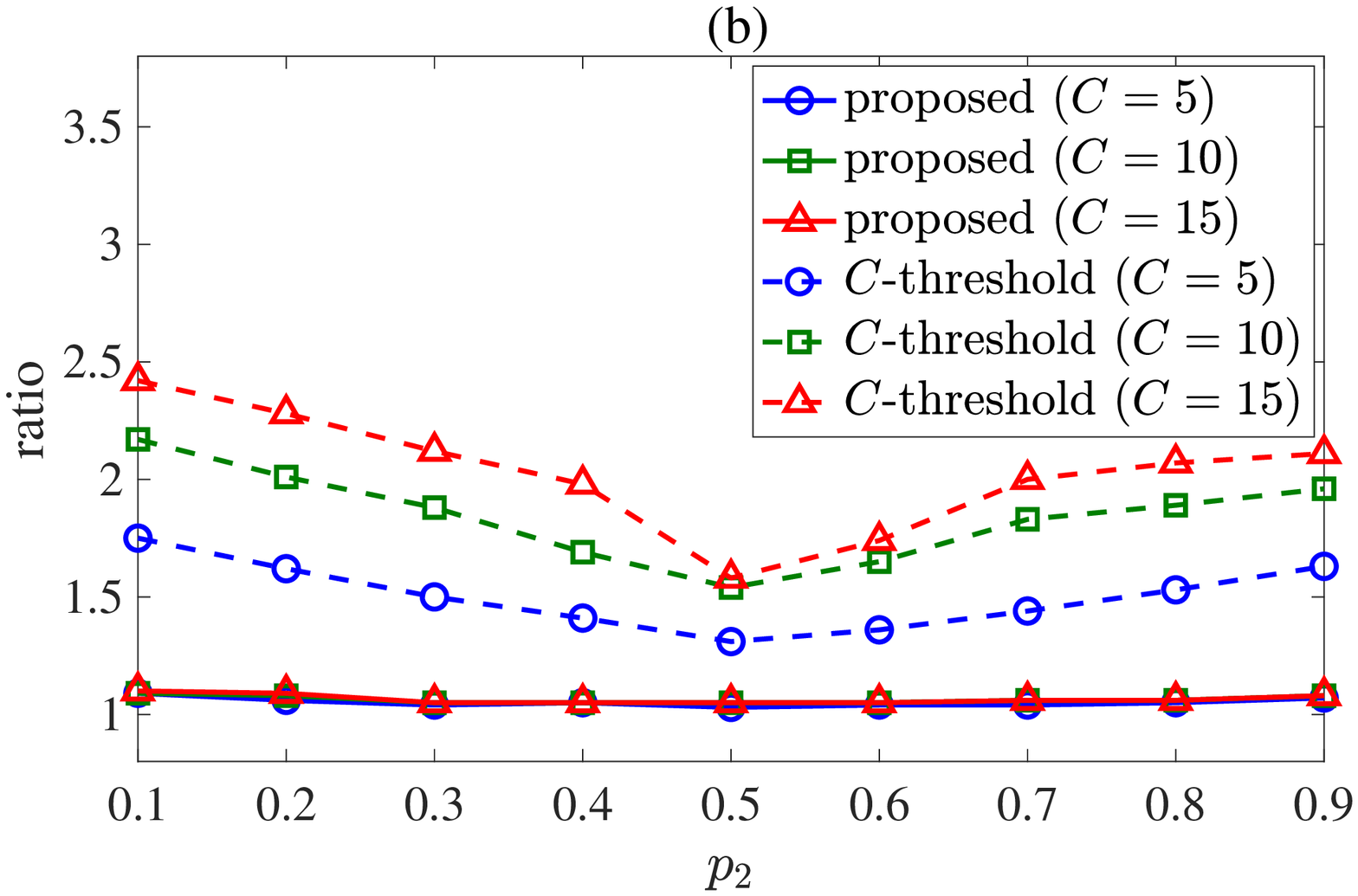}
		\end{minipage} \hfill
		\begin{minipage}{.33\textwidth}
			\centering
			\includegraphics[width=\textwidth]{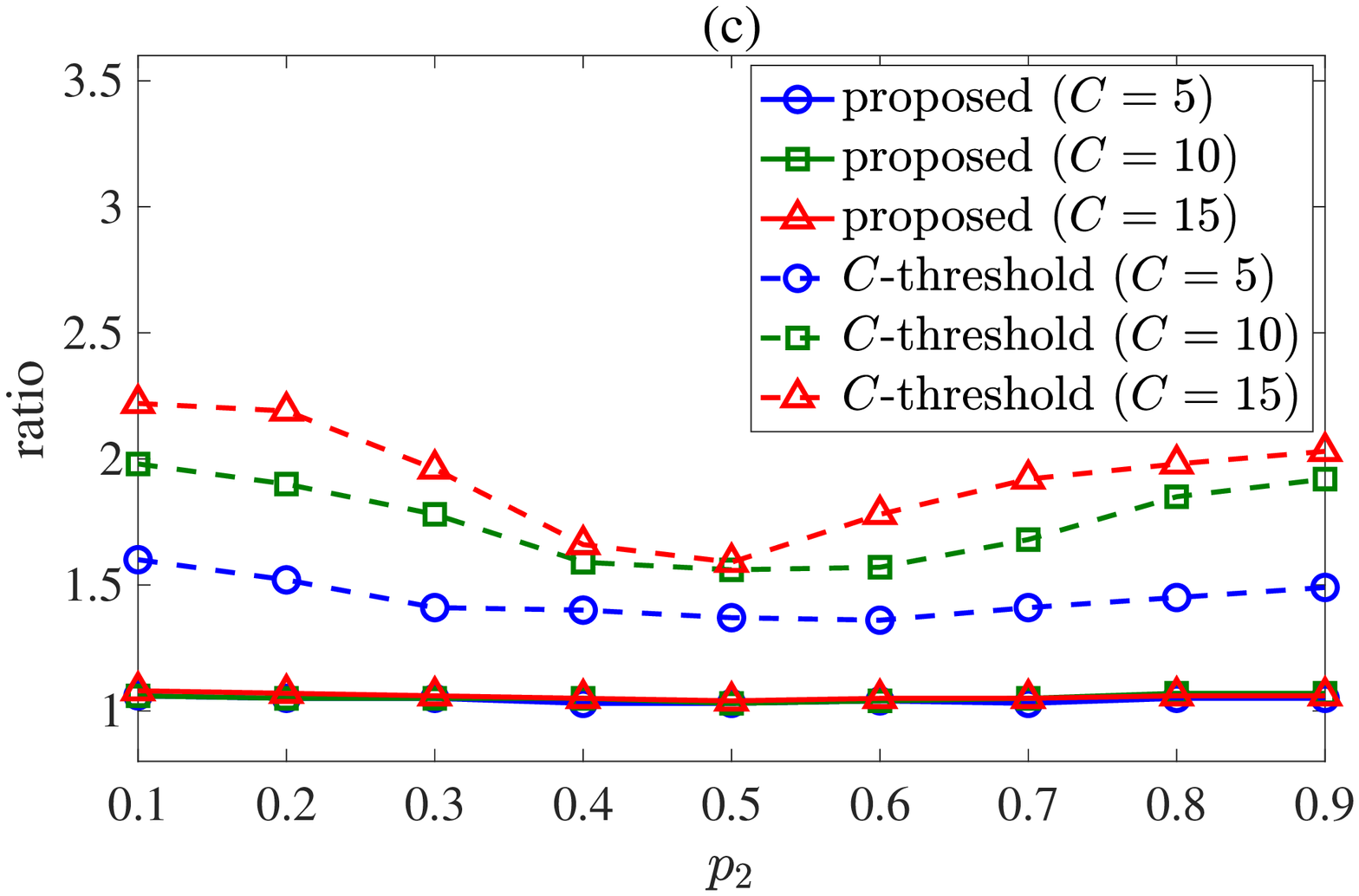}
		\end{minipage}\hfill
		\caption{Ratio versus  $p_2$ (fixed $p_1=0.5$) in the single-relay network: (a) $\sigma^2=0$; (b) $\sigma^2=1$; (c) $\sigma^2=2$.}
		\label{fig:sim-cost}
	\end{figure}

Second, we investigate the coding overheads incurred by the three scheduling algorithms.  Fig~\ref{fig:sim-coding} displays the number of coded packets when $\sigma^2=0$ (i.e., for the case in Fig.~\ref{fig:sim-cost}-(a)). We can observe that while the proposed scheduling algorithm yields less coded packets than the optimized-threshold scheduling algorithm,  the $C$-threshold scheduling algorithm yields more than that. That is, while the proposed scheduling algorithm is a little conservative (in waiting for coding), the $C$-threshold scheduling algorithm waits too long. That is why the proposed scheduling algorithm and the $C$-threshold scheduling algorithm cannot minimize the total cost.

	\begin{figure}[!t]
	\begin{minipage}{.33\textwidth}
		\centering
		\includegraphics[width=\textwidth]{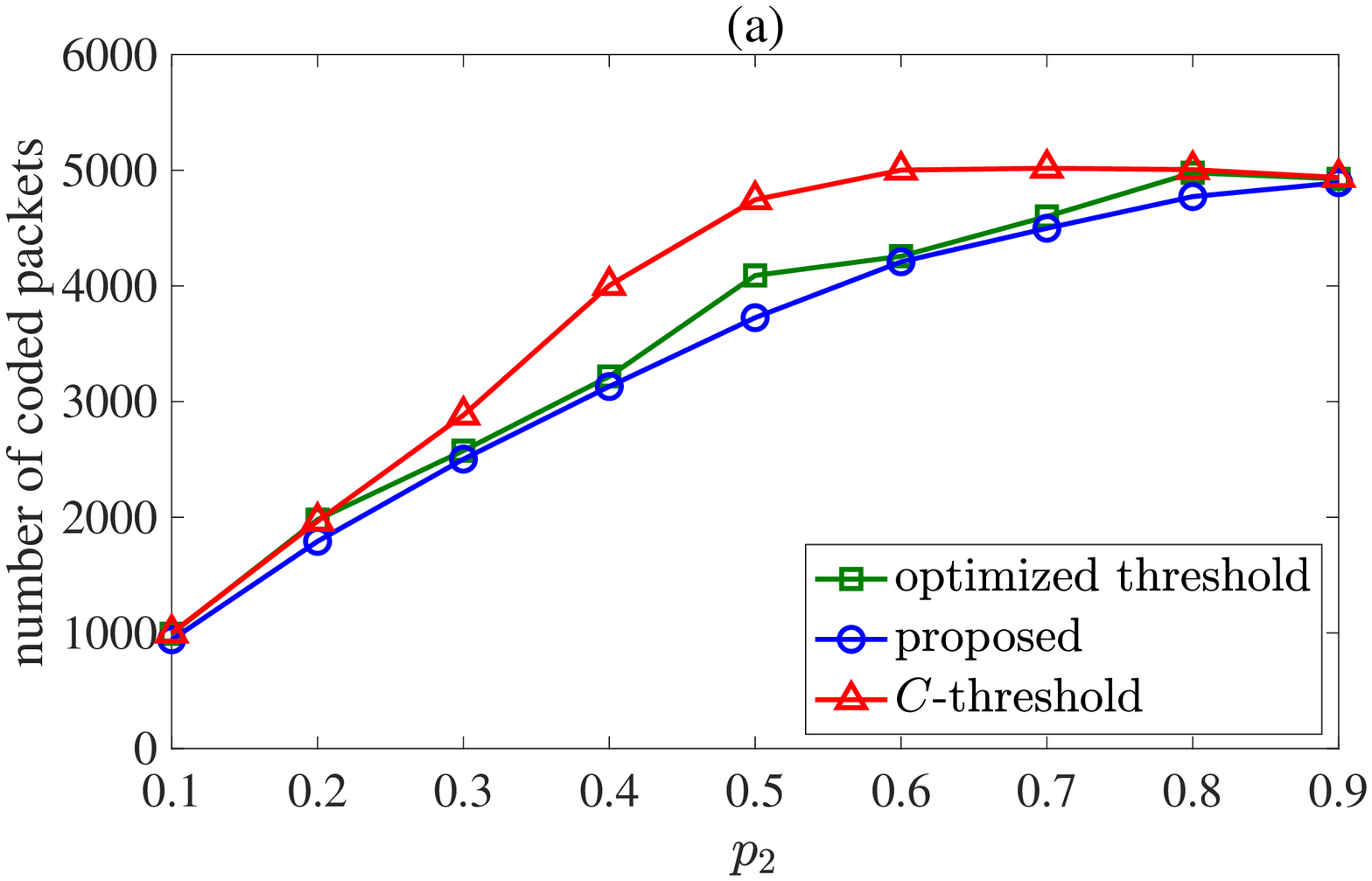}
	\end{minipage}\hfill
	\begin{minipage}{.33\textwidth}
		\centering
		\includegraphics[width=\textwidth]{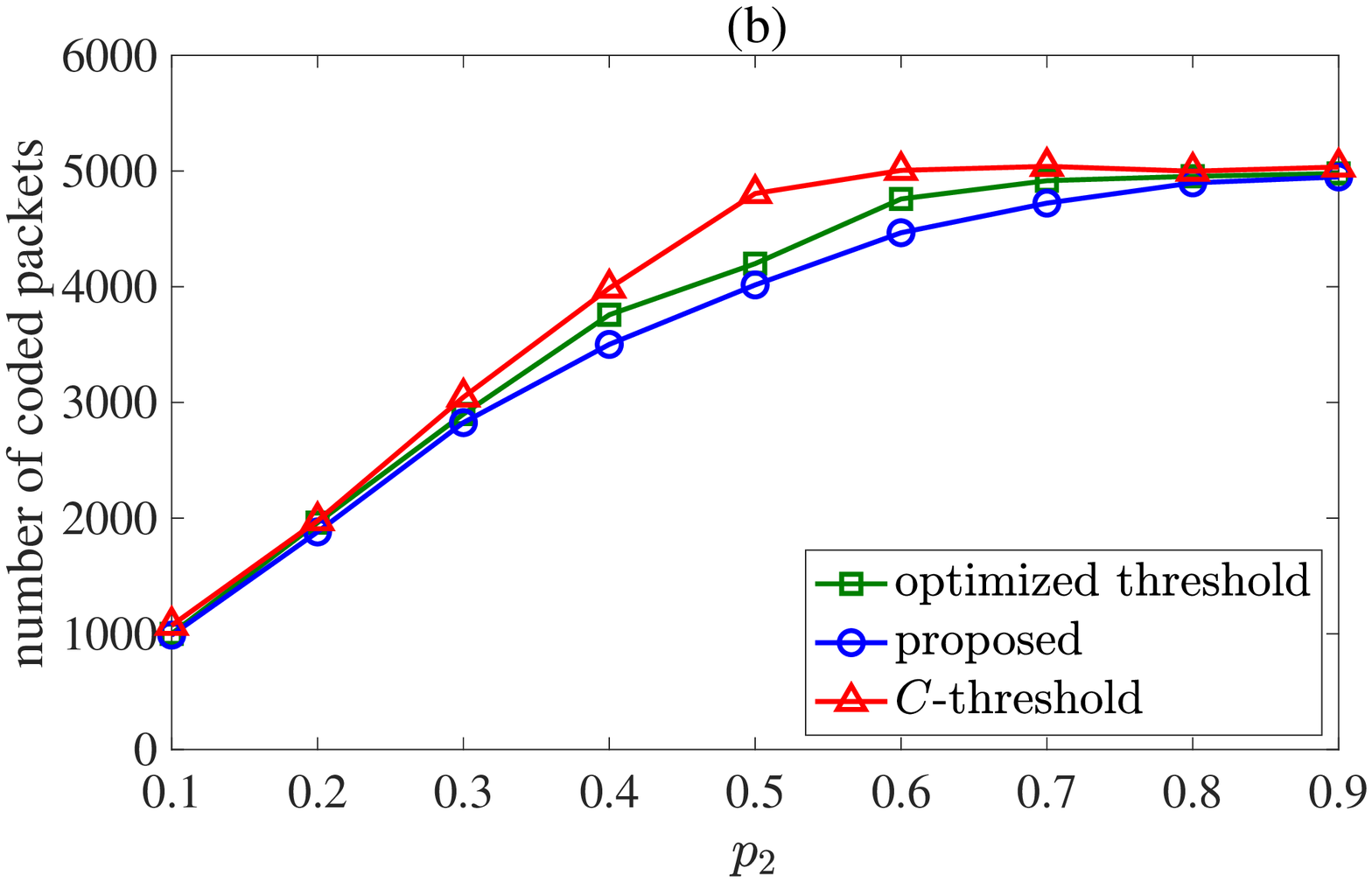}
	\end{minipage} \hfill
	\begin{minipage}{.33\textwidth}
		\centering
		\includegraphics[width=\textwidth]{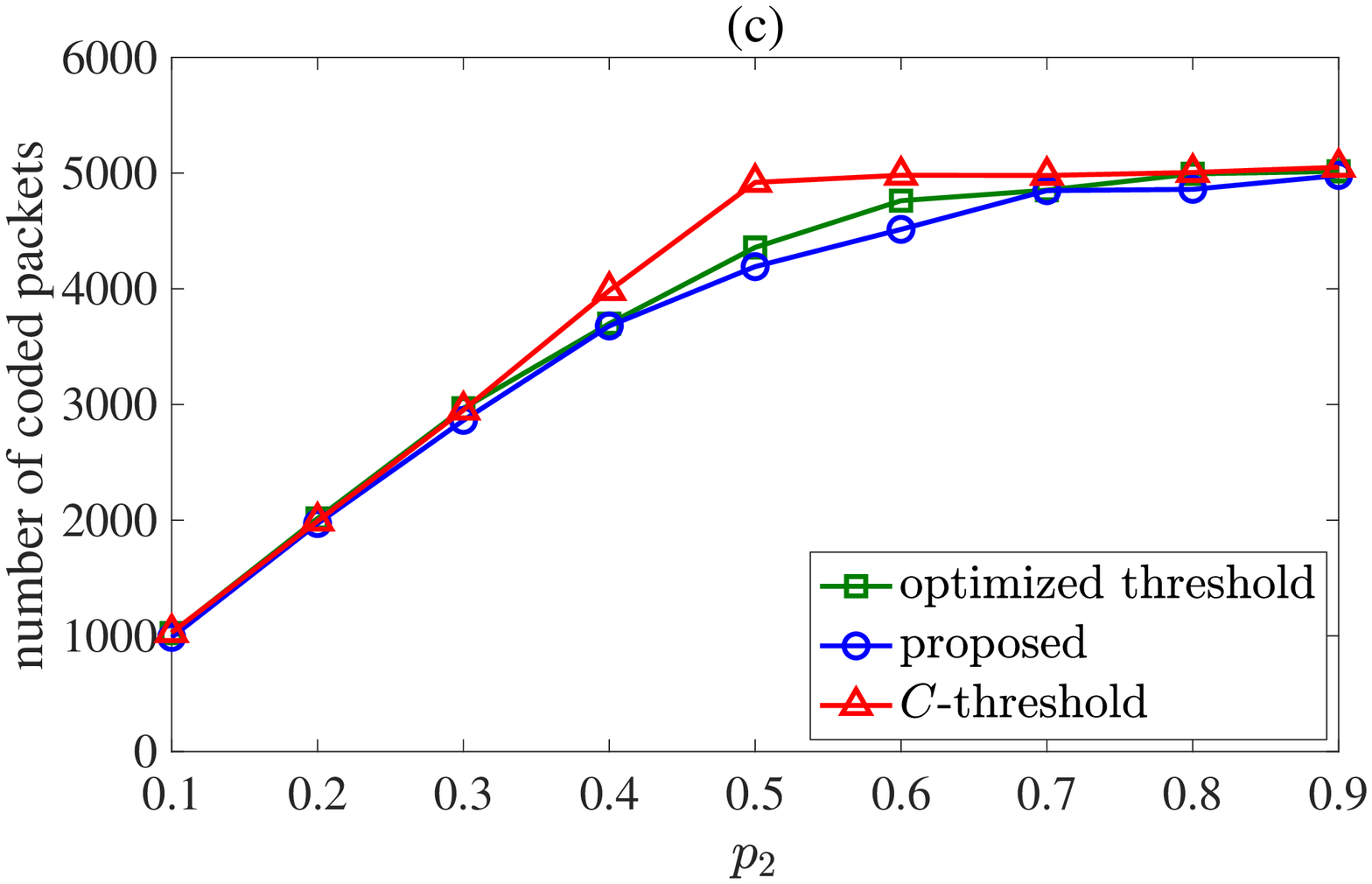}
	\end{minipage}\hfill
	\caption{Number of coded packets versus $p_2$ (fixed $p_1=0.5$ and $\sigma^2=0$) in the single-relay network: (a) $C=5$; (b) $C=10$; (c) $C=15$.}
	\label{fig:sim-coding}
\end{figure}

	Third, we simulate multi-relay networks where external packets arrive at the two end relays according to the i.i.d. Bernoulli distributions. While each relay can transmit at most one packet for each slot, a received packet from the other relay in a slot cannot be processed until the next slot. 
	Fig.~\ref{fig:more-relay}-(a) displays the ratio of total costs (with respect to the optimized-threshold  scheduling algorithm) when there are two relays and both relays take transmission costs $C_1$ and $C_2$, respectively. The optimized-threshold scheduling algorithm identifies a threshold for each queue by exhaustive search for minimizing the total cost among all possible thresholds.  We want to emphasize that an optimal scheduling for the two-relay network is still unclear. In particular, the optimized-threshold scheduling algorithm might not minimize the long-run average cost in this case, though its great performance has been demonstrated in \cite{hsu2014opportunities} by computer simulations. We can observe that the ratios for the proposed scheduling algorithm and the $C$-threshold scheduling algorithm in Fig.~\ref{fig:sim-cost}-(a) and Fig.~\ref{fig:more-relay}-(a) are almost the same. In addition, Fig.~\ref{fig:more-relay}-(a) also displays the ratio for the sub-optimized-threshold scheduling algorithm, which identifies a threshold for each queue by exhaustive search for minimizing the total cost subject to the condition that all left queues have the same threshold and all right queues do as well. We can observe that  the sub-optimized-threshold scheduling algorithm can achieve almost the same total cost as  the optimized-threshold scheduling algorithm does. Thus, we compared the proposed scheduling algorithm with the sub-optimized-threshold scheduling algorithm when there are more than two relays and all relays take the same transmission cost~$C$. Figs.~\ref{fig:more-relay}-(b) and Figs.~\ref{fig:more-relay}-(c) displays the ratio with respect to the sub-optimized-threshold scheduling algorithm. We can observe that  the ratio is insensitive to the numbers of relays; in particular, the proposed scheduling algorithm still significantly outperforms the $C$-threshold scheduling algorithm. 
	
		\begin{figure}[!t]
		\begin{minipage}{.33\textwidth}
			\centering
			\includegraphics[width=\textwidth]{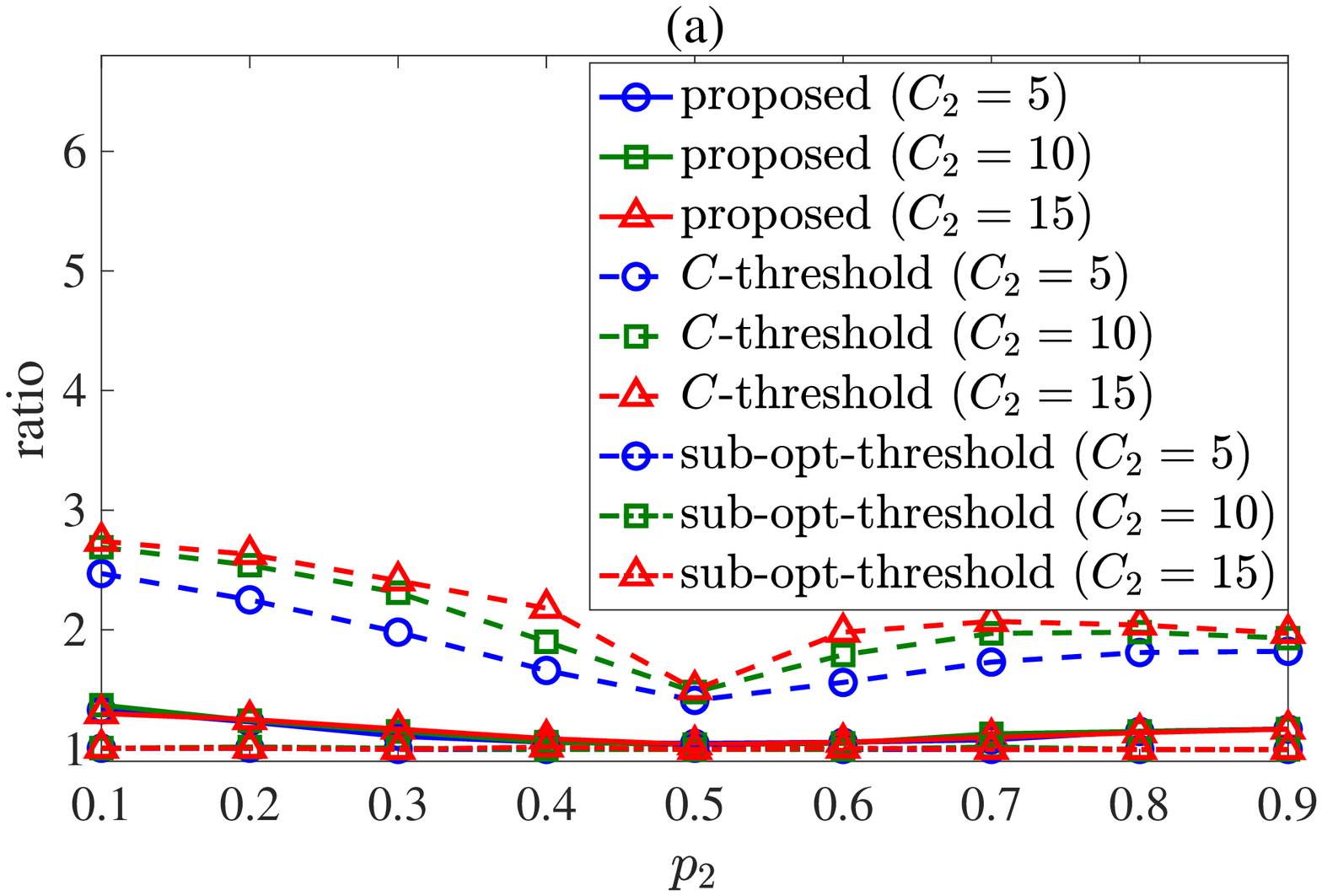}
		\end{minipage}\hfill
		\begin{minipage}{.33\textwidth}
			\centering
			\includegraphics[width=\textwidth]{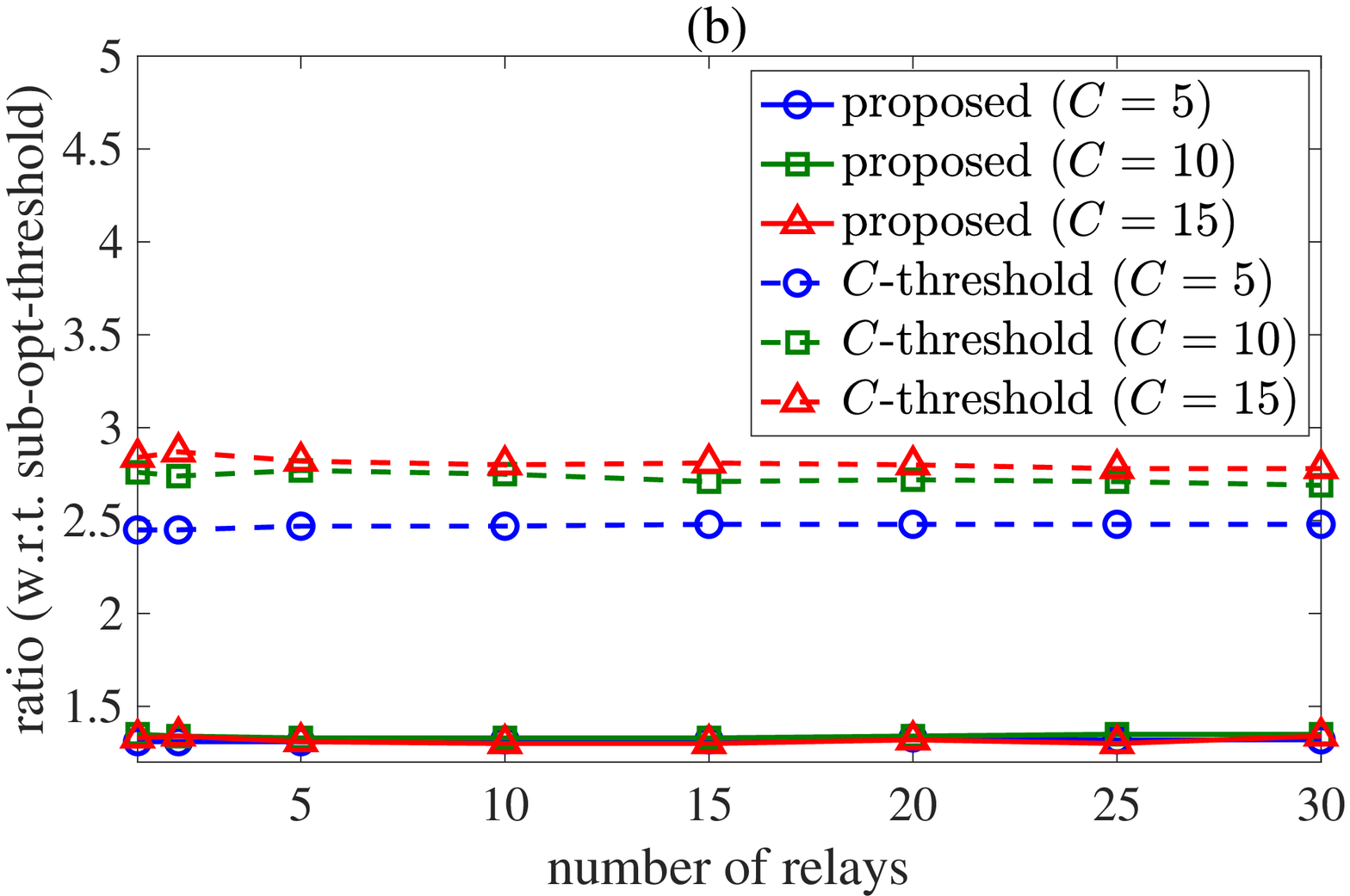}
		\end{minipage} \hfill
		\begin{minipage}{.33\textwidth}
			\centering
			\includegraphics[width=\textwidth]{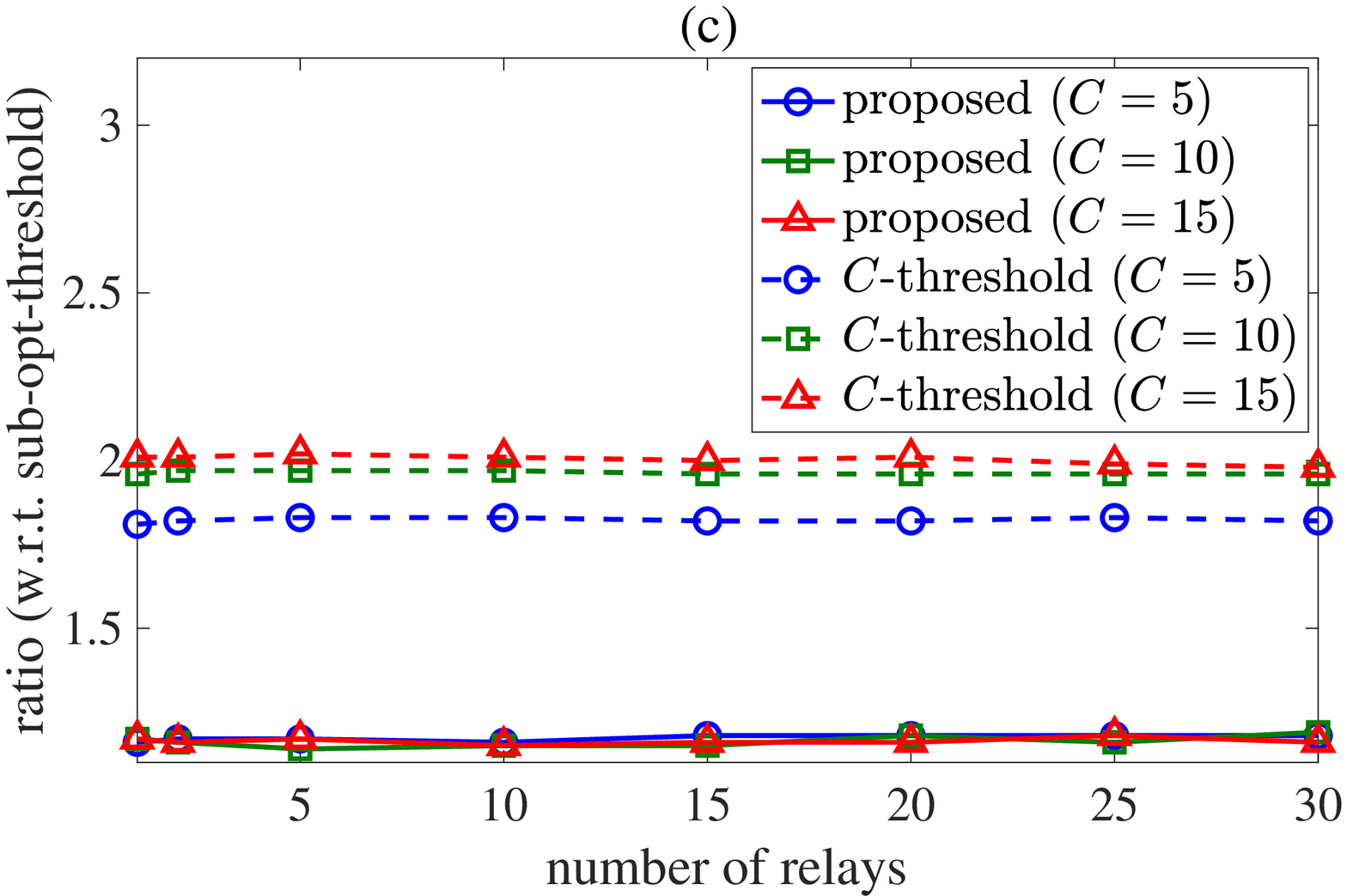}
		\end{minipage}\hfill
		\caption{(a) Ratio versus $p_2$ (fixed $p_1=0.5$, $\sigma^2=0$ and $C_1=5$) when there are two relays; (b) Ratio with respect to the sub-optimized-threshold scheduling algorithm versus the number of relays (fixed $p_1=0.5$, $p_2=0.1$, $\sigma^2=0$); (c) Ratio with respect to the sub-optimized-threshold scheduling algorithm versus the number of relays (fixed $p_1=0.5$, $p_2=0.9$, $\sigma^2=0$).}
		\label{fig:more-relay}
	\end{figure}

%
	
	\section{Concluding remarks}
	In this paper, we treated a wireless line network employing wireless network coding.  The inherent trade-off between packet delays and  transmission power consumption under adversarial traffic was studied. In particular, we developed a randomized online scheduling algorithm. The proposed scheduling algorithm not only can theoretically guarantee the expected competitive ratio of $\frac{e}{e-1}\approx1.58$ for each relay, but also can numerically approach the minimum total cost (including a delay cost and a transmission cost) by computer simulations; moreover,  the proposed scheduling algorithm can  solve more general ski-rental settings. 
	
	While this paper focused on line networks, some discussions on extending to more general networks are following. Consider a relay with multiple line networks traversing, where the two end nodes of each line network exchange their packets. If the relay can transmit one packet in each slot for each line network (as in Section~\ref{subsection:constraint}), then the proposed scheduling algorithm can immediately apply to each line network individually. However, if a relay has a transmission constraint on the total number of transmissions for all line networks, then linear program~(\ref{primal2}) needs another constraint for specifying that the total number of transmissions cannot be over that transmission constraint. That is an interesting future work. To solve the problem, the prior work \cite{buchbinder2009online} (considering a ``box constraint'') might be helpful.
	
	Some problems are still open as follows. This paper focused on the worst-case analysis. To theoretically analyze the proposed algorithm in the average-case scenario is interesting and can help understand why it has a great performance in the simulation results. Moreover, we analyzed the competitive ratio of the proposed algorithm in the waiting-coding queueing system; however, the competitive ratio in the original queueing system is still undiscovered. Finally, a MAC protocol is given to this paper. Joint scheduling design of MAC and coding would be a promising future topic. 
	
	\appendices

	\section{Proof of Lemma \ref{lemma:integrality}} \label{appendix:lemma:integrality}
	Let $x^*$ and $z^*(t)$ be an optimal solution to  linear program~(\ref{primal1}). Note that $z^*(t)=\max\{N_1-n_2(t)-x^*, 0\}$ by Eqs.~(\ref{primal1:constraint-1}) and~(\ref{primal1:constraint-2}).  Let $T=\max \{t: N_1-n_2(t)-x^*> 0\}$.
	Suppose that the optimal solution is fractional. We  prove  by contradiction, according to the following two cases. 
	\begin{enumerate}
		\item \textit{If cost $C \geq T$}: Write $x^*=\lfloor x^* \rfloor+\epsilon$ with  $\epsilon >0$. Then, the minimum objective value in Eq.~(\ref{primal1:objective}) is
		\begin{align*}
			&C(\lfloor x^* \rfloor+\epsilon)+\sum^{T}_{t=1}N_1-n_2(t)-\lfloor x^* \rfloor-\epsilon\\
			=&C \cdot \lfloor x^* \rfloor+ \left(\sum^{T}_{t=1} N_1-n_2(t)-\lfloor x^* \rfloor\right) + \epsilon (C -T)\\
			\geq& C \cdot \lfloor x^* \rfloor+\sum^{T}_{t=1} N_1-n_2(t)-\lfloor x^* \rfloor.
		\end{align*}
		That is, the solution $x=\lfloor x^* \rfloor$ and $z(t)=\max\{N_1-n_2(t)-\lfloor x^* \rfloor, 0\}$  can produce a smaller objective value in Eq.~(\ref{primal1:objective}) than the solution $x=x^*$ and $z(t)=z^*(t)$ does.
		\item \textit{If cost $C \leq T$}: Write $x^*=\lceil x^* \rceil-\epsilon$ with $\epsilon >0$. Then, the minimum objective value in Eq.~(\ref{primal1:objective}) is
		\begin{align*}
			&C(\lceil x^* \rceil-\epsilon)+\sum^{T}_{t=1}N_1-n_2(t)-\lceil x^* \rceil+\epsilon\\
			=&C \cdot \lceil x^* \rceil+\left(\sum^{T}_{t=1} N_1-n_2(t)-\lceil x^* \rceil\right) + \epsilon (T-C)\\
			\geq& C \cdot \lceil x^* \rceil+\sum^{T}_{t=1} N_1-n_2(t)-\lceil x^* \rceil.
		\end{align*}
		That is, the solution $x=\lceil x^* \rceil-\epsilon$ and $z(t)=\max\{N_1-n_2(t)-\lceil x^* \rceil, 0\}$  can produce a smaller objective value in Eq.~(\ref{primal1:objective}) than the solution $x=x^*$ and $z(t)=z^*(t)$ does.
	\end{enumerate}
	By these contradictions, we conclude that the optimal solution to the linear program is integral.

	\section{Proof of Lemma~\ref{lemma:feasible-primal1}}\label{appendix:lemma:feasible-primal1}
	First, the primal constraint  in Eq.~(\ref{primal1:constraint-2}) holds obviously because Alg.~\ref{pda1} initializes all variables to be zeros in Lines~\ref{pda1:initial} and \ref{pda1:more-variable} and never decreases their values. Second,   the primal constraint in Eq.~(\ref{primal1:constraint-1}) holds for each slot $t$ as follows:
	\begin{enumerate}
		\item If $\widehat{x}_i(t) <1$ for $i=n_2(t)+1, \cdots, N_1$,  then Line~\ref{pda1:zi} of Alg.~\ref{pda1} yields 
		\begin{align}
		\widetilde{x}_i(t)+\widetilde{z}_i(t)=\widetilde{x}_i(t)+(1-\widehat{x}_i(t)) \geq 1, \label{eq:x-z-inequal}
		\end{align}
		where the inequality is based on $\widetilde{x}_i(t) > \widehat{x}_i(t)$ as Alg.~\ref{pda1} increases the value of $x_i$ in Line~\ref{pda1:xi}. Thus, the solution produced by Alg.~\ref{pda1} satisfies the primal constraint in Eq.~(\ref{primal1:constraint-1}) since
		\begin{align*}
		\widetilde{x}(\infty)+\widetilde{z}(t)\geq  \widetilde{x}(t)+\widetilde{z}(t)=&\sum_{i=1}^{N_1}\left(\widetilde{x}_i(t)+\widetilde{z}_i(t)\right)\\
		\mathop{\geq}^{(a)}&  \sum_{i=n_2(t)+1}^{N_1} \left(\widetilde{x}_i(t)+\widetilde{z}_i(t)\right)\mathop{\geq}^{(b)}  \sum_{i=n_2(t)+1}^{N_1} 1 = N_1-n_2(t),
		\end{align*}
		where  (a) is due to the values of the variables are non-negative; (b) is based on Eq.~(\ref{eq:x-z-inequal}).
		\item if $\widehat{x}_i(t) \geq 1$ for $i=n_2(t)+1, \cdots, N_1$, then the solution produced by Alg.~\ref{pda1} satisfies the primal constraint in Eq.~(\ref{primal1:constraint-1}) as well since
		\begin{align*}
		\widetilde{x}(\infty)+\widetilde{z}(t)\geq \widetilde{x}(t)+\widetilde{z}(t) \geq  \sum_{i=n_2(t)+1}^{N_1} \widehat{x}_i(t) \mathop{\geq}^{(a)} N_1-n_2(t),
		\end{align*}
		where (a) is due to $\widehat{x}_i(t) \geq 1$, for $i=n_2(t)+1, \cdots, N_1$, in this case.  
	\end{enumerate}
	Then, we complete the proof.

	\section{$\Delta \mathscr{P}(t)$ and $\Delta \mathscr{D}(t)$ in the proof of Theorem~\ref{theroem:competitive-ratio1}}\label{appendix:theroem:competitive-ratio1}
	We derive $\Delta \mathscr{P}(t)$ and $\Delta \mathscr{D}(t)$ as follows.
	\begin{enumerate}
		\item If $\widehat{x}_i(t) <1$ for $i=n_2(t)+1, \cdots, N_1$,  then  $\Delta \mathscr{P}(t)$ can be expressed as
		\begin{align*}
		\Delta \mathscr{P}(t)=&C \cdot (\widetilde{x}(t)-\widehat{x}(t))+\widetilde{z}(t)\\
		\mathop{=}^{(a)}&\sum_{i=n_2(t)+1}^{N_1} C \cdot\left(\widetilde{x}_i(t)-\widehat{x}_i(t)\right) + \widetilde{z}_i(t)\\
		\mathop{=}^{(b)}&\sum_{i=n_2(t)+1}^{N_1} C \cdot \left(\frac{\widehat{x}_i(t)}{C}+\frac{1}{\theta \cdot C}\right) + (1-\widehat{x}_i(t))\\
		=&(N_1-n_2(t))\left(1+\frac{1}{\theta}\right),
		\end{align*}
		where (a) is based on Lines~\ref{pda1:z} and~\ref{pda1:x} of Alg.~\ref{pda1}; (b) is based on Lines~\ref{pda1:zi} and \ref{pda1:xi} of Alg.~\ref{pda1}. 
		Moreover, $\Delta \mathscr{D}(t)=N_1-n_2(t)$ since Alg.~\ref{pda1} updates $w(t)$ to be one in Line~\ref{pda1:y}.
		
		\item If $\widehat{x}_i(t) \geq 1$ for $i=n_2(t)+1, \cdots, N_1$,  then  $\Delta \mathscr{P}(t)=0$ and $\Delta \mathscr{D}(t)=0$ since all variables keep unchanged. 
	\end{enumerate}
	The above two cases conclude that 
	\begin{align*}
	\Delta \mathscr{P}(t) \leq \left(1+\frac{1}{\theta} \right)\Delta \mathscr{D}(t), 
	\end{align*}
	for all $t$.
	

	\section{Proof of Theorem~\ref{theorem:expected-competitive ratio1}}\label{appendix:theorem:expected-competitive ratio1}
	First,  we  compare the expected  cost of transmitting uncoded packets  by Alg.~\ref{online-alg1} with the term $C \cdot \widetilde{x}(\infty)$ of the primal objective value in Eq.~(\ref{primal1:objective}) computed by Alg.~\ref{pda1}. Note that, for a given $u$, there must exist $\lfloor \Delta \widetilde{x}(t) \rfloor$ $k$'s such that $u+k \in [\widetilde{x}_{\text{pre}}(t), \widetilde{x}(t))$, i.e., Alg.~\ref{online-alg1} transmits $\lfloor \Delta \widetilde{x}(t) \rfloor$ uncoded packets in slot $t$; in addition, according to \cite{tseng2019online}, Alg.~\ref{online-alg1}  transmits one more uncoded packet with probability  $\Delta \widetilde{x}(t)-\lfloor \Delta \widetilde{x}(t) \rfloor$. 
	Thus, the expected number of uncoded packets transmitted by Alg.~\ref{online-alg1} in slot~$t$ is $\Delta \widetilde{x}(t)$; moreover,  the expected total number of uncoded packets transmitted by Alg.~\ref{online-alg1} is $\sum_{t=1}^{\infty} \Delta \widetilde{x}(t)=\widetilde{x}(\infty)$. We can obtain that the expected  cost of transmitting uncoded packets by Alg.~\ref{online-alg1} is $C\cdot \widetilde{x}(\infty)$, which is exactly the  value of the first term of the primal objective value in Eq.~(\ref{primal1:objective})
	computed by Alg.~\ref{pda1}.
	
	Second, we compare the expected number of packets left at queue $Q_1$ at the end of slot $t$ under Alg.~\ref{online-alg1} with the term $\widetilde{z}(t)$  of the primal objective value in Eq.~(\ref{primal1:objective}) computed by Alg.~\ref{pda1}. Note that the expected number of packets left at queue~$Q_1$ at the end of slot~$t$ under Alg.~\ref{online-alg1} is
	\begin{align*}
	\max\{N_1-\underbrace{n_2(t)}_{\text{coded packets}}-\underbrace{\sum_{\tau=1}^t \Delta \widetilde{x}(\tau)}_{\text{uncoded packets}},0\} = \max\{N_1 -n_2(t) - \widetilde{x}(t),0\}.
	\end{align*}
	\begin{enumerate}
		\item  If $\widehat{x}_i(t) <1$ for $i=n_2(t)+1, \cdots, N_1$, then the expected number of packets left at queue $Q_1$ at the end of slot~$t$ under Alg.~\ref{online-alg1} is less than the term $\widetilde{z}(t)$  of the primal objective value in Eq.~(\ref{primal1:objective}) computed by Alg.~\ref{pda1} because $N_1 -n_2(t) - \widetilde{x}(t) \leq \widetilde{z}(t)$ (by the primal feasibility in Eq.~(\ref{primal1:constraint-1}) of Alg.~\ref{pda1}). 
		\item If $\widehat{x}_i(t) \geq 1$ for $i=n_2(t)+1, \cdots, N_1$,  then both (the expected number of packets left at queue $Q_1$ at the end of slot~$t$ under Alg.~\ref{online-alg1} and the term $\widetilde{z}(t)$  of the primal objective value in Eq.~(\ref{primal1:objective}) computed by Alg.~\ref{pda1}) are zeros because $N_1 -n_2(t) - \widetilde{x}(t) \leq N_1 -n_2(t) - \sum_{i=n_2(t)+1}^{N_1} \widehat{x}_i(t) \leq 0$.
	\end{enumerate} 
	
	We  conclude that  the expected cost in Problem~\ref{problem1} incurred by Alg.~\ref{online-alg1} is less than or equal to the primal objective value in Eq.~(\ref{primal1:objective}) computed by Alg.~\ref{pda1}. Then, the result immediately follows from Theorem~\ref{theroem:competitive-ratio1}.

	\section{Proof of Theorem~\ref{theorem:optimal i}} \label{appendix:theorem:optimal i}
	The proof of the theorem needs the following technical lemma. 
	\begin{proposition}  \label{lemma:optimal-value}
		Given some $\alpha_i$ and $\beta_i$ such that $\alpha_i \leq \beta_i$ for all $i=1, \cdots, N_1$, the optimal objective value of the   linear program
		\begin{eqnarray*}  
			&\min& C \cdot \sum^{N_1}_{i=1} x_i+  \sum^{\infty}_{t=1} \sum^{N_1}_{i=1} z_{i}(t)\\
			&\text{s.t.} & x_i+z_i(t) \geq 1 \text{\,\,\,for all $i$ and\,\,\,} \alpha_i \leq t \leq  \beta_i;\\
			&& x_i, z_i(t) \geq 0 \text{\,\,\, for all $i$ and $t$}
		\end{eqnarray*}
		is $\sum^{N_1}_{i=1} \min\{\beta_i -\alpha_i+1, C\}$.
	\end{proposition}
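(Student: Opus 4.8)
The plan is to exploit the fact that the linear program fully \emph{decouples} across the index $i$: both the objective $C\sum_i x_i + \sum_t\sum_i z_i(t)$ and every constraint $x_i + z_i(t)\ge 1$ involve only variables carrying a single index~$i$. Hence the optimal value equals the sum over $i$ of the optimal values of the one-index subprograms
\begin{align*}
\min\ \ C x + \sum_{t=\alpha_i}^{\beta_i} z(t)\quad\text{s.t.}\quad x + z(t)\ge 1\ \ (\alpha_i\le t\le\beta_i),\ \ x\ge 0,\ z(t)\ge 0,
\end{align*}
where I have already discarded the variables $z_i(t)$ for $t\notin[\alpha_i,\beta_i]$: they are unconstrained and carry a positive objective coefficient, so they vanish at any optimum, leaving exactly $\beta_i-\alpha_i+1$ terms in the inner sum.

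First I would reduce each one-index subprogram to a scalar minimization over $x\ge 0$. For fixed $x$, the best choice of each $z(t)$ is $z(t)=\max\{1-x,0\}$, so the subprogram's value equals $g(x):=Cx+(\beta_i-\alpha_i+1)\max\{1-x,0\}$. Then I would analyze $g$ on the two pieces $x\ge 1$ and $0\le x\le 1$: on $x\ge 1$ we have $g(x)=Cx\ge C$ with equality at $x=1$; on $0\le x\le 1$, $g$ is affine, $g(x)=(\beta_i-\alpha_i+1)+\bigl(C-(\beta_i-\alpha_i+1)\bigr)x$, hence minimized at $x=0$ (value $\beta_i-\alpha_i+1$) when $C\ge\beta_i-\alpha_i+1$ and at $x=1$ (value $C$) otherwise. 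Combining the two pieces, $\min_{x\ge 0}g(x)=\min\{\beta_i-\alpha_i+1,\,C\}$, and summing over $i$ gives the claimed value $\sum_{i=1}^{N_1}\min\{\beta_i-\alpha_i+1,C\}$.

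I do not anticipate a genuine obstacle; the only points needing a careful word are (i) the separability argument that legitimizes treating the index $i$ one at a time and zeroing out the unconstrained $z_i(t)$ (both immediate from nonnegativity of the objective and the product structure of the feasible region), and (ii) the observation that the infimum is attained, which holds because each subprogram is minimized at one of the corner points $x\in\{0,1\}$ exhibited above. Should a duality-style argument be preferred, an alternative is to write the dual of each subprogram ($\max\sum_t w(t)$ subject to $\sum_t w(t)\le C$ and $0\le w(t)\le 1$ over $t\in[\alpha_i,\beta_i]$) and certify the value with the uniform dual solution $w(t)=\min\{1,\,C/(\beta_i-\alpha_i+1)\}$; but the direct primal computation is shorter and self-contained, so I would present that.
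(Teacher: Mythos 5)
Your proof is correct, but it takes a different route from the paper's. The paper argues entirely through the dual: it writes the dual of the (already per-$i$ separable) program as $\max\sum_i\sum_{t=\alpha_i}^{\beta_i}w_i(t)$ subject to $\sum_t w_i(t)\le C$ and $0\le w_i(t)\le 1$, bounds each $\sum_{t=\alpha_i}^{\beta_i}w_i(t)$ by $\min\{\beta_i-\alpha_i+1,C\}$, exhibits the dual solution $w_i(t)=1$ for $\alpha_i\le t\le\min\{\beta_i,\alpha_i+C-1\}$ attaining the bound, and then invokes LP duality to transfer the value to the primal. You instead compute the primal optimum directly: discard the unconstrained $z_i(t)$, decouple over $i$, eliminate $z$ via $z(t)=\max\{1-x,0\}$, and minimize the resulting piecewise-affine $g(x)=Cx+(\beta_i-\alpha_i+1)\max\{1-x,0\}$ over the corner points $x\in\{0,1\}$. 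Your argument is more elementary and self-contained --- it needs no appeal to strong duality and incidentally shows that the optimum is attained at an integral $x$ --- while the paper's dual-side computation is the natural one in context, since the whole competitive analysis is built on the primal--dual machinery and this dual has exactly the shape of dual program~(\ref{dual2}). Your closing remark already identifies the paper's route (your proposed uniform dual certificate $w(t)=\min\{1,C/(\beta_i-\alpha_i+1)\}$ differs from the paper's $0$--$1$ certificate, but both are optimal), so the two proofs are consistent and either suffices.
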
 
	
	\begin{proof}
		We compute the minimum objective value of the linear program via its dual program: 
		\begin{eqnarray*} \label{eq:dual-given-I}
			&\max&\sum^{N_1}_{i=1} \sum^{\beta_i}_{t=\alpha_i} w_i(t) \\
			&\text{s.t.}& \sum^{\beta_i}_{t=\alpha_i} w_i(t) \leq C \text{\,\,\,for all $i$}; \nonumber\\
			&& 0 \leq w_i(t) \leq 1  \text{\,\,\,for all\,\,} i \text{\,\,and\,\,} t. \nonumber
		\end{eqnarray*}
		Since $\sum^{\beta_i}_{t=\alpha_i} w_i(t) \leq C$ and $0 \leq w_i(t) \leq 1$, we can obtain $\sum^{\beta_i}_{t=\alpha_i} w_i(t) \leq \min\{\beta_i-\alpha_i+1, C\}$; thus, the dual objective value is  bounded above by
		\begin{eqnarray*}
			\sum^{N_1}_{i=1} \sum^{\beta_i}_{t=\alpha_i} w_i(t) \leq \sum^{N_1}_{i=1}  \min\{\beta_i-\alpha_i+1, C\}.
		\end{eqnarray*}
		The equality in the above equation is achievable by setting $w_i(t)$ to be one for all $\alpha_i \leq t \leq \min\{\beta_i, \alpha_i+C-1\}$. Therefore, according to the duality theory, we can conclude that the optimal value of the linear program is 
		$\sum^{N_1}_{i=1} \min\{\beta_i - \alpha_i+1, C\}$.
	\end{proof}
	
	Next, we prove the theorem by induction on $N_1$ and $N_2$. First, when $N_1=N_2=1$,    linear program~(\ref{primal2}) obviously can solve Problem~\ref{problem2}. Suppose that, when $N_1=n_1$ and $N_2=n_2$,   linear program~(\ref{primal2})  can solve Problem~\ref{problem2}.  Next, we  show that linear program~(\ref{primal2}) can also solve Problem~\ref{problem2} when $(N_1, N_2)$ is $(n_1+1, n_2)$, $(n_1, n_2+1)$, or $(n_1+1, n_2+1)$. We will focus on the case of $N_1=n_1+1, N_2=n_2+1$, while the other cases just follow the same arguments.

	By $\mathbf{A} - \{A_1(t_1),A_2(t_2)\}$ we denote the  arrival pattern  obtained by  removing a packet arriving at queue~$Q_1$ in slot $t_1$ and a packet arriving at queue~$Q_2$ in slot $t_2$ from  arrival pattern $\mathbf{A}$. By $\text{OPT}_{(\ref{problem2})}(\mathbf{A})$ we define the minimum  cost in Problem~\ref{problem2} under arrival pattern $\mathbf{A}$. Let $T^{(i)}_{j}$ be the slot when the $j$-th packet arrives at queue $Q_i$.  Let $i^*=\max\{i: T^{(1)}_{i} \leq T^{(2)}_1\}$ indicate a packet  arriving at queue~$Q_1$ in the slot closest to $T^{(2)}_1$.  Then, we can express $\text{OPT}_{(\ref{problem2})}(\mathbf{A})$ as
	\begin{align}
		\text{OPT}_{(\ref{problem2})}(\mathbf{A})= \min_{1\leq i \leq i^*} \bigl\{\min\{T^{(2)}_1-T^{(1)}_{i}+1,C\}+\text{OPT}_{(\ref{problem2})}(\mathbf{A}-\{A_1(T^{(1)}_{i}),A_2(T^{(2)}_1)\})\bigr\},  \label{eq:opt}
	\end{align}
	where the first term $\min\{T^{(2)}_1-T^{(1)}_{i}+1,C\}$ manages the $i$-th packet arriving  at queue~$Q_1$ (by slot~$T^{(2)}_1$)  and the first packet at queue~$Q_2$: if $T^{(2)}_1-T^{(1)}_{i}+1 \leq C$, then the $i$-th packet at queue~$Q_1$ optimally waits for coding with the first packet at queue~$Q_2$; otherwise, both packets  are optimally transmitted immediately (without coding) upon arrival. The second term $\text{OPT}_{(\ref{problem2})}(\mathbf{A}-\{A_1(T^{(1)}_{i}),A_2(T^{(2)}_1)\})$ expresses the minimum cost in Problem~\ref{problem2}  when the $i$-th packet at queue~$Q_1$ and the first packet at queue~$Q_2$ are both removed from arrival pattern $\mathbf{A}$.  
	

	By the induction hypothesis,   linear program~(\ref{primal2}) can solve $\text{OPT}_{(\ref{problem2})}(\mathbf{A}-\{A_1(T^{(1)}_{i}),A_2(T^{(2)}_1)\})$. Particularly, for  arrival pattern $\mathbf{A}-\{A_1(T^{(1)}_{1}),A_2(T^{(2)}_1)\}$, we assume that Alg. \ref{i-alg} produces  constraints  $x_i+z_i(t) \geq 1$ for  $T^{(1)}_i \leq t \leq \beta^*_i$ (for some $T^{(1)}_i \leq \beta^*_i <\infty$\footnote{If $\beta^*_i$ generated by Alg.~\ref{i-alg} is infinity, then we can arbitrarily choose a large number as $\beta^*_i$ according to Proposition~\ref{lemma:optimal-value}}) and $i=2, \cdots, i^*$. By Proposition~\ref{lemma:optimal-value}, we can express $\text{OPT}_{(\ref{problem2})}(\mathbf{A}-\{A_1(T^{(1)}_{1}),A_2(T^{(2)}_1)\})$ in Eq.~(\ref{eq:opt}) as
	\begin{align}
		\text{OPT}_{(\ref{problem2})}(\mathbf{A}-\{A_1(T^{(1)}_{1}),A_2(T^{(2)}_1)\}) =\sum_{i=2}^{i^*} \min\{\beta^*_i - T^{(1)}_i+1, C\} +R, \label{eq:opt1}
	\end{align}
	where $R$ is  the remaining cost  incurred by the packets  arriving at queue~$Q_1$ after slot $T^{(2)}_1$. Similarly,  we can express $\text{OPT}_{(\ref{problem2})}(\mathbf{A}-\{A_1(T^{(1)}_{i}),A_2(T^{(2)}_1)\})$ in Eq.~(\ref{eq:opt}) by
	\begin{align}
		&\text{OPT}_{(\ref{problem2})}(\mathbf{A}-\{A_1(T^{(1)}_i),A_2(T^{(2)}_1)\})\nonumber\\
		=& \sum_{j=1}^{i-1} \min\{\beta^*_{j+1} - T^{(1)}_j+1, C\} + \sum_{j=i}^{i^*-1} \min\{\beta^*_{j+1} - T^{(1)}_{j+1}+1, C\}+R, \label{eq:opt2}
	\end{align}
	for  $2\leq i\leq i^*-1$, and 
	\begin{align}
		\text{OPT}_{(\ref{problem2})}(\mathbf{A}-\{A_1(T^{(1)}_{i^*}),A_2(T^{(2)}_1)\})=\sum_{j=1}^{i^*-1} \min\{\beta^*_{j+1} - T^{(1)}_j+1, C\}+R.  \label{eq:opt2'}
	\end{align} 
	
	By Proposition~\ref{lemma:optimal-value} again,  the minimum objective value in linear program~(\ref{primal2}) under arrival pattern~$\mathbf{A}$ can be expressed by
	\begin{align}
		\min\{T^{(2)}_1-T^{(1)}_{i^*}+1,C\}+\sum_{j=1}^{i^*-1} \min\{\beta^*_{j+1} - T^{(1)}_j+1, C\}+R.  \label{eq:opt3}
	\end{align} 
	From Eqs.~(\ref{eq:opt}) - (\ref{eq:opt3}), it suffices to  show that
	\begin{align*}
		& \min\{T^{(2)}_1-T^{(1)}_{i^*}+1,C\}+\sum_{j=1}^{i^*-1} \min\{\beta^*_{j+1} - T^{(1)}_j+1, C\}\\
		\leq& \min\{T^{(2)}_1-T^{(1)}_i+1,C\}+\sum_{j=1}^{i-1} \min\{\beta^*_{j+1} - T^{(1)}_j+1, C\} + \sum_{j=i}^{i^*-1} \min\{\beta^*_{j+1} - T^{(1)}_{j+1}+1, C\},
	\end{align*}
	for all $1 \leq i \leq i^*-1$. By removing the common terms from  both sides of the above equation, it suffices to show that 
	\begin{align}
		& \min\{T^{(2)}_1-T^{(1)}_{i^*}+1,C\}+\sum_{j=i}^{i^*-1} \min\{\beta^*_{j+1} - T^{(1)}_j+1, C\}\nonumber\\
		\leq& \min\{T^{(2)}_1-T^{(1)}_i+1,C\}+ \sum_{j=i}^{i^*-1} \min\{\beta^*_{j+1} - T^{(1)}_{j+1}+1, C\}, \label{eq:opt-condition}
	\end{align}
	for all $1 \leq i \leq i^*-1$.  
	
	\begin{figure}
		\centering
		\includegraphics[width=.6\textwidth]{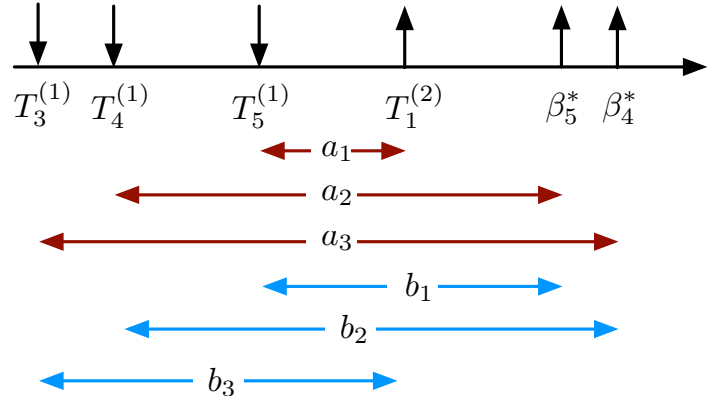}
		\caption{Illustration of notations $a_j$ and $b_j$ when $i=3$ and $i^*=5$ are given.}
		\label{fig:proof}
	\end{figure}

	For brevity, for a fixed $i$ in Eq.~(\ref{eq:opt-condition}), we denote $a_j$  by $a_1=T^{(2)}_1-T^{(1)}_{i^*}+1$ and $a_j=\beta^*_{i^*-j+2}-T^{(1)}_{i^*-j+1}+1$ for $2 \leq j \leq i^*-i+1$;  denote $b_j$  by $b_j=\beta^*_{i^*-j+1}-T^{(1)}_{i^*-j+1}+1$ for $1\leq j \leq i^*-i$ and  $b_{i^*-i+1}=T^{(2)}_1-T^{(1)}_i+1$. See Fig.~\ref{fig:proof} for illustrating notations $a_j$ and $b_j$.  Let $i_{\max}=i^*-i+1$. With the set of notations, Eq.~(\ref{eq:opt-condition}) can be simplified as 
	\begin{align}
		\sum_{j=1}^{i_{\max}} \min\{a_j,C\} \leq \sum_{j=1}^{i_{\max}} \min\{b_j,C\}.  \label{eq:opt-condition2}
	\end{align}
	To verify Eq.~(\ref{eq:opt-condition2}),  note (see Fig.~\ref{fig:proof} for example) that
	\begin{align}
		&a_1 \leq b_1 \leq  a_2 \leq b_2 \cdots  \leq a_{i_{\max}};\label{eq:fact1}\\
		&a_1+\cdots+a_{i_{\max}}=b_1+\cdots+b_{i_{\max}}.\label{eq:fact2}
	\end{align}
	Then,  Eq. (\ref{eq:opt-condition2}) can be confirmed by the following cases. 
	\begin{enumerate}
		\item \textit{If $b_{i_{\max}}  \geq C$}:
		\begin{itemize}
			\item \textit{If $a_{1} \geq C$}, then 
			\begin{align*}
				\sum_{j=1}^{i_{\max}} \min\{a_j, C\}={i_{\max}}\cdot C=\sum_{j=1}^{i_{\max}} \min\{b_j, C\}.
			\end{align*}
			\item \textit{If $a_k \leq C \leq b_k$ for some $k=1, \cdots, {i_{\max}}-1$}, then 
			\begin{align*}
				\sum_{j=1}^{i_{\max}} \min\{a_j, C\} = & \sum^{k-1}_{j=1} a_j + a_k+ ({i_{\max}}-k)C \\
				\mathop{\leq}^{(a)} &\sum_{j=1}^{k-1} b_j+({i_{\max}}-k+1)C\\
				=&\sum^{i_{\max}}_{j=1} \min\{b_j, C\},
			\end{align*}
			where (a) is from Eq.~(\ref{eq:fact1}) and $a_k \leq C$.
			\item \textit{If $b_{k} \leq C \leq a_{k+1}$ for some $k=1, \cdots, {i_{\max}}-1$}, then  
			\begin{align*}
				\sum_{j=1}^{i_{\max}} \min\{a_j, C\} =  \sum^{k}_{j=1} a_j +  ({i_{\max}}-k)C \mathop{\leq}^{(a)} \sum_{j=1}^{k} b_j+({i_{\max}}-k)C=\sum^{i_{\max}}_{j=1} \min\{b_j, C\},
			\end{align*}
			where (a) is due to Eq.~(\ref{eq:fact1}).
			\item \textit{If $a_{{i_{\max}}} \leq C$}, then  
			\begin{align*}
				\sum_{j=1}^{i_{\max}} \min\{a_j, C\} =  \sum^{{i_{\max}}-1}_{j=1} a_j+a_{i_{\max}} \mathop{\leq}^{(a)} \sum_{j=1}^{{i_{\max}}-1} b_j+C=\sum^{i_{\max}}_{j=1} \min\{b_j, C\},
			\end{align*}  
			where (a) is from to Eq.~(\ref{eq:fact1}) and $a_{i_{\max}} \leq C$.
			
		\end{itemize}			
		\item \textit{If $b_{i_{\max}} < C$}:
		\begin{itemize}
			\item The case of $a_1 \geq C$ is impossible because $a_1 \leq b_{i_{\max}}< C$. 
			
			\item \textit{If $a_k \leq C \leq b_k$ for some $k=1, \cdots, {i_{\max}}-1$}, then 
			\begin{align*}
				\sum^{i_{\max}}_{j=1} \min\{a_j, C\}=&\sum_{j=1}^k a_j+({i_{\max}}-k)C\\
				\mathop{\leq}^{(a)}& \sum_{j=1}^k a_j+({i_{\max}}-k)C+ \left(\sum^{{i_{\max}}}_{j=k+1} a_j -\sum_{j=k}^{{i_{\max}}-1} b_j\right)\\
				\mathop{=}^{(b)}&\sum_{j=1}^{i_{\max}} b_j - \sum^{{i_{\max}}-1}_{j=k}b_j+ ({i_{\max}}-k)C\\
				=&\sum^{k-1}_{j=1}b_j+b_{i_{\max}}+({i_{\max}}-k) C\\
				=&\sum^{{i_{\max}}}_{j=1}\min\{b_j, C\}, 
			\end{align*} 			 
			where   (a) is from Eq.~(\ref{eq:fact1}) and  (b) is from Eq.~(\ref{eq:fact2}). 
			\item \textit{If $b_{k} \leq C \leq a_{k+1}$ for some $k=1, \cdots, {i_{\max}}-1$}, then
			\begin{align*}
				\sum^{i_{\max}}_{j=1} \min\{a_j, C\}=&\sum_{j=1}^k a_j+({i_{\max}}-k)C\\
				\mathop{\leq}^{(a)}& \sum_{j=1}^{k+1} a_j+({i_{\max}}-k-1)C\\
				\mathop{\leq}^{(b)}& \sum_{j=1}^{k+1} a_j+({i_{\max}}-k-1)C+ \left(\sum^{i_{\max}}_{j=k+2} a_j -\sum_{j=k+1}^{{i_{\max}}-1} b_j\right)\\
				\mathop{=}^{(c)}&\sum_{j=1}^{i_{\max}} b_j - \sum^{{i_{\max}}-1}_{j=k+1}b_j+ ({i_{\max}}-k-1)C\\
				=&\sum^{k}_{j=1}b_j+b_{i_{\max}}+({i_{\max}}-k-1) C\\
				=&\sum^{{i_{\max}}}_{j=1}\min\{b_j, C\}, 
			\end{align*}
			where (a) is from $a_{k+1} \geq C$; (b) is from Eq.~(\ref{eq:fact1}); (c) is from Eq.~(\ref{eq:fact2}). 
			\item    \textit{If $a_{i_{\max}} \leq C$}, then
			\begin{align*}
				\sum_{j=1}^{i_{\max}} \min\{a_j, C\}=\sum_{j=1}^{i_{\max}} a_j \mathop{=}^{(a)} \sum_{j=1}^{i_{\max}} b_j= \sum_{j=1}^{i_{\max}} \min\{b_j, C\},
			\end{align*}
			where (a) is from Eq.~(\ref{eq:fact2}).
		\end{itemize}
		
	\end{enumerate}
	Then, we complete the proof.

	\section{Proof of Theorem~\ref{theorem:comp-ratio-two-side}} \label{appendix:theorem:comp-ratio-two-side}
	We follow the  notations  in the  proof of Theorem \ref{theroem:competitive-ratio1}.  For each slot $t$,  the change of the primal objective value in Eq.~(\ref{primal2:objective}) (under Alg.~\ref{pda2}) is
	\begin{align*}
	\Delta \mathscr{P}(t) &= \sum_{i=1}^{N_1} C \cdot (\widetilde{x}_i(t)-\widehat{x}_i(t)) + \widetilde{z}_i(t) \\
	&=  \sum_{i \in \mathbf{I}(t)- \{i: \widehat{x}_i(t) \geq 1\}} C \cdot (\frac{\widehat{x}_i(t)}{C}+\frac{1}{\theta \cdot C}) + (1- \widehat{x}_i(t))\\
	&=|\mathbf{I}(t)- \{i: \widehat{x}_i(t) \geq 1\}| ( 1 + \frac{1}{\theta}).
	\end{align*}
	Moreover,   the change of the dual objective value in Eq.~(\ref{dual2:objective}) (under Alg.~\ref{pda2}) is $\Delta \mathscr{D}(t)=|\mathbf{I}(t)- \{i: \widehat{x}_i(t) \geq 1\}|\widetilde{w}(t) = |\mathbf{I}(t)- \{i:\widehat{x}_i(t) \geq 1\}|$. Then, following the  line in the proof of Theorem~\ref{theroem:competitive-ratio1} yields the result.

	\begin{small}
		\bibliographystyle{IEEEtran}
		\bibliography{IEEEabrv,ref}
	\end{small}


\end{document}